\newcommand{\mbs}[1]{\pmb{#1}}
\newcommand{\vect}[1]{{\lowercase{\mbs{#1}}}}
\newcommand{\mat}[1]{{\uppercase{\mbs{#1}}}}
\newcommand{\T}{{\scriptscriptstyle\mathsf{T}}}
\renewcommand{\H}{{\scriptscriptstyle\mathsf{H}}}
\newcommand{\cond}{\,\vert\,}
\renewcommand{\Re}[1][]{\ifthenelse{\isempty{#1}}{\operatorname{Re}}{\operatorname{Re}\left(#1\right)}}
\renewcommand{\Im}[1][]{\ifthenelse{\isempty{#1}}{\operatorname{Im}}{\operatorname{Im}\left(#1\right)}}
\newcommand{\sv}{\vect{s}}
\newcommand{\vv}{\vect{v}}
\newcommand{\wv}{\vect{w}}
\newcommand{\Am}{\mat{a}}
\newcommand{\Bm}{\mat{b}}
\newcommand{\Hm}{\mat{h}}
\newcommand{\Jm}{\mat{j}}
\newcommand{\Mm}{\mat{M}}
\newcommand{\Pm}{\mat{p}}
\newcommand{\Id}{\mat{\mathrm{I}}}
\newcommand{\CN}[1][]{\ifthenelse{\isempty{#1}}{\mathcal{N}_{\mathbb{C}}}{\mathcal{N}_{\mathbb{C}}\left(#1\right)}}
\renewcommand{\P}[1][]{\ifthenelse{\isempty{#1}}{\mathbb{P}}{\mathbb{P}\left(#1\right)}}
\newcommand{\E}[1][]{\ifthenelse{\isempty{#1}}{\mathbb{E}}{\mathbb{E}\left[#1\right]}}
\newcommand{\I}[1][]{\ifthenelse{\isempty{#1}}{\mathbb{I}}{\mathbb{I}\left\{#1\right\}}}
\renewcommand{\det}[1][]{\ifthenelse{\isempty{#1}}{\mathrm{det}}{\text{det}\left(#1\right)}}
\newcommand{\trace}[1][]{\ifthenelse{\isempty{#1}}{\mathrm{tr}}{\text{tr}\left(#1\right)}}
\newcommand{\rank}[1][]{\ifthenelse{\isempty{#1}}{\mathrm{rank}}{\text{rank}\left(#1\right)}}
\newcommand{\diag}[1][]{\ifthenelse{\isempty{#1}}{\mathrm{diag}}{\text{diag}\left(#1\right)}}
\newcommand{\Cov}[1][]{\ifthenelse{\isempty{#1}}{\mathsf{Cov}}{\mathsf{Cov}\left(#1\right)}}
\newcommand{\defeq}{\triangleq}
\newtheorem{proposition}{Proposition}
\newtheorem{remark}{Remark}[section]
\newtheorem{definition}{Definition}
\newtheorem{theorem}{Theorem}
\newtheorem{lemma}{Lemma}
\newcommand{\const}{c_0}
\newcommand{\constH}{c_H}
\newcommand{\rvVec}[1]{\pmb{\mathrm{#1}}}
\newcommand{\rvMat}[1]{\pmb{\mathsf{#1}}}
\newcommand{\Hmax}{\lambda_{\pmb{H}}}
\newcommand{\nt}{n_{\text{t}}}
\newcommand{\nr}{n_{\text{r}}}
\title{On the Multiplexing Gain of Discrete-Time MIMO Phase Noise Channels} 
\author{
Sheng~Yang,~\IEEEmembership{Member,~IEEE,}
Shlomo Shamai~(Shitz),~\IEEEmembership{Fellow,~IEEE} 
\thanks{S. Yang is with LSS, CentraleSupélec,  3 rue Joliot-Curie, 91190
Gif-sur-Yvette, France.~(e-mail: \texttt{sheng.yang@centralesupelec.fr})}
\thanks{S. Shamai~(Shitz) is with Technion-Israel Institute of Technology, Haifa, Israel.~(e-mail: \texttt{sshlomo@ee.technion.ac.il})}
\thanks{The work of S. Shamai was supported by the Israel Science Foundation (ISF). The material in this paper
was presented in part at the 2016 IEEE Information Theory Workshop.}
}
\begin{document}

\maketitle
\begin{abstract}
  The capacity of a point-to-point discrete-time multi-input-multiple-output~(MIMO) channel with
  phase uncertainty~(MIMO phase noise channel) is still open. As a matter of fact, even
  the pre-log~(multiplexing gain) of the capacity in the high signal-to-noise
  ratio~(SNR) regime is unknown in general. We make some progresses in this direction
  for two classes of such channels. With phase noise on the individual
  paths of the channel~(model~A), we show that the multiplexing gain is $\frac{1}{2}$,
  which implies that the capacity \emph{does not} scale with the channel dimension at
  high SNR. With phase noise at both the input and output of the
  channel~(model~B), the multiplexing gain is upper-bounded by
  $\frac{1}{2} \min\{{\nt},(\nr-2)^+\! + 1\}$, and lower-bounded by $\frac{1}{2} \min\{\nt,
  \lfloor \frac{\nr+1}{2} \rfloor\}$, where $\nt$ and $\nr$ are the number of transmit and
  receive antennas, respectively. The multiplexing gain is enhanced to
  $\frac{1}{2}\min\{\nt,\nr\}$ without receive phase noise, and to
  $\frac{1}{2}\min\{2\nt-1,\nr\}$ without transmit phase noise. In all the cases of
  model~B, the multiplexing gain scales linearly
  with $\min\{\nt,\nr\}$.
  Our main results rely on the derivation of non-trivial upper and lower bounds on the capacity of such channels. 
\end{abstract}

\begin{IEEEkeywords}
  Phase noise channel, multiple-input-multiple-output (MIMO), channel capacity, duality upper bound, multiplexing gain.
\end{IEEEkeywords}

\section{Introduction}

The capacity of a point-to-point multiple-input-multiple-output~(MIMO) Gaussian channel is well
known in the coherent case, i.e., when the channel state information is available at the
receiver~\cite{Foschini,Telatar}. 
The capacity of the noncoherent MIMO channels, however, is still open in general.
Nevertheless, asymptotic results of such channels, e.g., at high
signal-to-noise ratio~(SNR), have been obtained in some important cases.   

In the seminal paper~\cite{Moser}, Lapidoth and Moser proposed a powerful technique, called the
duality approach, that can be applied to a large class of fading channels and derived the exact
high SNR capacity  up to an $o(1)$ term. In particular, when the differential entropy of the
channel matrix is finite, i.e., $h(\rvMat{H})>-\infty$, it was shown in \cite{Moser} that the
pre-log (a.k.a. multiplexing gain), of the capacity is $0$ and the high-SNR capacity is $\log\log
\mathsf{SNR} + \chi(\rvMat{H}) + o(1)$ where $\chi(\rvMat{H})$ is the so-called fading number of
the channel. In addition, capacity upper and lower bounds for the MIMO Rayleigh and Ricean
channels were obtained and shown to be tight at both low and high SNR regimes. In~\cite{Zheng_Tse}, Zheng
and Tse showed that for noncoherent block fading MIMO Rayleigh channels with with coherence
time~$T$, the pre-log is $M^*(1-M^*/T)$ where
$M^* \defeq \min\left\{ \nt, \nr, \lfloor \frac{T}{2} \rfloor
\right\}$ with $\nt$ and $\nr$ being the number of transmit and receive antennas, respectively. In this work, we are interested in the MIMO phase noise channels in which the phases of
the channel coefficients are not perfectly known. 

Applying the duality approach and the ``escape-to-infinity'' property of the channel input,
Lapidoth characterized the high-SNR capacity of the discrete-time phase noise channel in the
single-antenna case~\cite{Lapidoth-ITW02}. It was shown in \cite{Katz} that the
capacity-achieving input distribution is in fact discrete. Recently, capacity upper and lower
bounds of the single-antenna channels with Wiener phase noise have been extensively studied in
the context of optical fiber and microwave communications~(see \cite{Barletta2012, Ghozlan2015,
Durisi-capa} and the references therein). In these works, the upper bounds are derived via
duality and lower bounds are computed numerically using the auxiliary channel technique proposed
in~\cite{Arnold2006}. In particular, in \cite{Durisi-capa}, Durisi \emph{et~al.} investigated
the MIMO phase noise channel with a common phase noise, a scenario motivated by the microwave
link with centralized oscillators. The SIMO and MISO channels with common and separate phase
noises are considered in \cite{Khanzadi2015}. The  $2\times2$ MIMO phase
noise channel with independent transmit and receive phase noises at each antenna was studied in
\cite{Durisi2013}, where the authors showed that the multiplexing gain is $\frac{1}{2}$ for a
specific class of input distributions. For general MIMO channels with separate phase noises,
estimation and detection algorithms have been proposed in \cite{Nasir, Tanumay}. However  
for such channels, even the multiplexing gain is unknown, to the best of our knowledge. 

In this work, we make some progresses in this direction.
We consider two classes of discrete-time stationary and ergodic MIMO
phase noise channels: model~A with individual phase noises on the entries of the channel matrix,
and model~B with individual phase noises at the input and the output of the channel instead.  The phase
noise processes in both models are assumed to have finite differential entropy rate.  For
model~A, we obtain the exact multiplexing gain $\frac{1}{2}$ for any channel dimension, which
implies that the capacity \emph{does not} scale with the channel dimension at high SNR.  For
model~B with both transmit and receive phase noises, we show that the multiplexing gain is
upper-bounded by $\frac{1}{2} \min\{{\nt},(\nr-2)^+\!
+ 1\}$, and lower-bounded by $\frac{1}{2} \min\{\nt, \lfloor \frac{\nr+1}{2} \rfloor\}$. The upper and lower bounds
coincide for $\nr\le3$ or $\nr\ge2\nt-1$. Further, when receive phase noise
is absent, the multiplexing gain is improved and we obtain the exact value of
$\frac{1}{2}\min\{\nt,\nr\}$. If the transmit phase noise is absent instead, the multiplexing
gain becomes  $\frac{1}{2}\min\{2\nt-1,\nr\}$.

The main technical contribution of this paper is two-fold. First, we derive a non-trivial upper bound
on the capacity of the MIMO phase noise channel with separate phase noises. The novelty of
the upper bound lies in the finding of a suitable auxiliary distributions with which we apply the
duality upper bound~\cite{Topsoe,Kemperman,Moser}. It is worth mentioning that,
the class of single-variate Gamma output distributions, as the essential ingredient that led to the tight capacity upper bounds on previously studied channels,
 are not suitable for MIMO phase noise
 channels in general. In this paper, we introduce a class of
\emph{multi}-variate Gamma distributions that, combined with the duality upper bound, allows us to
obtain a complete pre-log characterization for model~A and partially for model~B. The second
contribution is the derivation of the capacity lower bounds for model~B, based on the remarkable
property of the differential entropy of the output vector in this channel. Namely, we prove that, at
high SNR, the pre-log of the said entropy can go beyond the rank of the channel matrix, $\min\left\{
\nt,\nr \right\}$, and scales as $\nr \log\mathsf{SNR}$ as long as $\nr\le2\nt-1$. The upper and lower bounds
suggest that, with $\nr\ge 2\nt-1$ receive antennas, $\nt$ transmitted real symbols can be recovered at
high SNR. This result has an interesting interpretation based on dimension counting. Let us consider
the example of independent and memoryless transmit and receive phase noises uniformly distributed in
$[0,2\pi)$. In this case, phases of the input and the output do not contain any useful information,
only the amplitudes matter. Note that the $\nr$ output amplitudes are (non-linear)
equations of $2\nt-1$
unknowns, namely, the $\nt$ input amplitudes and the $\nt-1$ relative input phases, assuming the additive
noises are negligible at high SNR. It is now not too hard to believe that with $\nr=2\nt-1$ equations,
the receiver can successfully decode the $\nt$ input amplitudes by
solving the equations. This is however not possible with $\nr<2\nt-1$, in which case there are
too many unknowns as compared to the number of equations.
 Nonetheless, we can reduce the number of active transmit antennas to $\nt'<\nt$ such that
 $2\nt'-1\le \nr$, which means that the achievable multiplexing gain is $\frac{\nt'}{2} \le
 \frac{1}{2} \lfloor \frac{\nr+1}{2} \rfloor$. A formal proof in Section~\ref{sec:LB}
 validates such an argument.

The remainder of the paper is organized as follows. The system model and main results are
presented in Section~\ref{sec:model}. Some preliminaries useful for the proof of the main
results are provided in Section~\ref{sec:preliminaries}. The upper bounds are derived in
Section~\ref{sec:upperbound} and Section~\ref{sec:upperbound-B}. We prove the lower bound for model~B
in section~\ref{sec:LB}. Concluding remarks are given in Section~\ref{sec:conclusion}. Most of the
proofs are presented in the main body of the paper, with some details deferred to the Appendix.

\section{System Model and Main Results}
\label{sec:model}

\subsection*{Notation}
Throughout the paper, we use the following notational conventions. For random quantities, we use
upper case letters, e.g., $X$, for scalars, upper case letters with bold and non-italic fonts,
e.g., $\rvVec{V}$, for vectors, and upper case letter with bold and sans serif fonts, e.g.,
$\rvMat{M}$, for matrices.  Deterministic quantities are denoted in a rather conventional way
with italic letters, e.g., a scalar $x$, a vector $\pmb{v}$, and a matrix $\pmb{M}$. Logarithms
are in base $2$.  The
Euclidean norm of a vector and a matrix is denoted by $\|\vv\|$ and
$\|\Mm\|$, respectively. The transpose and conjugated transpose of $\Mm$ are $\Mm^\T$ and
$\Mm^\H$, respectively. $\pmb{\Hm}^\dag$ is the pseudo-inverse of a
\emph{tall} matrix $\pmb{\Hm}$. The argument~(phase)
of a complex value $x$ is denoted by $\angle_x\in[0,2\pi)$.  
We use $\Am \circ \Bm$ to denote the Hadamard~(point-wise) product between vectors/matrices. 
$x_{n+1}^{n+k}$ is a $k$-tuple or a column vector of $(x_{n+1},\ldots,x_{n+k})$; for brevity
sometimes $x^k$ replaces $x_1^k$. 
For convenience, wherever confusion is improbable, elementary scalar functions
applied to a vector, e.g., $|\pmb{x}|$ or $\cos(\pmb{\theta})$, stand for a point-wise map
on
each element of the vector, and return a vector with the same dimension as the argument. 
We use $(\theta)_{2\pi}$ to denote $(\theta\!\mod2\pi)$, and
$(x)^+ = \max\{x,0\}.$ $\Gamma(x)$ is the gamma function. 
We also use $\const$ to represent a bounded constant whose value is
irrelevant but may change at each occurrence. Similarly, $\constH$ is a constant that may depend
on $\pmb{H}$ but the value is irrelevant and bounded for almost all $\pmb{H}$. 

\begin{figure*}[!t]
  \centering
  \includegraphics[width=0.7\textwidth]{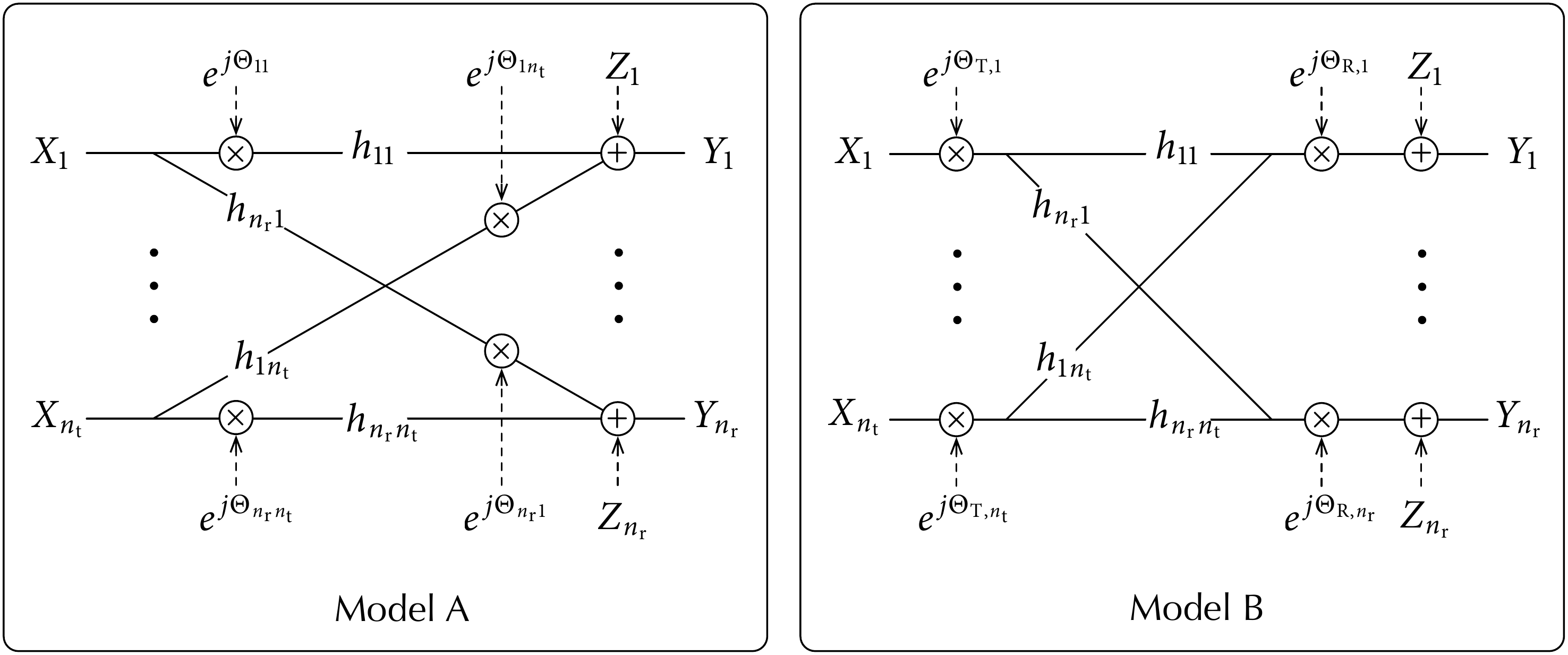}
  \caption{Two models considered in this work: model~A with path phase noise, and
  model~B with transmit/receive phase noise.}
  \label{fig:models}
\end{figure*}

\subsection{Channel model}

In this paper, we are interested in a class of
discrete-time MIMO phase noise channels with $\nt$ transmit
antennas and $\nr$ receive antennas, defined by
\begin{align}
  \rvVec{Y}_t &= (\pmb{H} \circ e^{j\rvMat{\Theta}_t}) \,
  \pmb{x}_t + \rvVec{Z}_t, \quad t=1,2,\ldots,N, \label{eq:model}
\end{align}%
where the deterministic channel matrix $\pmb{H}$ belongs to a set $\mathcal{H} \subset \mathbb{C}^{\nr \times \nt}$ of generic matrices\footnote{It
means that the channel matrix $\Hm$ does not lie on any algebraic hypersurface. If we draw the
entries of $\Hm$ i.i.d.~from a continuous distribution, then $\Hm$ is generic almost surely.};
$\pmb{x}_t\in\mathbb{C}^{\nt\times1}$ is the
input vector at time~$t$, with the
average power constraint $\frac{1}{N} \sum_{t=1}^N
\|\pmb{x}_t\|^2 \le P$;
the additive noise process $\{\rvVec{Z}_t\}$ is assumed to be spatially and temporally white with
$\rvVec{Z}_t \sim \mathcal{CN}(0, \Id_{\nr})$;
$\rvMat{\Theta}_t$ is the matrix of phase noises on the
individual entries of $\pmb{H}$ at time~$t$; the phase noise process
$\{\rvMat{\Theta}_t\}$ is stationary and ergodic,
and is independent of the additive noise process
$\{\rvVec{Z}_t\}$. Both $\{\rvVec{Z}_t\}$ and
$\{\rvVec{\Theta}_t\}$ are unknown to the transmitter and
the receiver. Since the additive noise power is
normalized, the transmit power $P$ is identified with the
SNR throughout the paper. 
The end-to-end channel is captured by the
random channel matrix $\rvMat{H} \defeq \bigl[h_{ik}
e^{\Theta_{ik}}\bigr]_{i,k}$.

In this paper, we consider two types of discrete-time phase noise processes\footnote{The limitation of discrete-time phase
noise model, which ignores the filtering before sampling in practical continuous-time communication
systems, has been discussed in \cite{Ghozlan2013b} and \cite{Ghozlan2015}.}
according to the spatial structures, as shown in Fig.~\ref{fig:models}: 
\begin{itemize}
  \item Model~A refers to channels with phase
    uncertainty on the individual paths~(path phase noise), such that the sequence~$\{\rvMat{\Theta}_t\}$ has finite entropy rate
    \begin{align}
      h(\{\rvMat{\Theta}_t\}) > -\infty.
    \end{align}%
    It corresponds to the case where the phase information of the channel cannot be
    obtained accurately, e.g., in optical fiber communications.  This model covers the
    channel with spatially independent phase noises as a special case.  
  \item Model B refers to channels with phase noises at the input and/or output, i.e.,
    $\Theta_{ik} = \Theta_{\text{R},i} +
    \Theta_{\text{T},k}$.
    The vector~$\rvVec{\Theta}_{\text{T}} \defeq
    \bigl[\Theta_{\text{T},i}\bigr]_{i=1}^{\nt}$ 
    contains the $\nt$ phase noises at the transmit
    antennas, and 
    $\rvVec{\Theta}_{\text{R}} \defeq
    \bigl[\Theta_{\text{R},k}\bigr]_{k=1}^{\nr}$ is the
    vector of the $\nr$ phase noises at the receive
    antennas. This model captures the phase corruption at both the
    transmit and receive RF chains, e.g., caused by
    imperfect oscillators. We consider three cases of model~B:
    \begin{itemize}[leftmargin=.5in]
      \item[B1)] with both transmit and receive phase noises such that     
        $h(\{\rvVec{\Theta}_{\text{T},t}, \rvVec{\Theta}_{\text{R},t}\})>-\infty$; 
      \item[B2)] with only transmit phase noise such that $h(\{\rvVec{\Theta}_{\text{T},t}\})>-\infty$;
      \item[B3)] with only receive phase noise such that $h(\{\rvVec{\Theta}_{\text{R},t}\})>-\infty$.
    \end{itemize}
    Note that model B1 covers the case where both the transmitter and receiver use separate
    (and imperfect) oscillators for different antennas, whereas models B2 and B3 contain the
    case with centralized oscillators at one side and separate oscillators at the other side. 
\end{itemize}

The capacity of such a stationary and ergodic channel is~\cite{Moser,Cover}
\begin{align}
  C(P) \defeq \lim_{N\to\infty} \sup \frac{1}{N} 
  I(\rvVec{X}^N; \rvVec{Y}^N), \label{eq:capa}
\end{align}%
where the supremum is taken over all distributions with the average power constraint 
\begin{align}
  \frac{1}{N} \sum_{k=1}^N \E \left[ \|\rvVec{X}_k\|^2 \right] &\le P. 
\end{align}%

Our work focuses on the multiplexing gain $r$ of
such a channel, defined as the pre-log of the
capacity~$C(P)$ as $P\to\infty$,
\begin{align}
  r &\defeq \lim_{P\to\infty} \frac{C(P)}{\log P}. 
\end{align}%

\subsection{Main results}

\begin{figure*}[t]
  \centering
  \includegraphics[width=0.8\textwidth]{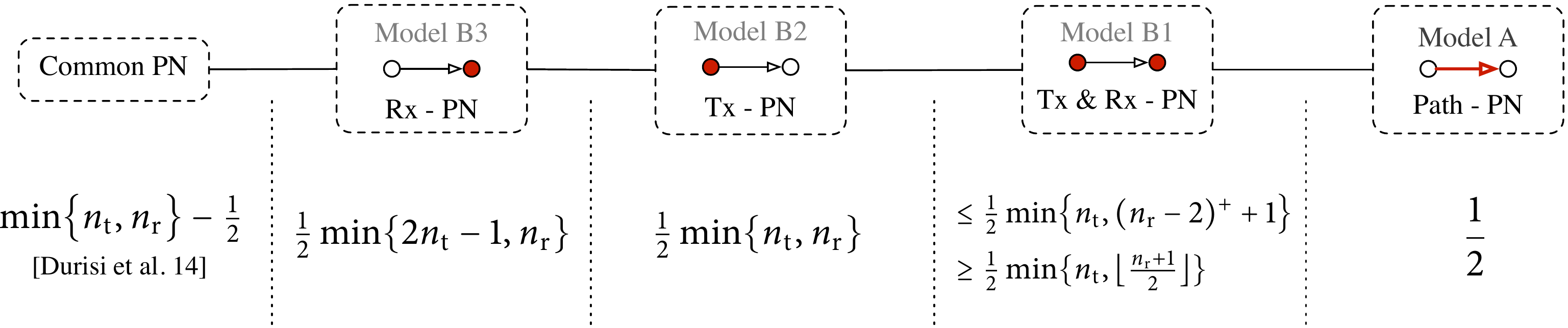}
  \caption{Multiplexing gain of the MIMO phase noise channels.}
  \label{fig:DoF}
\end{figure*}

The main results of this work are summarized as follows, and are illustrated in
Fig.~\ref{fig:DoF}. 
First, the case with common phase noise is rather straightforward from \cite{Durisi-capa}.  
\begin{proposition} \label{prop:1}
  With common phase noise, i.e., $\rvMat{\Theta}_{t} =
  \Theta_t\pmb{1}_{\nr \times \nt}$ and $h(\{\Theta_t\})>-\infty$, the multiplexing gain is
  $\min\{\nt,\nr\} - \frac{1}{2}$.
\end{proposition}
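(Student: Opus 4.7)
My plan is to follow the template of Durisi \emph{et~al.}~\cite{Durisi-capa}, who essentially established this pre-log for MIMO channels with a common phase noise. Under the assumption $\rvMat{\Theta}_{t} = \Theta_t\,\pmb{1}_{\nr\times\nt}$, the channel in \eqref{eq:model} reduces to $\rvVec{Y}_t = e^{j\Theta_t}\,\pmb{H}\rvVec{X}_t + \rvVec{Z}_t$, i.e., a coherent MIMO Gaussian channel whose only non-coherent ingredient is a single unknown real scalar per channel use. Setting $m \defeq \min\{\nt,\nr\}$, I would argue the matching bounds $r = m - \tfrac{1}{2}$ separately.

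For the lower bound I would precode along the top $m$ singular directions of $\pmb{H}$, letting one stream carry a positive real-valued amplitude (drawn, say, from a folded Gaussian of average power $P/m$) that serves as a pilot for the common phase, while the remaining $m-1$ streams carry i.i.d.\ circularly symmetric complex Gaussian symbols of the same average power. At high SNR the pilot identifies $\Theta_t$ up to error variance $O(1/P)$; conditionally on that estimate the other streams are decoded as in a coherent MIMO channel. Summing contributions gives $(m-1)\log P$ from the complex Gaussian streams plus $\tfrac{1}{2}\log P$ from the positive-real pilot, for a multiplexing gain of $m - \tfrac{1}{2}$.

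For the upper bound I would start from $\rvVec{X}^N\perp\Theta^N$ and use
\[
  \tfrac{1}{N}I(\rvVec{X}^N;\rvVec{Y}^N)
  = \tfrac{1}{N}I(\rvVec{X}^N;\rvVec{Y}^N\cond\Theta^N) - \tfrac{1}{N}I(\rvVec{X}^N;\Theta^N\cond\rvVec{Y}^N).
\]
The first term is bounded by the coherent MIMO capacity, $m\log P + O(1)$. Decomposing the second as $h(\Theta^N\cond\rvVec{Y}^N) - h(\Theta^N\cond\rvVec{Y}^N,\rvVec{X}^N)$, a standard local-estimator / max-entropy argument bounds $h(\Theta^N\cond\rvVec{Y}^N,\rvVec{X}^N) \le -\tfrac{N}{2}\log P + O(N)$, since given $(\rvVec{X}_t,\rvVec{Y}_t)$ the scalar $\Theta_t$ can be estimated with mean-squared error of order $1/(P\,\|\pmb{H}\rvVec{X}_t\|^2)$. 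The main obstacle is showing that $h(\Theta^N\cond\rvVec{Y}^N)$ stays bounded below by a constant per channel use, since a priori one might worry that $\rvVec{Y}^N$ alone resolves $\Theta^N$ arbitrarily well at high SNR. The resolution, as in~\cite{Durisi-capa}, comes from the randomness of the input phases together with $h(\{\Theta_t\})>-\infty$: the input phases scramble the output enough that $\rvVec{Y}^N$ retains $\Omega(N)$ worth of residual entropy about $\Theta^N$. A cleaner alternative would be to apply the duality upper bound with an auxiliary output distribution whose angular and radial parts factorize, so that the $\tfrac{1}{2}\log P$ penalty for the unknown scalar phase appears directly in the bound.
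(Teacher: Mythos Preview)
Your high-level strategy matches the paper's: both defer to \cite{Durisi-capa} for the upper bound, and both reduce via the SVD of $\pmb{H}$ to an $m\times m$ diagonal channel $\rvVec{Y}_t = e^{j\Theta_t}\pmb{\sigma}\circ\rvVec{X}_t + \rvVec{Z}_t$. The differences are in the details of each direction.

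For the lower bound, the paper takes a more direct route than your pilot scheme: it simply uses i.i.d.\ $\rvVec{X}\sim\mathcal{CN}(0,\tfrac{P}{\nt}\Id)$ and computes $I(\rvVec{X};\rvVec{Y})=h(\rvVec{Y})-h(\rvVec{Y}\cond\rvVec{X})$. The first term is $m\log^+\!P+\constH$ since $\rvVec{Y}$ is Gaussian. For the second term, the paper applies a unitary rotation (depending on $\rvVec{X}$) that aligns $\pmb{\sigma}\circ\rvVec{X}$ onto a single coordinate, so that $h(\rvVec{Y}\cond\rvVec{X}) = h(e^{j\Theta}\|\pmb{\sigma}\circ\rvVec{X}\|+Z_1'\cond\rvVec{X}) + \sum_{k\ge2}h(Z_k')$; then Lemma~\ref{lemma:ejt} and Lemma~\ref{lemma:log+} give $h(e^{j\Theta}\|\pmb{\sigma}\circ\rvVec{X}\|+Z_1'\cond\rvVec{X}) \le \tfrac{1}{2}\log^+\!P + \constH$. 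This avoids the operational phase-estimation argument entirely and is both shorter and more rigorous than the pilot-and-decode sketch.

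For the upper bound, your decomposition $I(\rvVec{X};\rvVec{Y}) = I(\rvVec{X};\rvVec{Y}\cond\Theta) - I(\rvVec{X};\Theta\cond\rvVec{Y})$ is valid, but the two sub-claims you need are not uniform over input distributions: the estimation bound $h(\Theta\cond\rvVec{Y},\rvVec{X})\le-\tfrac{1}{2}\log P+O(1)$ fails when $\|\pmb{H}\rvVec{X}\|$ is small with non-negligible probability, while the scrambling bound $h(\Theta\cond\rvVec{Y})\ge c$ fails for inputs with deterministic phase structure. These failure modes are complementary, so the argument can be salvaged, but not without care (e.g., first reducing to circularly symmetric inputs and then invoking an escape-to-infinity property). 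The paper sidesteps this entirely by citing the duality-based upper bound of \cite{Durisi-capa} directly, which is also the ``cleaner alternative'' you mention at the end; that is in fact the route actually taken there.
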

\begin{proof}
  The proof is provided in Appendix~\ref{app:prop1}. 
\end{proof}

Then our new results are on channels with separate phase noises either on the individual
paths~(model~A) or at the input/output~(model~B) of the channel. 
\begin{theorem}\label{thm:A}
  The multiplexing gain of model~A is $\frac{1}{2}$. 
\end{theorem}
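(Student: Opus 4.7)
The plan is to prove the matching inequalities $r\geq\tfrac{1}{2}$ and $r\leq\tfrac{1}{2}$. For the lower bound, I would silence all transmit antennas except the first, i.e., set $X_{k,t}=0$ for $k\geq 2$. The channel then reduces to a SIMO phase noise channel, and restricting further to a single receive antenna yields a noncoherent scalar phase noise channel; the assumption $h(\{\rvMat{\Theta}_t\})>-\infty$ passes to this subchannel, so Lapidoth's result \cite{Lapidoth-ITW02} applies and gives capacity $\tfrac{1}{2}\log P+O(1)$, whence $r\geq\tfrac{1}{2}$.

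For the upper bound, I would apply the duality bound
\begin{align}
  \frac{1}{N}\,I(\rvVec{X}^N;\rvVec{Y}^N)\leq\frac{1}{N}\,\E[\log\frac{p(\rvVec{Y}^N\cond\rvVec{X}^N)}{q(\rvVec{Y}^N)}]
\end{align}
with a memoryless auxiliary $q(\yv_1,\ldots,\yv_N)=\prod_t q(\yv_t)$. The density $q(\yv)$ factors in polar coordinates as independent uniform phases on $[0,2\pi)^{\nr}$ times a multi-variate Gamma density on the squared amplitudes $(|y_1|^2,\ldots,|y_{\nr}|^2)$, with shape parameters to be selected and scale of order $P$ so as to accommodate the output power $\sum_i\E[|Y_i|^2]\leq\constH P$. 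The key computation shows that at high SNR, both $h(\rvVec{Y}\cond\rvVec{X})$ and $\E[-\log q(\rvVec{Y})]$ contain a dominant $\tfrac{\nr}{2}\log P$ contribution stemming from the $\nr$ tangential circles traced by the per-antenna phase noises: a $\sum_i\log|Y_i|$ term from the output manifold in $h(\rvVec{Y}\cond\rvVec{X})$, and the matching $\sum_i\log|y_i|$ term from the polar Jacobian in $-\log q$. These dominant terms cancel, and tuning the Gamma scale against the power constraint via Jensen on $\E[\log|Y_i|^2]\leq\log\E[|Y_i|^2]$ leaves $\E[-\log q(\rvVec{Y})]-h(\rvVec{Y}\cond\rvVec{X})\leq\tfrac{1}{2}\log P+O(1)$.

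The main obstacle is the design of the multi-variate Gamma density. A product of univariate Gammas (equivalently, a circular complex Gaussian on $\yv$) misses the cancellation and yields only the coherent bound $\nr\log P+O(1)$, because it fails to capture the strong correlation among $(|Y_i|)_i$ induced by the shared input $\rvVec{X}$: for a single-antenna input, the true output amplitudes lie on the 1-parameter curve $|Y_i|\approx|h_{i1}|\,|X_1|$, so $h(\rvVec{Y})$ is only $\tfrac{\nr+1}{2}\log P+O(1)$ rather than $\nr\log P$. One must therefore correlate the Gamma marginals so that their joint density concentrates on this low-dimensional locus, while $q$ remains a valid $\rvVec{X}$-independent density on $\CC^{\nr}$. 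Secondary technical points are the temporal memory of $\{\rvMat{\Theta}_t\}$, which I would handle by a standard single-letterization using stationarity and ergodicity, and the uniform control of residual moments such as $\E[\log|Y_i|^2]$ and $\E[|Y_i|^2]$ over power-limited inputs.
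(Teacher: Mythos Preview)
Your architecture matches the paper's: achievability via a single transmit antenna and \cite{Lapidoth-ITW02}, converse via duality with a multi-variate Gamma auxiliary on the output squared magnitudes, and you correctly isolate the design of the correlation in $q$ as the crux. But the cancellation you sketch is not the one that works, and two concrete ingredients are missing.

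The ``$\sum_i\log|Y_i|$ versus polar Jacobian'' picture does not close the gap. The obstruction is input-dependent: for a single-antenna input, $h(\rvVec{Y}\cond\rvVec{X})\approx\frac{\nr}{2}\log P$, while any product Gamma gives $\E\bigl[-\log q(\rvVec{Y})\bigr]\approx\nr\log P$ through $\sum_i(1-\alpha_i)\E\bigl[\log|Y_i|^2\bigr]$, leaving $\frac{\nr}{2}\log P$ that scale-tuning cannot remove. The paper's fix has two parts. First, a genie reveals $U=\arg\max_k|X_k|$ (cost $\le\log\nt$) and the channel is put in a canonical form where the $U$-th column has unit-magnitude entries; then $q$ is taken as the multi-variate Gamma on the \emph{ordered} squared magnitudes with \emph{successive-difference} factors, $|\hat{w}_1|^{2(\alpha_1-1)}\prod_{i\ge2}(|\hat{w}_i|^2-|\hat{w}_{i-1}|^2)^{\alpha_i-1}e^{-\mu|\hat{w}_{\nr}|^2}$. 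The role of the canonical form is that $\bigl||W_i|^2-|W_k|^2\bigr|$ is now controlled by the \emph{second}-strongest input $\tilde{X}_V$ (the $\tilde{X}_U$-contribution cancels in the difference), so $\E\bigl[-\log q(\rvVec{W})\bigr]$ acquires terms in both $\E\bigl[\log^+\!|\tilde{X}_U|\bigr]$ and $\E\bigl[\log^+\!|\tilde{X}_V|\bigr]$. Second, one needs the matching lower bound $h(\rvVec{W}\cond\tilde{\rvVec{X}})\ge\nr\,\E\bigl[\log^+\!|\tilde{X}_U|\bigr]+\nr\,\E\bigl[\log^+\!|\tilde{X}_V|\bigr]+\constH$, itself a nontrivial lemma (two independently rotating summands sweep a full two-dimensional region in each coordinate). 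After subtraction the $\tilde{X}_V$-terms appear with negative coefficient and are discarded; what survives is $\E\bigl[\log^+\!|\tilde{X}_U|\bigr]\le\frac{1}{2}\log^+\!P$ from the power constraint, plus $\sum_i\alpha_i\log^+\!P$ which vanishes as $\alpha_i\downarrow0$. In short, the correlation you are looking for is the successive-difference structure on ordered magnitudes, and the cancellation has to be tracked on the input side via the pair $(\tilde{X}_U,\tilde{X}_V)$, not via $\sum_i\log|Y_i|$.
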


The above result shows that extra transmit and receive antennas do not improve the
multiplexing gain of a channel with phase uncertainty on each path of the channel. The
achievability of the single-antenna case was shown in \cite{Lapidoth-ITW02}. Our main
contribution lies in the converse, as will be shown in Section~\ref{sec:upperbound}. 

\begin{theorem}\label{thm:B}
  The multiplexing gain of model~B is 
  \begin{itemize}
    \item upper-bounded by $\frac{1}{2}
      \min\{{\nt},(\nr-2)^+ + 1\}$, and lower-bounded by
      $\frac{1}{2} \min\{\nt, \lfloor \frac{\nr+1}{2}
      \rfloor\}$ with both transmit and receive phase noises,
      the upper bound is achievable when $\nr\le3$ or $\nr\ge 2\nt-1$;
    \item $\min\{\frac{\nr}{2}, \frac{\nt}{2}\}$, with only transmit phase noise;
    \item $\min\{\frac{\nr}{2}, \nt-\frac{1}{2}\}$, with only receive phase noise.
  \end{itemize}
\end{theorem}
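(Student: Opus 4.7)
The plan is to prove Theorem~\ref{thm:B} in three parts, one per sub-case B1, B2, B3, and within each sub-case establish an upper bound on the pre-log by duality and a lower bound by an explicit input distribution. The upper bounds rest on the Lapidoth--Moser inequality
\begin{equation*}
  C(P) \;\le\; \sup_{p_{\rvVec{X}}} \E[D(p_{\rvVec{Y}|\rvVec{X}}(\cdot|\rvVec{X}) \,\|\, q_{\rvVec{Y}})],
\end{equation*}
and the key technical tool, as foreshadowed in the introduction, is a \emph{multi-variate Gamma-type} auxiliary density on the amplitude vector $|\rvVec{Y}|\in[0,\infty)^{\nr}$, of the form $\prod_{i=1}^{\nr} r_i^{\alpha_i - 1}e^{-\beta_i r_i}$ up to normalization. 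The scalar Gamma that suffices for the SISO phase-noise channel of \cite{Lapidoth-ITW02} is insufficient because it does not track the correlations between the $\nr$ output amplitudes induced by a common channel $\pmb{H}$. I would pair this amplitude density with a density on the output phases $\angle_{\rvVec{Y}}$ that is uniform in cases B1 and B3 (where the receive phase noise effectively scrambles output phases) and that has a Gaussian-like structure in case B2.

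For the upper bounds I would expand the divergence and bound each term. In case~B1, the cross-entropy under $q_{\rvVec{Y}}$ produces contributions of the form $\sum_i (\alpha_i - 1)\E[\log|Y_i|] + \beta_i\E[|Y_i|]$, which, by Jensen's inequality and the power constraint $\E[\|\rvVec{X}\|^2]\le P$, grow at most as $\frac{1}{2}\log P$ per pair, while the conditional differential entropy $h(\rvVec{Y}|\rvVec{X})$ remains bounded because the joint phase-noise entropy rate is finite. The delicate step, to be carried out in Section~\ref{sec:upperbound-B}, is choosing the exponents $\alpha_i$ and the joint normalization constant so that the $\log P$-coefficient collapses to $\frac{1}{2}\min\{\nt,(\nr-2)^+ + 1\}$ rather than the naive $\frac{1}{2}\min\{\nt,\nr\}$; the extra $-1$ encodes the absorption of one amplitude-degree by the overall phase ambiguity created by the uniform auxiliary output phases. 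Case~B3 uses essentially the same auxiliary but, since input phases are not randomized, retains all $\min\{2\nt-1,\nr\}$ amplitude degrees. For case~B2 I would instead pre-process by $\pmb{H}^\dag$ (when $\nr\ge\nt$) and assign uniform phases together with Gamma-type amplitudes on the coordinates of $\pmb{H}^\dag\rvVec{Y}$, so that each of the $\min\{\nt,\nr\}$ surviving amplitudes contributes at most $\frac{1}{2}\log P$.

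For the lower bounds I would activate $\nt'\le\nt$ transmit antennas carrying amplitudes drawn i.i.d.\ from a log-uniform density on $[1,\sqrt{P/\nt'}\,]$, choosing $\nt' = \min\{\nt,\lfloor(\nr+1)/2\rfloor\}$ in B1, $\nt' = \min\{\nt,\nr\}$ in B2, and $2\nt'-1 = \min\{\nr,2\nt-1\}$ in B3, augmented by uniform relative input phases wherever those phases feed informatively into $|\rvVec{Y}|$. Writing $I(\rvVec{X};\rvVec{Y}) = h(\rvVec{Y}) - h(\rvVec{Y}|\rvVec{X})$, the conditional term is bounded by standard Gaussian-plus-Jacobian arguments, while $h(\rvVec{Y})$ scales as (effective real dimensions)$\cdot\tfrac{1}{2}\log P$ through a change of variables from input amplitudes and relative phases to $|\rvVec{Y}|$, whose Jacobian is almost surely non-degenerate for $\pmb{H}\in\mathcal{H}$. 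The main obstacle will be the B1 upper bound: tuning the multi-variate Gamma normalization so that the claimed $(\nr-2)^+ + 1$ saving actually materializes in the duality computation, and, separately, the fact that my lower-bound construction genuinely stops at $\lfloor(\nr+1)/2\rfloor$, leaving a real pre-log gap in the range $4\le\nr\le 2\nt-2$.
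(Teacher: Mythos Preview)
Your lower-bound sketch is essentially correct (the paper uses i.i.d.\ Gaussian inputs rather than log-uniform amplitudes, but the decisive step is indeed the Jacobian non-degeneracy, which is Lemma~\ref{lemma:essential} and is the technical heart of Section~\ref{sec:LB}). Your product-Gamma auxiliary is also exactly what the paper uses for case~B2 (see~\eqref{eq:gamma}). The gap is in the B1 upper bound.

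The density you propose, $\prod_{i} r_i^{\alpha_i-1}e^{-\beta_i r_i}$, is a \emph{product} of univariate Gammas and therefore does not track output correlations any better than the scalar Gamma does, despite your stated motivation. Working through the duality bound with it makes $\E[-\log q(\rvVec{W})]$ depend on the input only through $\E[\log^+\!|\tilde{X}_U|]$, while the best lower bound on the conditional entropy in case~B1 is $h(\rvVec{W}\cond\tilde{\rvVec{X}}) \ge \nr\,\E[\log^+\!|\tilde{X}_U|] + \E[\log^+\!|\tilde{X}_V|] + \constH$ (Lemma~\ref{lemma:LB-B}); subtracting and optimizing over~$\pmb{\alpha}$ yields only the trivial pre-log $\frac{\nr}{2}$. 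Your heuristic for the extra~$-1$ (``absorption of one amplitude-degree by the overall phase ambiguity'') does not correspond to any step this computation can produce. What the paper does instead is: reveal to the receiver the index~$U$ of the strongest input entry (genie), normalize the channel on column~$U$ (canonical form~\eqref{eq:canonical}), and take $q$ to be the multivariate Gamma of Definition~\ref{def:mv-Gamma} on the \emph{ordered} squared output magnitudes, with density proportional to $|\hat{w}_1|^{2(\alpha_1-1)}\prod_{i\ge2}\bigl(|\hat{w}_i|^2-|\hat{w}_{i-1}|^2\bigr)^{\alpha_i-1}e^{-\mu|\hat{w}_{\nr}|^2}$. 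The point of the difference terms is that, after the column-$U$ normalization, $\bigl||\hat{W}_i|^2-|\hat{W}_{i-1}|^2\bigr|$ is controlled by the \emph{second}-strongest input $\tilde{X}_V$ (Lemma~\ref{lemma:logq}); one of the resulting $\E[\log^+\!|\tilde{X}_V|]$ contributions cancels against the single such term in $h(\rvVec{W}\cond\tilde{\rvVec{X}})$, and the remaining $(\nr-2)^+$ combine with the $\tilde{X}_U$ term to give $\frac{(\nr-2)^++1}{2}$. The ordered-difference structure of $q$, together with the genie on~$U$ and the canonical normalization, is the missing idea.
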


Interestingly, the multiplexing gain of model~B depends on the number of transmit and receive antennas
differently, which is rarely the case for previously studied point-to-point MIMO channels. 
\begin{remark} As shown in Fig.~\ref{fig:DoF}, transmit phase noise is more detrimental
  than receive phase noise, and strictly so when $\nr>\nt>1$. Intuitively, with transmit phase
  noise each transmitted symbol is accompanied by a different phase noise symbol, which
  means that no more than half of the total spatial degrees of freedom is available for
  useful signal. On the other hand, with receive phase noise, although half of the received
  signal dimension is occupied by phase noises, it is enough to increase the number of
  receive antennas to recover almost all transmitted symbols.  \end{remark}

\begin{remark} Obviously, the multiplexing gain of model~B1 is upper-bounded by that of
  models~B2 and B3.  Such a ``trivial'' upper bound is given by
  $\min\{\frac{\nt}{2},\frac{\nr}{2}, \nt - \frac{1}{2}\} = \min\{\frac{\nt}{2},\frac{\nr}{2}\}$.
  When $\nr\le \nt$, the optimal multiplexing gain is $\frac{\nr}{2}$ with phase noises at either
  side of the channel, whereas no more than $\frac{(\nr-2)^++1}{2}$ is achievable with phase
  noises at both sides. These are the cases for which model~B1 is strictly ``worse'' than
  both models~B2 and B3.  When $\nr\ge 2\nt-1$, with transmit phase noise, the optimal
  multiplexing gain is $\frac{\nt}{2}$ regardless of the presence of receive phase noise.
\end{remark}

\begin{remark} Theorem~\ref{thm:B} shows that, when $\nt=\nr=2$ and $3$, the exact multiplexing
  gain of model~B1 is $\frac{(\nr-2)^++1}{2}$ which gives $\frac{1}{2}$ and $1$,
  respectively. In contrast, the trivial upper bound provides~$1$ and $\frac{3}{2}$,
  respectively. These are the two cases of model~B1 for which we obtain exact multiplexing
  gain that is strictly lower than that of models~B2 and~B3.  \end{remark}

The remainder of the paper is dedicated to the proof of the main results. We
start with some mathematical preliminaries.

\section{Preliminaries}
\label{sec:preliminaries}

\begin{definition}[Multivariate Gamma distribution \cite{Mathal}]
  \label{def:mv-Gamma}
  The $n$-variate Gamma distribution has the following density function
  \begin{align}
    p(\sv) &= {s_1^{\alpha_1-1} (s_2-s_1)^{\alpha_2-1}\cdots
    (s_n-s_{n-1})^{\alpha_n-1}} \nonumber \\
    &\qquad \cdot g_{\pmb{\alpha}}\, {\mu^{\alpha_1+\cdots+\alpha_n}}
    \exp(-\mu s_n),
  \end{align}%
  for $0 < s_1 < s_{2} < \cdots < s_n$, and 0 elsewhere,
  with $\mu>0$ and $\alpha_i > 0$, $i=1,\ldots,n$;
  $g_{\pmb{\alpha}}\defeq 1/\left(\prod_{i=1}^n \Gamma(\alpha_i)\right)$ is the normalization factor.  When
  $n=1$, we have the standard single-variate Gamma distribution,
  \begin{align}
    p(s) &= g_{{\alpha}}\, \mu^{\alpha} {s}^{\alpha-1} 
    \exp(-\mu s), \quad s>0,\ \alpha>0. \label{eq:sv-gamma}
  \end{align}%
\end{definition}

\begin{lemma}[Monotonicity of $\E_{\chi^2_k(\lambda)} {\bigl[\log
  X\bigr]} $ over $\lambda$] \label{lemma:chi2}
  Let ${X}$ be a non-central Chi-square distribution with degrees of freedom $k$ and noncentrality
  parameter $\lambda$, denoted as $X\sim\chi^2_{k}(\lambda)$. Then the expected logarithm of
  $X$ is strictly increasing with $\lambda\ge0$, for any $k\in\mathbb{N}$. 
\end{lemma}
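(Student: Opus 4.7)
The plan is to use the well-known Poisson-mixture representation of the noncentral chi-square. If $X\sim\chi^2_k(\lambda)$, then conditional on $N\sim\mathrm{Poisson}(\lambda/2)$ we have $X\mid N \sim \chi^2_{k+2N}$. Combined with the classical identity $\E[\log Y]=\psi(m/2)+\log 2$ for a central $\chi^2_m$ variable $Y$ (where $\psi$ denotes the digamma function), this yields
\begin{align}
f(\lambda)\defeq\E[\log X] = \log 2 + \sum_{n=0}^{\infty}\frac{e^{-\lambda/2}(\lambda/2)^n}{n!}\,\psi\!\left(\tfrac{k}{2}+n\right),
\end{align}
so the lemma reduces to showing that this Poisson mixture of digamma values is strictly increasing in $\lambda$.

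Set $\mu=\lambda/2$ and $p_n(\mu)=e^{-\mu}\mu^n/n!$, and differentiate term by term using the standard identity $p_n'(\mu)=p_{n-1}(\mu)-p_n(\mu)$ (with $p_{-1}\equiv 0$). A reindexing of the resulting telescoping sum gives
\begin{align}
\frac{df}{d\mu} = \sum_{n=0}^{\infty}p_n(\mu)\bigl[\psi(\tfrac{k}{2}+n+1)-\psi(\tfrac{k}{2}+n)\bigr] = \sum_{n=0}^{\infty}p_n(\mu)\,\frac{2}{k+2n},
\end{align}
where the second equality uses the digamma recurrence $\psi(x+1)-\psi(x)=1/x$. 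Every summand is strictly positive, so $df/d\mu>0$ for every $\mu\geq 0$, and $df/d\lambda=\tfrac{1}{2}\,df/d\mu>0$ then gives strict monotonicity in $\lambda$.

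The main (mild) technical hurdle is justifying the termwise differentiation. This is routine: Poisson tails decay super-polynomially while $\psi(k/2+n)$ grows only like $\log n$, so both the series defining $f$ and its formally differentiated counterpart converge absolutely and locally uniformly in $\mu$, and dominated convergence permits the interchange. As a sanity check (and a derivative-free alternative), one can observe that the Poisson family is stochastically increasing in its mean and that $\psi$ is strictly increasing on $(0,\infty)$; since $g(n)\defeq\psi(k/2+n)$ is then strictly increasing in $n$, the mean $\E[g(N_\mu)]$ is strictly increasing in $\mu$, reproving the lemma without calculus.
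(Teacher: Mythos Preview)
Your proof is correct and follows essentially the same route as the paper: both use the Poisson-mixture representation of $\chi^2_k(\lambda)$ and differentiate the resulting series termwise. The only difference lies in how positivity of the derivative is verified --- the paper reorganizes the derivative as $\tfrac{1}{2}\bigl(f_{k+2}(\lambda)-f_k(\lambda)\bigr)$ and appeals to the coupling $\chi^2_{k+2}(\lambda)\stackrel{d}{=}\chi^2_k(\lambda)+\chi^2_2(0)$ together with $\E[\log(X+Y)]>\E[\log X]$, whereas you invoke the explicit identity $\E[\log\chi^2_m]=\psi(m/2)+\log 2$ and the digamma recurrence $\psi(x+1)-\psi(x)=1/x$; the two computations are algebraically equivalent.
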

\begin{proof}
  The case with $k=2n$ is known and has been proved in
  \cite{Moser}. In the following, we provide a simple proof for the
  general case of $k$, although we are only interested in the
  case $k=1$ later in the paper. Let us define $f_k(\lambda) \defeq
  \E_{\chi^2_k(\lambda)}\bigl[ \log X \bigr]$. The probability
  density function of $\chi^2_k(\lambda)$ is \cite{Muirhead}
  \begin{align}
    p_{\chi^2_{k}(\lambda)} (x) &= \sum_{l=0}^\infty
    \frac{e^{-\frac{\lambda}{2}}\bigl(\frac{\lambda}{2}\bigr)^l }{l!} p_{\chi^2_{k+2l}(0)}(x),
    \quad x\ge0. 
  \end{align}%
  Then it readily follows from the definition of $f_k(\lambda)$ that 
  \begin{align}
    f_k(\lambda) &= \sum_{l=0}^\infty
    \frac{e^{-\frac{\lambda}{2}}\bigl(\frac{\lambda}{2}\bigr)^l}{l!} f_{k+2l}(0).  
  \end{align}%
  To prove that $f_k(\lambda)$ is increasing with $\lambda$, it is
  enough to show that the derivative of $f_k(\lambda)$ with respect to
  $\lambda$ is positive. Indeed,
  \begin{align}
    f_k'(\lambda) &= -\frac{1}{2} f_k(\lambda) + \frac{1}{2} \sum_{l=0}^\infty \frac{e^{-\frac{\lambda}{2}}\bigl(\frac{\lambda}{2}\bigr)^l}{l!} f_{k+2(l+1)}(0) \\
    &= \frac{1}{2} (f_{k+2}(\lambda) - f_{k}(\lambda)) \\
    &= \frac{1}{2} \bigl(\E \bigl[ \log (X+Y) \bigr] - \E \bigl[ \log X\bigr] \bigr) \\
    &> 0, 
  \end{align}%
  where we used the fact that if $X\sim~\chi^2_{k}(\lambda)$ and $Y\sim\chi^2_{2}(0)$, then
  $X+Y\sim\chi^2_{k+2}(\lambda)$. 

\end{proof}

\begin{lemma}[Change of variables~\cite{Cover}]\label{lemma:h-prod}
  Let $\rvVec{Y} = f(\rvVec{X})$ with a bijective map $f:\,\mathbb{R}^m\to \mathbb{R}^m$. Then 
  \begin{align}
    h( \rvVec{Y}) &=  h(\rvVec{X}) + \E \bigl[ \log |\det(\Jm)| \bigr],
  \end{align}%
  where $\Jm \defeq \left[ \frac{\partial Y_k}{\partial X_l}
  \right]_{k,l=1,\ldots,m}$ is the
  Jacobian matrix.
\end{lemma}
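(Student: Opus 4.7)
The plan is to proceed via the standard change of variables formula for probability densities, followed by substitution into the definition of differential entropy. Assuming $\rvVec{X}$ admits a density $p_{\rvVec{X}}$ and $f$ is a bijection that is differentiable with nonvanishing Jacobian almost everywhere (so that $f^{-1}$ exists and is differentiable a.e.), the induced density of $\rvVec{Y}=f(\rvVec{X})$ is
\begin{align}
  p_{\rvVec{Y}}(\yv) &= \frac{p_{\rvVec{X}}(f^{-1}(\yv))}{|\det(\Jm(f^{-1}(\yv)))|},
\end{align}
where $\Jm$ is the Jacobian of $f$ evaluated at the argument.

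Next I would plug this into the definition of differential entropy and perform the inverse change of variables $\yv = f(\xv)$, whose volume element satisfies $d\yv = |\det(\Jm(\xv))|\, d\xv$. This yields
\begin{align}
  h(\rvVec{Y}) &= -\int p_{\rvVec{Y}}(\yv) \log p_{\rvVec{Y}}(\yv)\, d\yv \\
  &= -\int \frac{p_{\rvVec{X}}(\xv)}{|\det(\Jm(\xv))|} \log \frac{p_{\rvVec{X}}(\xv)}{|\det(\Jm(\xv))|} \, |\det(\Jm(\xv))|\, d\xv.
\end{align}
Splitting the logarithm and cancelling the Jacobian in the measure then gives $h(\rvVec{Y}) = h(\rvVec{X}) + \int p_{\rvVec{X}}(\xv) \log |\det(\Jm(\xv))|\, d\xv = h(\rvVec{X}) + \E[\log |\det(\Jm)|]$.

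The only real subtlety is not algebraic but technical: one needs the regularity conditions that ensure both the change-of-variables formula for densities and the integrability of $\log|\det(\Jm)|$ with respect to $p_{\rvVec{X}}$ (so that the entropies and expectation are well defined). Since the lemma is invoked later in the paper with smooth, locally bijective maps for which these conditions are transparent, I would simply state the assumption that $f$ is a diffeomorphism on the support of $\rvVec{X}$ and that the expectation on the right-hand side exists, and otherwise regard the computation above as the complete proof. No deeper obstacle is anticipated; this is a textbook identity reproduced for reference.
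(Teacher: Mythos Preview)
Your argument is correct and is the standard textbook derivation. Note that the paper does not actually prove this lemma: it is stated with a citation to Cover and Thomas and used as a known fact, so there is no paper proof to compare against---your write-up simply supplies the omitted details.
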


\begin{lemma}\label{lemma:cs-vec}
  If each element of the $n$-vector $\pmb{X}$ is circularly symmetric with
  independent phases, and the probability density function~(pdf) of $\pmb{X}$ exists with respect to the Lebesgue
  measure, then  
  \begin{align}
    p_{|X|}(|\pmb{x}|) &= 2\pi \prod_{i=1}^n |x_i|\,p_{X}(\pmb{x}) \\
    p_{|X|^2}(|\pmb{x}|^2) &= \pi \,p_{X}(\pmb{x}). \label{eq:X-X2}
  \end{align}%
  Further, if $h(\pmb{X})>-\infty$, we have 
  \begin{align}
    h(\rvVec{X}) &= h(|\rvVec{X}|) + \sum_{i=1}^n \E \bigl[ \log |X_i| \bigr] + n\log 2\pi \\
    &= h(|\rvVec{X}|^2) + n \log \pi.
  \end{align}
\end{lemma}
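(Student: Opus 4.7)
The plan is to reduce the lemma to coordinate-wise polar changes of variables. Write each entry as $X_i=|X_i|\,e^{j\angle_{X_i}}$. Circular symmetry of an individual $X_i$ already forces $\angle_{X_i}$ to be uniform on $[0,2\pi)$ and independent of $|X_i|$; together with the hypothesis that the phases $\angle_{X_1},\ldots,\angle_{X_n}$ are mutually independent, this implies that the magnitude vector $(|X_1|,\ldots,|X_n|)$ is independent of the phase vector $(\angle_{X_1},\ldots,\angle_{X_n})$, and that the latter has joint density $(2\pi)^{-n}$ on $[0,2\pi)^n$.

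I would then derive the two density identities by a coordinate-wise change of variables. Viewing each $X_i\in\mathbb{C}$ as a point in $\mathbb{R}^2$, the polar-to-Cartesian map has Jacobian determinant $|x_i|$ per component, so the joint density factors as
\begin{align}
  p_{\pmb{X}}(\pmb{x}) \;=\; \Bigl(\prod_{i=1}^n \tfrac{1}{|x_i|}\Bigr)\, p_{|\pmb{X}|,\angle_{\pmb{X}}}\bigl(|\pmb{x}|,\angle_{\pmb{x}}\bigr).
\end{align}
Substituting $p_{|\pmb{X}|,\angle_{\pmb{X}}}(|\pmb{x}|,\angle_{\pmb{x}}) = (2\pi)^{-n}\,p_{|\pmb{X}|}(|\pmb{x}|)$ from the independence/uniformity of the phases and rearranging yields the first density identity. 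The identity for $p_{|\pmb{X}|^2}$ then follows immediately from the coordinate-wise Jacobian $2|x_i|$ of the map $r_i\mapsto r_i^2$, which converts the factor in front of $p_{\pmb{X}}(\pmb{x})$ from $(2\pi)^n\prod_i|x_i|$ to $\pi^n$.

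The entropy identities follow by taking $-\E[\log(\cdot)]$ of both sides of each density identity, or equivalently by invoking Lemma~\ref{lemma:h-prod} with the bijections $\pmb{X}\mapsto(|\pmb{X}|,\angle_{\pmb{X}})$ and $|\pmb{X}|\mapsto|\pmb{X}|^2$. The log-Jacobian of the first map contributes $-\sum_i \E[\log|X_i|]$, and the independence of $\angle_{\pmb{X}}$ from $|\pmb{X}|$ together with $h(\angle_{\pmb{X}}) = n\log 2\pi$ produces the additive constant, giving $h(\pmb{X}) = h(|\pmb{X}|)+\sum_i \E[\log|X_i|]+n\log 2\pi$. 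A second application to the squaring map yields $h(|\pmb{X}|) = h(|\pmb{X}|^2)+\sum_i \E[\log|X_i|]+n\log 2$, and summing the two gives $h(\pmb{X}) = h(|\pmb{X}|^2)+n\log\pi$.

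The only delicate point — really a bookkeeping check rather than a hard step — is verifying that everything on the right-hand sides is well defined: existence of the densities $p_{|\pmb{X}|}$ and $p_{|\pmb{X}|^2}$, and finiteness of each $\E[\log|X_i|]$. Existence of the former follows from the existence of $p_{\pmb{X}}$ since the polar and squaring Jacobians are nonzero off a Lebesgue-null set, and finiteness of $\E[\log|X_i|]$ follows from $h(\pmb{X})>-\infty$, which controls the integrability of the logarithmic moments through the density identities themselves.
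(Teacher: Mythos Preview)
Your approach is correct and is essentially the same one the paper relies on: the paper states this lemma without a dedicated proof, but the entropy identities are re-derived verbatim in the proof of Lemma~\ref{lemma:h-squared} by applying Lemma~\ref{lemma:h-prod} twice (polar map, then squaring map), exactly as you outline.

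One bookkeeping slip: your intermediate line ``$h(|\pmb{X}|) = h(|\pmb{X}|^2)+\sum_i \E[\log|X_i|]+n\log 2$'' has the wrong sign. Since the Jacobian of $r_i\mapsto r_i^2$ is $2|x_i|$, Lemma~\ref{lemma:h-prod} gives $h(|\pmb{X}|^2)=h(|\pmb{X}|)+\sum_i\E[\log|X_i|]+n\log 2$, i.e., $h(|\pmb{X}|)=h(|\pmb{X}|^2)-\sum_i\E[\log|X_i|]-n\log 2$. With this correction, substitution into the first identity indeed collapses to $h(\pmb{X})=h(|\pmb{X}|^2)+n\log\pi$ as you state; with the sign as written, the $\E[\log|X_i|]$ terms would not cancel. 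This is purely a typo-level issue and does not affect the validity of your argument.
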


\begin{lemma}\label{lemma:h-squared}
  Let $\rvVec{X}\in\mathbb{C}^{n}$ with $h(\rvVec{X})>-\infty$. Then
  \begin{align}
    h(\rvVec{X}) &= h(|\rvVec{X}|^2) + h(\angle_{\rvVec{X}} \cond |\rvVec{X}|) - n.
    \label{eq:h-squared-0}
  \end{align}%
  Let $\rvVec{\Theta}\in [0,2\pi)^{n}$ be independent of $\rvVec{X}$ and 
  $h(\rvVec{\Theta})>-\infty$. Then
  \begin{align}
    | h(e^{j\rvVec{\Theta}} \circ \rvVec{X}) - h(|\rvVec{X}|^2) | \le  \const.  
    \label{eq:h-squared-1}
  \end{align}
\end{lemma}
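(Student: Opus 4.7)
For~\eqref{eq:h-squared-0}, I would identify $\rvVec{X}\in\mathbb{C}^n$ with its real representation in $\mathbb{R}^{2n}$ and apply Lemma~\ref{lemma:h-prod} to the point-wise map $(\Re X_i,\Im X_i)\mapsto(|X_i|^2,\angle_{X_i})$. A direct calculation shows that each $2\times 2$ Jacobian block has determinant~$2$, so the total Jacobian has absolute determinant~$2^n$ and, with logarithms in base~$2$,
\begin{align}
h(|\rvVec{X}|^2,\angle_{\rvVec{X}}) = h(\rvVec{X}) + n.
\end{align}
The chain rule gives $h(|\rvVec{X}|^2,\angle_{\rvVec{X}})=h(|\rvVec{X}|^2)+h(\angle_{\rvVec{X}}\cond|\rvVec{X}|^2)$, and because $s\mapsto\sqrt{s}$ is a bijection on the nonnegative orthant, conditioning on $|\rvVec{X}|^2$ coincides with conditioning on $|\rvVec{X}|$; combining these observations yields~\eqref{eq:h-squared-0}.

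For~\eqref{eq:h-squared-1}, I would set $\rvVec{Y}\defeq e^{j\rvVec{\Theta}}\circ\rvVec{X}$ and apply the identity just proved. Since the phase rotation preserves the modulus, $|\rvVec{Y}|^2=|\rvVec{X}|^2$, and so
\begin{align}
h(\rvVec{Y}) - h(|\rvVec{X}|^2) = h(\angle_{\rvVec{Y}}\cond|\rvVec{X}|) - n.
\end{align}
It therefore suffices to sandwich $h(\angle_{\rvVec{Y}}\cond|\rvVec{X}|)$ between two constants. The upper bound $n\log(2\pi)$ is immediate because $\angle_{\rvVec{Y}}$ is supported on $[0,2\pi)^n$. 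For the lower bound, I would use the independence of $\rvVec{\Theta}$ from $\rvVec{X}$ together with the fact that, given $(|\rvVec{X}|,\angle_{\rvVec{X}})$, $\angle_{\rvVec{Y}}$ is a deterministic translation of $\rvVec{\Theta}$ modulo $2\pi$. Since modular translation preserves differential entropy and conditioning reduces entropy,
\begin{align}
h(\angle_{\rvVec{Y}}\cond|\rvVec{X}|) \ge h(\angle_{\rvVec{Y}}\cond|\rvVec{X}|,\angle_{\rvVec{X}}) = h(\rvVec{\Theta}) > -\infty.
\end{align}

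The main obstacle I anticipate is this lower bound: one must verify that modular addition on the torus $[0,2\pi)^n$ is a measure-preserving bijection so that $h(\angle_{\rvVec{Y}}\cond|\rvVec{X}|=r,\angle_{\rvVec{X}}=\phi)=h(\rvVec{\Theta})$ for (almost) every $(r,\phi)$, after which averaging over $(|\rvVec{X}|,\angle_{\rvVec{X}})$ yields the displayed inequality. Once this is in hand, combining the upper and lower bounds with the preceding reduction gives $|h(\rvVec{Y})-h(|\rvVec{X}|^2)|\le\const$ as required.
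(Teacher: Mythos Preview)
Your proposal is correct and essentially mirrors the paper's proof: the paper obtains \eqref{eq:h-squared-0} by two successive applications of Lemma~\ref{lemma:h-prod} (first $(\Re X_i,\Im X_i)\mapsto(|X_i|,\angle_{X_i})$, then $|X_i|\mapsto|X_i|^2$) rather than your single step, and for \eqref{eq:h-squared-1} it derives the lower bound exactly as you do, while for the upper bound it introduces an auxiliary uniform phase $\rvVec{\Phi}$ to reach $h(|\rvVec{X}|^2)+n\log\pi$, which is numerically the same as your direct bound $h(\angle_{\rvVec{Y}}\cond|\rvVec{X}|)\le n\log(2\pi)$ after subtracting~$n$. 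The ``obstacle'' you flag---that modular translation on $[0,2\pi)^n$ preserves differential entropy---is taken for granted in the paper and is indeed routine.
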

\begin{proof}
  Applying Lemma~\ref{lemma:h-prod} twice, we readily obtain \eqref{eq:h-squared-0}
  \begin{align}
    h(\rvVec{X}) &= h(|\rvVec{X}|, \angle_{\rvVec{X}} ) + \sum_{k=1}^n \E\log |X_k| \\
    &= h(|\rvVec{X}| ) + h(\angle_{\rvVec{X}} \cond |\rvVec{X}|) + \sum_{k=1}^n \E\log |X_k| \\
    &= h(|\rvVec{X}|^2) + h(\angle_{\rvVec{X}} \cond |\rvVec{X}|) -
    n. 
  \end{align}%
  To prove \eqref{eq:h-squared-1}, we introduce $\rvVec{\Phi}$ that is uniformly distributed in
  $[0,2\pi)^n$ and independent of $\rvVec{X}$ and $\rvVec{\Theta}$, then
  \begin{align}
    h( e^{j\rvVec{\Theta}} \circ \rvVec{X}) &= 
    h(e^{j(\rvVec{\Theta}+\rvVec{\Phi})} \circ \rvVec{X} \cond \rvVec{\Phi}) \\
    &\le h( e^{j\rvVec{\Phi}'} \circ \rvVec{X}) \\
    &= h(|\rvVec{X}|^2) + n\log \pi, \\
    \intertext{where $\rvVec{\Phi}'\defeq (\rvVec{\Theta}+\rvVec{\Phi})_{2\pi}$ is
    uniformly distributed in $[0,2\pi)^n$, and
    from~\eqref{eq:h-squared-0},}
    h(e^{j\rvVec{\Theta}} \circ \rvVec{X}) 
    &= h(|\rvVec{X}|^2) + h( (\angle_{\rvVec{X}} + \rvVec{\Theta})_{2\pi} \cond |\rvVec{X}|) - n \\
    &\ge h(|\rvVec{X}|^2) + h( (\angle_{\rvVec{X}} + \rvVec{\Theta})_{2\pi} \cond |\rvVec{X}|, \angle_{\rvVec{X}}) - n \\
    &= h(|\rvVec{X}|^2) + h(\rvVec{\Theta}) - n.  
  \end{align}%
  Hence, \eqref{eq:h-squared-1} holds with the constant $\const$ corresponding to $\max\left\{ |h(\rvVec{\Theta})-n|, n\log\pi  \right\}$. 
\end{proof}

\begin{lemma}\label{lemma:Elogsin}
  For any ${\Theta}\in [0,2\pi)$ with $h({\Theta})>-\infty$, 
  \begin{align} \label{eq:Elogsin}
    \E \bigl[\log |\sin(\Theta)|\bigr] > \frac{h(\Theta)-\log \bigl( B(\frac{1-\alpha}{2},
    \frac{1}{2})\bigr)}{\alpha},
    \quad \forall\,\alpha\in(0,1),
  \end{align}%
  where $B(x,y)$ is the Beta function. Thus, $\E \bigl[\log |\sin(\Theta)|\bigr]>-\infty$.
  Let $\rvVec{\Theta}\in [0,2\pi)^{n}$. If $h(\rvVec{\Theta})>-\infty$, then
  \begin{align}
    h(\cos(\rvVec{\Theta}))>-\infty. \label{eq:hcos}
  \end{align}%
\end{lemma}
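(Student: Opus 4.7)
The plan for the first inequality is to apply the standard maximum-entropy / non-negativity-of-KL device with an auxiliary density tailored to $|\sin\theta|^{-\alpha}$. Concretely, I would set
$$q(\theta)\defeq \frac{|\sin\theta|^{-\alpha}}{Z_\alpha},\qquad \theta\in[0,2\pi),$$
which is a valid probability density exactly when $\alpha\in(0,1)$, and compute the normalization $Z_\alpha=\int_0^{2\pi}|\sin\theta|^{-\alpha}d\theta$ via the classical Beta integral $\int_0^{\pi/2}\sin^{-\alpha}\theta\,d\theta=\tfrac12 B(\tfrac{1-\alpha}{2},\tfrac{1}{2})$, combined with the $\pi$-periodicity of $|\sin\theta|$. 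Then $D(p_\Theta\|q)\ge 0$ rewrites as $h(\Theta)\le -\E[\log q(\Theta)]=\alpha\,\E[\log|\sin\Theta|]+\log Z_\alpha$, which rearranged gives the claimed lower bound on $\E[\log|\sin\Theta|]$; the strictness is accommodated by the slack between $\log Z_\alpha$ and $\log B(\tfrac{1-\alpha}{2},\tfrac{1}{2})$. Since the bound is valid for every $\alpha\in(0,1)$ and $h(\Theta)>-\infty$, the conclusion $\E[\log|\sin\Theta|]>-\infty$ follows immediately.

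For the second claim I would first strengthen the argument above to the multivariate setting by taking the product auxiliary $q(\bm\theta)\propto\prod_{i=1}^n|\sin\theta_i|^{-\alpha}$ on $[0,2\pi)^n$; repeating the KL step on the joint density of $\rvVec{\Theta}$ yields
$$\sum_{i=1}^n\E[\log|\sin\Theta_i|]\;\ge\;\frac{h(\rvVec{\Theta})-n\log Z_\alpha}{\alpha}\;>\;-\infty.$$
This avoids the issue that $h(\rvVec{\Theta})>-\infty$ does not by itself guarantee finite marginals $h(\Theta_i)$; the joint KL bound sidesteps that entirely.

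Next I would handle the non-injectivity of $\bm\theta\mapsto\cos\bm\theta$ on $[0,2\pi)^n$ by introducing the discrete indicator vector $\rvVec{B}\defeq(\mathbbm{1}\{\Theta_i\ge\pi\})_{i=1}^n$: conditioned on $\rvVec{B}=\bm b$, each $\Theta_i$ lies in an open interval on which $\cos$ is strictly monotonic, the coordinate-wise map is a diffeomorphism with diagonal Jacobian of determinant $\prod_i(-\sin\Theta_i)$, and Lemma~\ref{lemma:h-prod} applies conditionally to give
$$h(\cos\rvVec{\Theta}\cond\rvVec{B})=h(\rvVec{\Theta}\cond\rvVec{B})+\sum_{i=1}^n\E[\log|\sin\Theta_i|].$$
Since $\rvVec{B}$ is a deterministic function of $\rvVec{\Theta}$, one has $h(\rvVec{\Theta}\cond\rvVec{B})=h(\rvVec{\Theta})-H(\rvVec{B})\ge h(\rvVec{\Theta})-n\log 2>-\infty$, and the second term is finite by the multivariate bound just derived. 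Using $h(\cos\rvVec{\Theta})\ge h(\cos\rvVec{\Theta}\cond\rvVec{B})$ (conditioning on a discrete variable reduces entropy) finishes the proof of \eqref{eq:hcos}.

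The routine parts are the Beta-integral computation and the KL rearrangement. The one step that needs care is the second claim: the cosine map is not bijective, so Lemma~\ref{lemma:h-prod} cannot be invoked directly on $[0,2\pi)^n$. The conditioning trick on $\rvVec{B}$ is what makes the Jacobian computation legitimate, and one must track the finite $H(\rvVec{B})\le n\log 2$ price paid for the conditioning; this is the main conceptual obstacle and also clarifies why only $h(\rvVec{\Theta})>-\infty$ (not any marginal assumption) is needed.
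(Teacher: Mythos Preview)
Your approach is essentially identical to the paper's: the same auxiliary density $q(\theta)\propto|\sin\theta|^{-\alpha}$ combined with non-negativity of $D(p\|q)$ for \eqref{eq:Elogsin}, and the same partition-then-Jacobian device for \eqref{eq:hcos} (the paper uses an abstract partition index $\Omega$ where you use the explicit bit-vector $\rvVec{B}$). Two small remarks. First, your strictness argument has the sign backwards: since $Z_\alpha=\int_0^{2\pi}|\sin\theta|^{-\alpha}\,d\theta=2\,B\!\bigl(\tfrac{1-\alpha}{2},\tfrac{1}{2}\bigr)$, the KL step yields only $\E[\log|\sin\Theta|]\ge\alpha^{-1}\bigl(h(\Theta)-\log B-\log 2\bigr)$, so the ``slack'' $\log Z_\alpha-\log B=\log 2$ weakens rather than sharpens the bound; the strict inequality exactly as written in the lemma does not follow from this computation (the conclusion $>-\infty$, which is all that is ever used, is of course unaffected). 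Second, your worry about marginal entropies is unnecessary on a bounded domain: from $h(\rvVec{\Theta})=h(\Theta_i)+h(\rvVec{\Theta}_{-i}\cond\Theta_i)\le h(\Theta_i)+(n-1)\log(2\pi)$ one gets $h(\Theta_i)>-\infty$ directly, so the scalar bound can be applied coordinatewise as the paper does; your product-auxiliary step is correct but not needed.
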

\begin{proof}
  To prove \eqref{eq:Elogsin}, we introduce an auxiliary distribution
  with density $q(\theta) =
  \beta |\sin (\theta)|^{-\alpha}$, $\theta\in[0,2\pi)$, with $\alpha\in(0,1)$ and $\beta \defeq
  \frac{1}{2B(\frac{1-\alpha}{2},\frac{1}{2})}$. Then it
  follows that $h(\Theta) + \E\bigl[ \log(q(\Theta)) \bigr] = - D(p\,\|\, q) \le 0$ where $D(\cdot\|\cdot)$
  is the Kullback-Leibler divergence, which yields \eqref{eq:Elogsin}.
  We proceed to prove \eqref{eq:hcos}, 
  \begin{align}
    h(\cos(\rvVec{\Theta})) &\ge h(\cos(\rvVec{\Theta}) \cond \Omega) \\ 
    &= h(\rvVec{\Theta} \cond \Omega) + \sum_{k=1}^n \E \bigl[ \log |\sin(\Theta_k)|
    \bigr] \\
    &= h(\rvVec{\Theta}) - I(\Omega; \rvVec{\Theta}) + \sum_{k=1}^n \E \bigl[ \log
    |\sin(\Theta_k)| \bigr] \\
    &> -\infty,
  \end{align}%
  where 
  we partition $[0,2\pi)^n$
  in such a way that $\cos(\rvVec{\Theta})$ is a
  bijective function of $\rvVec{\Theta}$ in each partition indexed by $\Omega$; the first
  equality is from Lemma~\ref{lemma:h-prod}; the last inequality is from the boundedness
  of $h(\rvVec{\Theta})$, the fact
  that $\Omega$ only takes a finite number of values, and the
  application of~\eqref{eq:Elogsin}. 
\end{proof}

\begin{lemma} \label{lemma:Elognorm}
  Let $\rvVec{V} \in \mathbb{R}^{m\times1}$ with $h(\rvVec{V}) > -\infty$ and $\E \bigl[
  \|\rvVec{V}\|^2 \bigr] < \infty$. Then
  \begin{align}
    \inf_{\pmb{x}\in \mathbb{R}^m:\, \|\pmb{x}\|=1}  \E
    \bigl[ \log |\rvVec{V}^\T \pmb{x}| \bigr] > -\infty.
  \end{align}%
\end{lemma}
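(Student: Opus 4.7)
The plan is to reduce the scalar projection $U \defeq \rvVec{V}^\T \pmb{x}$ to a random variable whose differential entropy and second moment are uniformly controlled in $\pmb{x}$, and then extract a lower bound on $\E[\log|U|]$ via the divergence inequality with a carefully chosen auxiliary density.

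For the reduction, I would complete $\pmb{x}$ to an orthonormal basis $(\pmb{x}, \pmb{x}_2, \ldots, \pmb{x}_m)$ of $\mathbb{R}^m$, assemble these columns into an orthogonal matrix $\pmb{Q}$, and set $\pmb{U} \defeq \pmb{Q}^\T \rvVec{V} = (U, U_2, \ldots, U_m)$. Since $|\det\pmb{Q}| = 1$, Lemma~\ref{lemma:h-prod} gives $h(\pmb{U}) = h(\rvVec{V})$. Decomposing $h(\pmb{U}) = h(U) + h(U_2, \ldots, U_m \cond U)$, dropping the conditioning, and applying the Gaussian maximum-entropy bound---together with $\E[U_i^2] \le \E[\|\rvVec{V}\|^2]$ from Cauchy--Schwarz and AM--GM on the covariance determinant---yields the uniform estimate
\begin{align}
h(U) \ge h(\rvVec{V}) - \tfrac{m-1}{2}\log\bigl(2\pi e\,\E[\|\rvVec{V}\|^2]\bigr).
\end{align}
The same Cauchy--Schwarz inequality also gives $\E[U^2] \le \E[\|\rvVec{V}\|^2]$ uniformly in $\pmb{x}$.

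The next step is to invoke the duality bound with the auxiliary density $q(u) = c_{\alpha,\mu}\,|u|^{\alpha-1} e^{-\mu u^2/2}$ on $\mathbb{R}$, for some fixed $\alpha \in (0,1)$ and $\mu > 0$, where $c_{\alpha,\mu}$ is a finite positive normalization constant (expressible via a Gamma function value). Nonnegativity of the Kullback--Leibler divergence $D(p_U\|q)$ is equivalent to $\int p_U(u)\log q(u)\,du \le -h(U)$; plugging in the explicit form of $\log q$ and rearranging gives
\begin{align}
(1-\alpha)\,\E\bigl[\log|U|\bigr] \ge h(U) + \log c_{\alpha,\mu} - \tfrac{\mu}{2}\E[U^2].
\end{align}
Since $1-\alpha > 0$, combining this inequality with the two uniform bounds from the previous step produces a lower bound on $\E\bigl[\log|\rvVec{V}^\T \pmb{x}|\bigr]$ that depends only on $h(\rvVec{V})$, $\E[\|\rvVec{V}\|^2]$, $m$, $\alpha$ and $\mu$, and is therefore uniform over unit vectors $\pmb{x}$.

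The main delicate point is the choice of $q$: to produce a \emph{lower} bound one needs the singularity exponent $\alpha - 1$ to be negative, so that the coefficient $(1-\alpha)$ on the left is positive; simultaneously $\alpha > 0$ is required for integrability of $q$. The intermediate range $\alpha\in(0,1)$ is precisely the ``Gamma-like'' regime already exploited elsewhere in the paper. With that choice in place, the remaining ingredients---the orthogonal rotation, the Gaussian upper bound on the nuisance coordinates, and the one-line divergence manipulation---are all routine.
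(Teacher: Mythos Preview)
Your proof is correct and uses essentially the same approach as the paper: both control the entropy and second moment of the scalar projection uniformly in $\pmb{x}$ and then apply the divergence inequality with a Gamma-type auxiliary density (the paper works with $V_x \defeq (\rvVec{V}^\T\pmb{x})^2$ and the standard single-variate Gamma density on $\mathbb{R}^+$, while you work directly with $U = \rvVec{V}^\T\pmb{x}$ and its symmetric analogue on $\mathbb{R}$, which are related by a change of variable). Your treatment is in fact more explicit about the uniform entropy lower bound via the orthogonal-rotation argument, which the paper only asserts with ``one can verify from the assumptions that $h(V_x) > -\infty$''.
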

\begin{proof}
  This is a straightforward adaptation of the result in \cite[Lemma 6.7-f]{Moser} for the
  complex case. The real case can be proved by following the same steps. To be self-contained, we provide an
  alternative proof as follows. Define $V_x \defeq |\rvVec{V}^\T
  \pmb{x}|^2$, and one can verify from the assumptions
  that $h(V_x) > -\infty$ and $\E[V_x]\le \infty$. We introduce an
  auxiliary pdf~$q(V_x)$ based on the Gamma distribution defined
  in \eqref{eq:sv-gamma} with some $\alpha\in(0,1)$. Then  we have for any $V_x\in\mathbb{R}^+$, $h(V_x)\le
  \E\bigl[ -\log q(V_x) \bigr] = (1-\alpha) \E\bigl[\log V_x \bigr] + \mu \E[ V_x ] + \const$ which yields 
  $\E\bigl[\log V_x \bigr] \ge (1-\alpha)^{-1}  (h(V_x) - \mu \E[ V_x ] - \const) > -\infty.$ 
\end{proof}

\begin{lemma}
  \label{lemma:ejt}
  Let $\Theta\in[0,2\pi)$ with $h(\Theta) > -\infty$  
  and be independent of some $Z\sim\mathcal{CN}(0,1)$, then for any
  given $\beta\in\mathbb{C}$,
  \begin{align}
    \left| h(\beta e^{j\Theta} + Z) - \log^+\! |\beta|
    \right| &\le \const, \label{eq:Theta} 
    \\ \text{and} \quad \bigl| h( |\beta + Z | ) \bigr| &\le \const'.
    \label{eq:beta} 
  \end{align}%
\end{lemma}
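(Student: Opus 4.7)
My plan is to begin with a rotation trick that decouples $\Theta$ from the Gaussian noise. Since $Z\sim\mathcal{CN}(0,1)$ is rotationally invariant, $Z':=e^{-j\Theta}Z$ is again $\mathcal{CN}(0,1)$ and its conditional distribution given $\Theta=\theta$ does not depend on $\theta$, hence $Z'\perp\Theta$. Writing $\beta e^{j\Theta}+Z = e^{j\Theta}(\beta+Z')$ and applying Lemma~\ref{lemma:h-prod} to the polar map $X\mapsto(|X|,\angle_X)$ (Jacobian $1/|X|$), I would obtain the decomposition
\begin{align}
  h(\beta e^{j\Theta}+Z) &= h(|\beta+Z'|) + h(\angle_X\cond|\beta+Z'|) \nonumber \\
  &\quad {} + \E[\log|\beta+Z'|],
\end{align}
with $\angle_X = \Theta + \angle_{\beta+Z'}$. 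The conditional phase entropy is sandwiched between two $|\beta|$-independent constants: on one hand $h(\angle_X\cond|\beta+Z'|)\le h(\angle_X)\le \log(2\pi)$, while further conditioning on $\angle_{\beta+Z'}$ and invoking $\Theta\perp(\beta+Z')$ gives $h(\angle_X\cond|\beta+Z'|)\ge h(\Theta)>-\infty$. Thus \eqref{eq:Theta} reduces to proving \eqref{eq:beta} together with the scalar estimate $\bigl|\E[\log|\beta+Z|]-\log^+|\beta|\bigr|\le\const$.

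For \eqref{eq:beta}, the upper bound is immediate: the reverse triangle inequality $\bigl||\beta+Z|-|\beta|\bigr|\le |Z|$ yields $\Var(|\beta+Z|)\le 1$, so by Gaussian maximum entropy $h(|\beta+Z|)\le\tfrac12\log(2\pi e)$. For the matching lower bound I would apply Lemma~\ref{lemma:h-prod} once more to write $h(|\beta+Z|)=h(|\beta+Z|^2)-\E[\log(2|\beta+Z|)]$ and argue that both right-hand terms equal $\log^+|\beta|+\const$, so that they cancel up to a constant. The estimate $\E[\log|\beta+Z|]=\log^+|\beta|+\const$ uses Jensen for the upper bound $\tfrac12\log(|\beta|^2+1)$, and for the lower bound the dominance $2|\beta+Z|^2\sim\chi^2_2(2|\beta|^2)\ge Y^2$ with $Y\sim\mathcal{N}(\sqrt{2}|\beta|,1)$, together with a Gaussian concentration estimate $\E[\log|Y|]\ge\log^+(\sqrt{2}|\beta|)-\const$. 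For $h(|\beta+Z|^2)$ the upper bound follows again from Gaussian maximum entropy and $\Var(|\beta+Z|^2)=2|\beta|^2+1$; the lower bound uses Lemma~\ref{lemma:h-squared} applied to $\beta+Z\sim\mathcal{CN}(\beta,1)$,
\begin{align}
  h(|\beta+Z|^2) = \log(\pi e) - h(\angle_{\beta+Z}\cond|\beta+Z|) + 1,
\end{align}
together with the fact that given $|\beta+Z|=r$ the phase $\angle_{\beta+Z}$ is von Mises with concentration $2r|\beta|$, whose entropy satisfies $h(\mathrm{vM}(\kappa))\le -\tfrac12\log^+\kappa+\const$; averaging in $r$ (which concentrates near $|\beta|$) yields $h(\angle_{\beta+Z}\cond|\beta+Z|)\le-\log^+|\beta|+\const$ and hence the desired lower bound on $h(|\beta+Z|^2)$.

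The main technical obstacle is the uniform control of $h(\angle_{\beta+Z}\cond|\beta+Z|)$ across all $|\beta|\ge 0$. For small $|\beta|$ the trivial bound $\log(2\pi)$ is fine, but for large $|\beta|$ the phase concentrates at rate $1/|\beta|$ around $\angle_\beta$ and the entropy must drop like $-\log|\beta|$. Making this transition quantitative requires the explicit von Mises density and the large-argument asymptotic expansion of the Bessel functions $I_0,I_1$, and one must carefully handle the intermediate regime where $r|\beta|$ is of order unity even though $|\beta|$ itself is not.
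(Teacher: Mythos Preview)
Your proposal is correct, but it runs the argument in the \emph{opposite direction} from the paper. The paper first proves \eqref{eq:Theta} directly, then deduces \eqref{eq:beta} from it; you prove \eqref{eq:beta} directly and deduce \eqref{eq:Theta}. Both directions hinge on the same polar decomposition $\beta e^{j\Theta}+Z=e^{j\Theta}(\beta+Z')$ with $Z'\perp\Theta$, and both need the scalar estimate $\bigl|\E[\log|\beta+Z|]-\log^+|\beta|\bigr|\le\const$, which the paper obtains essentially as you sketch (Jensen for the upper half, and for the lower half the explicit integral $\tfrac{1}{2\pi}\int_0^{2\pi}\log(a+b\cos\theta)\,\mathrm{d}\theta=\log\tfrac{a+\sqrt{a^2-b^2}}{2}$ applied after integrating out $\angle_{\beta^*Z}$).

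The paper's ordering is more economical precisely because it avoids the obstacle you identify. For \eqref{eq:Theta} the paper splits into $|\beta|\le1$ (trivial) and $|\beta|>1$; the lower bound uses $h(\beta\cos\Theta+Z_{\mathrm R})\ge\log\beta+h(\cos\Theta)$ together with Lemma~\ref{lemma:Elogsin}, and the upper bound uses Lemma~\ref{lemma:h-squared} followed by a Gaussian-max-entropy variance estimate on $|Z|^2+2\beta|Z|\cos(\Theta-\angle_Z)$. Once \eqref{eq:Theta} is in hand, \eqref{eq:beta} follows by introducing an auxiliary \emph{uniform} $\Theta$ and invoking Lemma~\ref{lemma:cs-vec} to write $h(|\beta+Z|)=h(\beta e^{j\Theta}+Z)-\E[\log|\beta+Z|]-\log 2\pi$, after which the triangle inequality finishes. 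This route never touches the von~Mises conditional phase density or Bessel asymptotics; your route is valid but pays for the reversed order with exactly the uniform-in-$|\beta|$ control of $h(\angle_{\beta+Z}\cond|\beta+Z|)$ that you flag as the main difficulty.
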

\begin{proof}
  First  we prove \eqref{eq:Theta}. 
  When $|\beta| \le 1$, we have $\log^+\! |\beta|=0$. It
  follows that $h(\beta e^{j\Theta} + Z) \ge h(\beta
  e^{j\Theta} + Z \cond \Theta) = \log (\pi e)$ and
  $h(\beta e^{j\Theta} + Z) \le \log (\pi e
  (\mathsf{Var}(\beta e^{j\Theta} + Z))) \le \log
  (2\pi e)$, which proves \eqref{eq:Theta} for $|\beta|\le1$. 
  Next, we assume that $|\beta|>1$. It is without loss of generality to
  consider $\beta\in\mathbb{R}^+$.
  Let $Z_{\text{R}}$ and $Z_{\text{I}}$ be the
  real and imaginary parts of $Z$, respectively. Then
  \begin{align}
    h(\beta e^{j\Theta} + Z) &=  
    h(\beta \cos(\Theta) + Z_{\text{R}}) \nonumber \\
    &\quad + h(\beta \sin(\Theta) + Z_{\text{I}} \cond 
    \beta \cos(\Theta) + Z_{\text{R}}) \\
    &\ge h(\beta \cos(\Theta)) + h(Z_{\text{I}}) \\
    &= \log \beta + h(\cos(\Theta)) + \frac{1}{2}\log (\pi e). 
    \label{eq:tmp110}
  \end{align}%
  Since $\beta e^{j\Theta} + Z = e^{j\Theta} (\beta + \tilde{Z})$ where
  $\tilde{Z}\defeq Z e^{-j\Theta}\sim\mathcal{CN}(0,1)$ is
  independent of $\Theta$, we can apply \eqref{eq:h-squared-1} from Lemma~\ref{lemma:h-squared}, 
  \begin{align}
    \MoveEqLeft[0]{h(\beta e^{j\Theta} + Z)} \nonumber \\
    &\le 
    h(\beta^2  + |Z|^2 + 2 \beta |Z| \cos(\Theta-\angle_Z)) + \const \\ 
    &= h(|Z|^2 + 2 \beta |Z| \cos(\Theta-\angle_Z)) + \const  \\ 
    &\le \frac{1}{2} \log \left( 2 \pi e \mathsf{Var}(|Z|^2 + 2 \beta |Z| \cos(\Theta-\angle_Z)) \right)
    + \const \\
    &\le \log \beta + \frac{1}{2} \log \left( 2 \pi e\bigl( \mathsf{Var}(|Z|^2) + 
    \mathsf{Var}(2 |Z|) \bigr) \right) +
    \const, 
    \label{eq:tmp111}
  \end{align}%
  where we use the condition $\beta>1$. 
  The lower bound \eqref{eq:tmp110} and upper bound
  \eqref{eq:tmp111} complete the proof of
  \eqref{eq:Theta} for $|\beta|>1$. To prove \eqref{eq:beta}, we
  introduce some $\Theta$ uniformly distributed in
  $[0,2\pi)$, then $ h( |\beta + Z | ) = h(|\beta
  e^{j\Theta} + Z |) = h(\beta e^{j\Theta} + Z) -
  \E \bigl[ \log(|\beta + Z|) \bigr] - \log2\pi$. It can be shown that
  $\bigl| \E \bigl[\log(|\beta + Z|)\bigr] - \log^+\!|\beta| \bigr|
  \le \const$. To see this, we write $\E \bigl[ \log(|\beta + Z|) \bigr] =
  \frac{1}{2} \E \bigl[ \log(|Z|^2+|\beta|^2+2|\beta
  Z|\cos(\angle_{\beta^*\! Z})) \bigr]$ where $\angle_{\beta^*\!
  Z}$ is uniformly distributed in $[0,2\pi)$ and
  independent of the other variables. Taking expectation over
  $\angle_{\beta^*\! Z}$, we
  obtain $\E \bigl[ \log(|\beta + Z|) \bigr] = \frac{1}{2} \E \bigl[
  \log(|Z|^2+|\beta|^2) \bigr] + \const$, since $\frac{1}{2\pi}\int_0^{2\pi}
  \log(a+b\cos \theta) \text{d} \theta = \log \frac{a+\sqrt{a^2-b^2}}{2}$, for all
  $a\ge b >0$. Then, applying
  Jensen's inequality with expectation over $Z$, we have $\E \bigl[ \log(|\beta +
  Z|) \bigr]  \le \frac{1}{2}\log(1+|\beta|^2) + \const \le
  \log^+\!|\beta| + \const'$. Using the monotonicity of
  the logarithmic function, we also have $\E \bigl[ \log(|\beta +
  Z|) \bigr] = \frac{1}{2} \E \bigl[
  \log(|Z|^2+|\beta|^2) \bigr] + \const \ge \max\{\log |\beta|, \E \bigl[\log |Z| \bigr]\} + \const \ge
  \log^+\!|\beta| + \const'$. Finally, since both
  $h(\beta e^{j\Theta} + Z)$ and $\E \bigl[ \log(|\beta + Z|) \bigr]$
  are ``close'' to $\log^+\! |\beta|$, they are ``close''
  to each other due to the triangle inequality. This
  completes the proof of \eqref{eq:beta}. 
\end{proof}

\begin{lemma} \label{lemma:log+}
  For any $p,X>0$, we have 
  \begin{align}
    |\log^+\! (pX) - \log^+\! X| &\le \lvert \log p \rvert, \quad \text{and}\label{eq:logAX}\\
     \E\bigl[ \log^+\! X \bigr] &\le p^{-1} \log^+\! \bigl(\E[ X^p ]\bigr) + p^{-1}. \label{eq:Jensen}
  \end{align}%
\end{lemma}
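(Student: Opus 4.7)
The claim has two parts, both routine once we set things up correctly.

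For \eqref{eq:logAX}, I would observe that $\log^+\!x = (\log x)^+$, where $(\cdot)^+ = \max(\cdot,0)$ is a $1$-Lipschitz map on $\mathbb{R}$, i.e., $|a^+ - b^+| \le |a-b|$ for all reals $a,b$. Applying this with $a = \log p + \log X$ and $b = \log X$, we obtain
\begin{align}
  | \log^+\!(pX) - \log^+\!X |
  = | (\log p + \log X)^+ - (\log X)^+ |
  \le |\log p|,
\end{align}
which is exactly the desired bound. So \eqref{eq:logAX} is immediate.

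For \eqref{eq:Jensen}, the first step is to reduce the exponent. Since $p>0$ and $X>0$, the identity $\log^+\!(X^p) = (p\log X)^+ = p\,(\log X)^+ = p\log^+\!X$ gives $\E[\log^+\!X] = p^{-1}\E[\log^+\!(X^p)]$. Setting $Y \defeq X^p \ge 0$, it therefore suffices to prove
\begin{align}
  \E[\log^+\!Y] \le \log^+\!(\E[Y]) + 1. \label{eq:Yplus}
\end{align}
To establish \eqref{eq:Yplus}, I would use the elementary inequality $\log^+\!y \le \log(1+y)$ valid for all $y\ge0$ (trivial by splitting into $y\ge1$ and $y<1$). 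Then Jensen's inequality applied to the concave function $y\mapsto \log(1+y)$ yields $\E[\log(1+Y)] \le \log(1+\E[Y])$. Finally I use $\log(1+u) \le 1 + \log^+\!u$ for all $u\ge 0$, which holds because logarithms in the paper are base $2$, so $\log 2 = 1$, and $1+u \le 2\max\{1,u\}$. Chaining these three inequalities,
\begin{align}
  \E[\log^+\!Y]
  \le \E[\log(1+Y)]
  \le \log(1+\E[Y])
  \le 1 + \log^+\!(\E[Y]),
\end{align}
which is \eqref{eq:Yplus}, and multiplying by $p^{-1}$ gives \eqref{eq:Jensen}.

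There is no real obstacle here; both bounds reduce to standard scalar facts (Lipschitz property of $(\cdot)^+$ and Jensen's inequality for $\log(1+\cdot)$). The only point worth highlighting is that the $+1$ on the right-hand side of \eqref{eq:Jensen}, rather than $+\log 2$, depends on the paper's base-$2$ convention for $\log$ stated in the Notation.
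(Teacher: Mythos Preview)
Your proof is correct and follows essentially the same route as the paper: for \eqref{eq:Jensen} your chain $\log^+\!Y \le \log(1+Y) \le \log(1+\E[Y]) \le 1+\log^+\!(\E[Y])$ is exactly the paper's argument, and for \eqref{eq:logAX} your use of the $1$-Lipschitz property of $(\cdot)^+$ is a clean repackaging of the paper's case split on $p\gtrless 1$. Your remark about the base-$2$ convention being what turns $\log 2$ into $1$ is on point.
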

\begin{proof}
  To show \eqref{eq:logAX}, it is enough to verify that 
  $\log^+\! (pX) - \log^+\! X \le \log p$ when $p\ge 1$ and 
  $\log^+\! X - \log^+\! (pX) \le - \log p$ when $p < 1$, which completes the proof. The inequality
  \eqref{eq:Jensen} is based on Jensen's inequality. Specifically, we have $\E\bigl[ \log^+\! X \bigr]
  = p^{-1} \E\bigl[ \log^+\! X^p \bigr] \le p^{-1} \E\bigl[ \log( 1 + X^p )\bigr] \le p^{-1} \log
  \bigl(1+\E[ X^p ]\bigr) \le p^{-1} \log^+\! \bigl(\E[ X^p ]\bigr) + p^{-1}$.
\end{proof}


\section{Capacity Upper Bound for Model~A}
\label{sec:upperbound}

The capacity $C(P)$ in \eqref{eq:capa} of a stationary
and ergodic channel is upper-bounded by the capacity of
the corresponding memoryless channel up to a constant
term. Following the footsteps of
\cite{Moser,Lapidoth-ITW02}, we have 
\begin{align}
  \frac{1}{N} I(\rvVec{X}^N; \rvVec{Y}^N) &= \frac{1}{N}
  \sum_{k=1}^N
  I(\rvVec{X}^N; \rvVec{Y}_k \cond \rvVec{Y}^{k-1}) \\
  &\le \frac{1}{N} \sum_{k=1}^N I(\rvVec{X}_k; \rvVec{Y}_k) + I(\rvMat{\Theta}_N; \rvMat{\Theta}^{N-1}) \\
  &\le \sup I(\rvVec{X}; \rvVec{Y}) + \const,
\end{align}%
where $\sup I(\rvVec{X}; \rvVec{Y})$ is the capacity
of a memoryless phase noise channel with the same temporal
marginal distribution as the original channel, and
the supremum is over all input distributions such that
 $\E \left[ \|\rvVec{X}\|^2 \right] \le P$; using the fact that $I(\rvMat{\Theta}_N;
 \rvMat{\Theta}^{N-1}) = h(\rvMat{\Theta}_N) - h(\rvMat{\Theta}_N \cond \rvMat{\Theta}^{N-1}) \le
 \log (2\pi) - r_{\Theta}$ where $r_{\Theta}$ is the differential entropy rate of the phase noise
 process, we can set $\const = \log (2\pi) - r_{\Theta}$. 
 Since we are mainly interested in the multiplexing gain, the constant
 $\const$ does not matter, and it is thus without loss of optimality to consider the memoryless case in this section. 

 The main ingredients of the proof are the genie-aided bound and the duality upper
 bound. In the following, we detail the five steps that lead to 
 Theorem~\ref{thm:A}. 

\subsection{Step~1: Genie-aided bound}
\label{sec:GAB}

Let us define the auxiliary random variable $U$ as the index of the strongest input
entry, i.e.,\footnote{When there are more than one such elements, we pick an arbitrary one.} 
\begin{align}
  U &\defeq \arg\max_{1\le i\le \nt} |X_i|.
\end{align}%
Thus, we use $X_U$ to denote the element in $\rvVec{X}$ with the largest magnitude. 
It is obvious that $U \leftrightarrow \rvVec{X} \leftrightarrow \rvVec{Y}$
form a Markov chain, and that $U$ does not contain more than $\log \nt$ bits. Assuming that a genie provides $U$ to the receiver, we obtain the
following upper bound
\begin{align}
  I(\rvVec{X}; \rvVec{Y}) &\le I(\rvVec{X}; \rvVec{Y}, U) \\
  &= I(\rvVec{X}; \rvVec{Y} \cond U) + I(U; \rvVec{X}) \\
  &\le I(\rvVec{X}; \rvVec{Y} \cond U) + H(U) \\
  &\le I(\rvVec{X}; \rvVec{Y} \cond U) + \log \nt. \label{eq:tmp8920}
\end{align}%

\subsection{Step~2: Canonical form}

\begin{definition}[Canonical channel]\label{lemma:model12}
We define the canonical form $u$, $u=1,\ldots,\nt$, of the channel $\rvMat{H}$ as
\begin{align}
  \rvMat{G}_{(u)} \defeq \underbrace{\diag\left( h_{1,u}^{-1},\ldots,h_{\nr,u}^{-1}
  \right)}_{\Am_u} \rvMat{H}. \label{eq:canonical}
\end{align}
\end{definition}
Note that the elements in the $u$\,th column of $\rvMat{G}_{(u)}$ has normalized magnitudes.
Now, with the information $U$ from the genie, the receiver can convert
the original channel into one of the canonical forms, namely, the
form~$U$. 
\begin{align}
  {I(\rvVec{X}; \rvMat{H} \rvVec{X} + \rvVec{Z} \cond U)} 
  &= I(\rvVec{X}; \Am_U \rvMat{H} \rvVec{X} + \Am_U \rvVec{Z} \cond U) \\
  &\le I(\rvVec{X}; \Am_U \rvMat{H} \rvVec{X} + a \rvVec{Z} \cond U) \label{eq:tmp322}\\
  &= I(a^{-1} \rvVec{X}; a^{-1} \rvMat{G}_{(U)} \rvVec{X} + \rvVec{Z} \cond U) \\
  &= I( \tilde{\rvVec{X}}; \rvMat{G}_{(U)} \tilde{\rvVec{X}} + \rvVec{Z} \cond U), \label{eq:tmp820}
\end{align}
where $a \defeq \min_{k,u} |h_{k,u}^{-1}|$; \eqref{eq:tmp322} is due to the fact that reducing the additive noise
increases the mutual information; we define 
\begin{align}
  \tilde{\rvVec{X}} &\defeq a^{-1} \rvVec{X},
  \label{eq:Xtilde}\\
\intertext{and accordingly,}
{\rvVec{W}} &\defeq \rvMat{G}_{(u)}
\tilde{\rvVec{X}} + \rvVec{Z}.
\end{align}%
In the following, we focus on upper-bounding the mutual information 
$I(\tilde{\rvVec{X}}; {\rvVec{W}} \cond U)$. Note that 
\begin{align}
I( \tilde{\rvVec{X}}; {\rvVec{W}} \cond U ) &= h(
{\rvVec{W}} \cond U) - h( {\rvVec{W}} \cond
\tilde{\rvVec{X}}, U)\\
&= h( {\rvVec{W}} \cond U) - h( {\rvVec{W}} \cond \tilde{\rvVec{X}}),
\label{eq:I-XW}
\end{align}%
where the last equality comes from the fact that $U$ is a function of
$\rvVec{X}$ and thus a function of $\tilde{\rvVec{X}}$, since
$\tilde{\rvVec{X}}$ is simply a scaled version of $\rvVec{X}$.
Therefore, it is enough to lower-bound $h( {\rvVec{W}} \cond
\tilde{\rvVec{X}})$ and upper-bound $h( {\rvVec{W}} \cond U)$
separately.

\subsection{Step~3: Lower bound on $h( {\rvVec{W}} \cond \tilde{\rvVec{X}})$}

\begin{lemma}\label{lemma:LB-A}
  For model~A, we have 
  \begin{align}
    h({\rvVec{W}} \cond \tilde{\rvVec{X}})&\ge \nr
    \,\E\bigl[ \log^+\! |\tilde{X}_U| \bigr]\! + \nr \,\E \bigl[ \log^+ \!
    |\tilde{X}_V| \bigr]\!  + \constH, \label{eq:LB-A}
  \end{align}%
  where $\tilde{X}_U$ and $\tilde{X}_V$ have the largest and second largest magnitudes in $\tilde{\rvVec{X}}$, respectively.
\end{lemma}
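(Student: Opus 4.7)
The plan is to establish, for every realization $\tilde{\rvVec{X}}=\tilde{\pmb{x}}$ with dominant indices $u\defeq U(\tilde{\pmb{x}})$ and $v\defeq V(\tilde{\pmb{x}})$, the pointwise inequality $h(\rvVec{W}\cond\tilde{\rvVec{X}}=\tilde{\pmb{x}}) \geq \nr\log^+\!|\tilde{x}_u| + \nr\log^+\!|\tilde{x}_v| + \constH$, and then to take expectation over $\tilde{\rvVec{X}}$. I would derive it as the maximum of two independent lower bounds: a two-column bound that is tight when $|\tilde{x}_v|\geq 1$, and a one-column bound that takes over when $|\tilde{x}_v|<1$. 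As a preliminary reduction, the phases of $\tilde{x}_u$, $\tilde{x}_v$, and of $h_{iv}/h_{iu}$ can be absorbed into torus-shifts of $\Theta_{iu}$ and $\Theta_{iv}$; these shifts preserve the joint differential entropy of the phase-noise vector, so without loss of generality the multiplicative coefficients in the canonical channel are real positive.

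For the two-column bound, I would condition on $\Cc_1\defeq\{\tilde{\rvVec{X}}, \{\Theta_{ik}\}_{k\neq u,v}, \rvVec{Z}\}$; conditioning only reduces entropy, and given $\Cc_1$ the vector $\rvVec{W}$ becomes a deterministic shift of $\rvVec{V}$ with $V_i = |\tilde{x}_u|e^{j\Theta'_{iu}} + \gamma_i e^{j\Theta'_{iv}}$ and $\gamma_i \defeq |h_{iv}/h_{iu}|\,|\tilde{x}_v|$. The map $T\colon(\rvVec{\Theta}'_u,\rvVec{\Theta}'_v)\mapsto\rvVec{V}$ has a block-diagonal Jacobian with determinant $\prod_{i=1}^{\nr} |\tilde{x}_u|\gamma_i\,|\sin(\Theta'_{iv}-\Theta'_{iu})|$ and is $2^\nr$-to-one globally; partitioning the domain via $\Omega_i \defeq \operatorname{sgn}(\sin(\Theta'_{iv}-\Theta'_{iu}))$ makes each piece bijective, and applying Lemma~\ref{lemma:h-prod} on each piece and averaging (tower rule on $\Omega$) yields
\begin{equation*}
  h(\rvVec{V}\cond\Cc_1) \geq h(\rvVec{\Theta}'_u,\rvVec{\Theta}'_v\cond\Cc_1,\Omega) + \E[\log|\det J_T|\cond\Cc_1].
\end{equation*}
The conditional phase-noise entropy is lower bounded by $h(\rvMat{\Theta}) - \nr(\nt-2)\log(2\pi) - \nr\log 2$ via the chain rule on $h(\rvMat{\Theta})$ and the $\nr$-bit overhead of the discrete $\Omega$, while the Jacobian expectation evaluates to $\nr\log|\tilde{x}_u|+\nr\log|\tilde{x}_v|+\sum_i\log|h_{iv}/h_{iu}|+\sum_i\E[\log|\sin(\Theta'_{iv}-\Theta'_{iu})|\cond\Cc_1]$, with each $\log|\sin|$ contribution finite (after a further average over $\Cc_1$) by Lemma~\ref{lemma:Elogsin} since $\Theta_{iv}-\Theta_{iu}$ has finite entropy as a marginal of $\rvMat{\Theta}$. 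Collecting everything independent of $\tilde{\pmb{x}}$ into $\constH$ gives $h(\rvVec{W}\cond\tilde{\rvVec{X}}=\tilde{\pmb{x}})\geq \nr\log|\tilde{x}_u| + \nr\log|\tilde{x}_v| + \constH$.

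For the one-column bound, I would condition instead on $\Cc_2\defeq\{\tilde{\rvVec{X}},\{\Theta_{ik}\}_{k\neq u}\}$, so that $\rvVec{W}$ is a shift of $\rvVec{V}'=|\tilde{x}_u|e^{j\rvVec{\Theta}'_u}+\rvVec{Z}$. Splitting into real and imaginary parts, $h(\rvVec{V}'\cond\Cc_2) = h(\rvVec{V}'_R\cond\Cc_2)+h(\rvVec{V}'_I\cond\rvVec{V}'_R,\Cc_2)$, and the latter is at least $h(\rvVec{Z}_I) = (\nr/2)\log(\pi e)$ by further conditioning on $\rvVec{\Theta}'_u$. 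The real part simultaneously satisfies $h(\rvVec{V}'_R\cond\Cc_2) \geq h(|\tilde{x}_u|\cos(\rvVec{\Theta}'_u)\cond\Cc_2) = \nr\log|\tilde{x}_u|+h(\cos(\rvVec{\Theta}'_u)\cond\Cc_2)$, with the cosine entropy finite by Lemma~\ref{lemma:Elogsin}, and $h(\rvVec{V}'_R\cond\Cc_2)\geq h(\rvVec{Z}_R)$; their maximum is at least $\nr\log^+\!|\tilde{x}_u|+\constH$. The two bounds then combine: when $|\tilde{x}_v|\geq 1$ (so $|\tilde{x}_u|\geq 1$ too) $\log=\log^+$ and the two-column bound already gives the claim, while when $|\tilde{x}_v|<1$ the term $\nr\log^+\!|\tilde{x}_v|$ in the claim vanishes and the one-column bound suffices. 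Averaging the pointwise inequality over $\tilde{\rvVec{X}}$ yields Lemma~\ref{lemma:LB-A}. The main technical obstacle is the careful bookkeeping of conditional entropies---in particular, verifying that the per-row phase-difference marginals $h(\Theta_{iv}-\Theta_{iu})$ remain finite (a consequence of $h(\{\rvMat{\Theta}_t\})>-\infty$ together with the boundedness of the phase support) so that Lemma~\ref{lemma:Elogsin} applies uniformly in the shifts introduced by the preliminary reduction.
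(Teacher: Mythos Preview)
Your proposal is correct and follows essentially the same route as the paper: condition away the irrelevant phase noises (and possibly $\rvVec{Z}$), apply the change-of-variables formula to the map from the remaining phase noises to the output, invoke Lemma~\ref{lemma:Elogsin} to control the $\E[\log|\sin(\cdot)|]$ terms uniformly in the absorbed phase shifts, and split into cases according to whether $|\tilde{x}_u|$ and $|\tilde{x}_v|$ exceed~$1$. The only organizational differences are that the paper works row by row---first establishing $h(W_i\cond\tilde{\rvVec{X}})\ge\E[\log^+|\tilde{X}_U|]+\E[\log^+|\tilde{X}_V|]+\constH$ for a single $i$ and then summing via the chain rule with extra conditioning on previous rows' phase noises---whereas you treat all $\nr$ rows at once through the block-diagonal Jacobian, and the paper splits into three magnitude cases (using Lemma~\ref{lemma:ejt} for the middle case) where you use the maximum of two bounds; neither difference is substantive.
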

 \begin{proof}
   See Appendix~\ref{app:LB}. 
 \end{proof}
It is worth mentioning that the above bound depends not only on the strongest but also on the second strongest input of the channel.

\subsection{Step~4: Upper bound on $h( {\rvVec{W}} \cond U)$}

Upper-bounding $h({\rvVec{W}} \cond U)$ by a non-trivial but
tractable function of the input distribution is hard in
general. A viable way for that purpose is through an
auxiliary distribution, also called the duality approach. The duality upper bound was first proposed
in \cite{Topsoe} and \cite{Kemperman} for discrete channels and then derived for arbitrary channels
in~\cite{Moser}. Namely, for
any\footnote{Formally, we
should state that the probability measure $Q$ corresponding to the density~$q(\pmb{w})$ is such that
$P(\cdot\cond U=u)$ is absolutely continuous with respect to $Q$. Throughout the paper, for brevity, we
implicitly make the assumption to avoid such formalities.} pdf
$q(\pmb{w})$, we have 
\begin{align}
  h({\rvVec{W}} \cond U) &= \E \bigl[ - \log
  p({\rvVec{W}} \cond U) \bigr]
  \\
  &= \E \bigl[ - \log
  q({\rvVec{W}}) \bigr] - \E_U \bigl[ D( p_{{\rvVec{W}}|U=u} \,\|\, q )\bigr] \\
  &\le \E\bigl[ - \log q({\rvVec{W}}) \bigr] \label{eq:tmp7299}
\end{align}%
due to the non-negativity of the Kullback-Leibler divergence 
$D( p_{{\rvVec{W}}|U=u}\, \|\, q )$. 
Hence, the key is to choose a proper
auxiliary pdf $q(\pmb{w})$ in order to obtain a tight upper
bound on the capacity of our channel. The commonly
used auxiliary distributions for MIMO channels are mostly related to
the class of isotropic
distributions~\cite{Moser,Lapidoth-ITW02,Durisi-capa}. Unfortunately,
the isotropic distributions are not suitable in our
case. To see this, let us assume that an isotropic output
${\rvVec{W}}$ was
indeed close to optimal. On the one hand, the pdf of an isotropic
output ${\rvVec{W}}$ would only
depend on the norm $\|{\rvVec{W}}\|$ which would be dominated by the largest input entry
$X_U$ at high SNR. Therefore, the value of $\E \bigl[ - \log
q({\rvVec{W}}) \bigr]$ would be insensitive to the
number of active input entries. On the other hand, the lower bound on the conditional entropy
$h({\rvVec{W}} \cond \tilde{\rvVec{X}})$ is increasing with \emph{both} of the largest input
entries $X_U$ and $X_V$, according to \eqref{eq:LB-A}. Therefore,
with an isotropic distribution $q(\pmb{w})$, the
capacity upper bound $\E\bigl[ - \log q({\rvVec{W}}) \bigr] - h({\rvVec{W}} \cond
\tilde{\rvVec{X}})$ would become larger when the second strongest
input went to zero, i.e., only one
transmit antenna was active. But this is in contradiction with the isotropic assumption, since if
only one transmit antenna was active, then the output entries would
be highly correlated and the
output distribution would be \emph{far} from being isotropic. 

In light of the above discussion, we are led to think
that a \emph{good} choice of $q(\wv)$ should reflect not only
the strongest input entry, but also the weaker ones. We adopt the following pdf built
from the multivariate Gamma distribution in Definition~\ref{def:mv-Gamma},
  \begin{align} 
    q(\wv) &= \frac{g_{\pmb{\alpha}}}{\nr!}
    |\hat{w}_1|^{2(\alpha_1-1)} \prod_{i=2}^{\nr} \left( |\hat{w}_i|^2 -
    |\hat{w}_{i-1}|^2
    \right)^{\alpha_i-1} \nonumber \\
    &\qquad \cdot \exp(-\mu |\hat{w}_{\nr}|^2)
    \mu^{\alpha_1+\cdots+\alpha_{\nr}}, \quad \pmb{w} \in
    \mathbb{C}^{\nr}, \label{eq:bi-gamma}
  \end{align}%
  where $\hat{w}_1, \ldots,
  \hat{w}_{\nr}$ are the ordered version of $w_i$'s with
  increasing magnitudes. Essentially, we let each $W_i$ be circularly
  symmetric and let the ordered version of $(|W_1|^2,
  \ldots, |W_{\nr}|^2)$ follow the multivariate Gamma distribution
  defined in Definition~\ref{def:mv-Gamma}. Applying \eqref{eq:X-X2}
  in Lemma~\ref{lemma:cs-vec} and the order statistics~(whence
  the term $\nr!$)~\cite{Muirhead}, we can obtain the pdf of
  $\rvVec{W}$ as written in \eqref{eq:bi-gamma}. Remarkably,
  the differences
  between $|W_i|^2$ and $|W_j|^2$, $i\ne j$, are introduced into the
  upper bound, which is crucial for bringing in the impact of
  individual input entries $\tilde{X}_i$'s other than the strongest
  entry as will be shown in the following.  

  \begin{lemma}
    \label{lemma:logq}
    By choosing $0<\alpha_i<1$, $i=1,\ldots,\nr$, and $\mu =
    \min\{P^{-1},1\}$, we
    have for model A,  
    \begin{align}
      \MoveEqLeft[0.5]{\E\bigl[ - \log q({\rvVec{W}})
      \bigr]} \nonumber \\
      &\le \sum_{i=1}^{\nr} \alpha_i \log^+\! P + \left((1-\alpha_1) + \sum_{i=1}^{\nr}
  (1-\alpha_i)\right)   \E \bigl[ \log^+\! |\tilde{X}_U| \bigr] \nonumber \\
  &\qquad + 
  \sum_{i=2}^{\nr} (1-\alpha_i) \E \bigl[ \log^+\! |\tilde{X}_V| \bigr]+ \constH, \label{eq:Elogq}
    \end{align}%
  where $\tilde{X}_U$ and $\tilde{X}_V$ are the strongest and second
  strongest elements in $\tilde{\rvVec{X}}$, respectively.
 \end{lemma}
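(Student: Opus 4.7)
The plan is to substitute the explicit form of $q(\wv)$ from \eqref{eq:bi-gamma} into $\E[-\log q(\rvVec{W})]$ and estimate the resulting terms one by one. A direct calculation yields
\begin{align}
\E\bigl[-\log q(\rvVec{W})\bigr] &= \const - \sum_{i=1}^{\nr}\alpha_i\log\mu + \mu\,\E\bigl[|\hat{W}_{\nr}|^2\bigr] \nonumber \\
&\quad + 2(1-\alpha_1)\,\E\bigl[\log|\hat{W}_1|\bigr] \nonumber \\
&\quad + \sum_{i=2}^{\nr}(1-\alpha_i)\,\E\bigl[\log(|\hat{W}_i|^2-|\hat{W}_{i-1}|^2)\bigr].
\end{align}
The choice $\mu=\min\{P^{-1},1\}$ converts $-\sum_i\alpha_i\log\mu$ into $\sum_i\alpha_i\log^+\! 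P$. For the mean-square term I would use $|\hat{W}_{\nr}|^2\le\|\rvVec{W}\|^2\le\constH\|\tilde{\rvVec{X}}\|^2+2\|\rvVec{Z}\|^2$, together with the power constraint $\E\|\tilde{\rvVec{X}}\|^2\le\constH P$ and $\E\|\rvVec{Z}\|^2=\nr$, to get $\mu\,\E[|\hat{W}_{\nr}|^2]\le\constH$. For the $\log|\hat{W}_1|$ term, since $|\hat{W}_1|\le|\hat{W}_{\nr}|$, I would bound $\log|\hat{W}_1|\le\tfrac{1}{2}\log^+\!|\hat{W}_{\nr}|^2\le\log^+\!|\tilde{X}_U|+\tfrac{1}{2}\log^+\!\|\rvVec{Z}\|^2+\constH$ via the same norm bound together with $\|\tilde{\rvVec{X}}\|^2\le\nt|\tilde{X}_U|^2$; averaging produces the $2(1-\alpha_1)\,\E[\log^+\!|\tilde{X}_U|]$ contribution to \eqref{eq:Elogq}.

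The crux is controlling the pairwise differences $|\hat{W}_i|^2-|\hat{W}_{i-1}|^2$. The key observation is that, in the canonical form of Definition~\ref{lemma:model12}, the $U$-th column of $\rvMat{G}_{(U)}$ has unit-magnitude entries $e^{j\Theta_{i,U}}$. Writing
\begin{align}
W_i = e^{j\Theta_{i,U}}\tilde{X}_U + B_i, \quad B_i \defeq \sum_{k\ne U}\tfrac{h_{i,k}}{h_{i,U}}e^{j\Theta_{i,k}}\tilde{X}_k + Z_i,
\end{align}
the dominant $|\tilde{X}_U|^2$ contributions \emph{cancel} in every pairwise difference:
\begin{align}
|W_i|^2-|W_j|^2 = 2\Re\bigl(\tilde{X}_U(e^{j\Theta_{i,U}}\bar{B}_i-e^{j\Theta_{j,U}}\bar{B}_j)\bigr) + |B_i|^2-|B_j|^2.
\end{align}
Since $|\tilde{X}_k|\le|\tilde{X}_V|$ for all $k\ne U$, one has $|B_i|\le\constH|\tilde{X}_V|+|Z_i|$, so that $\bigl||W_i|^2-|W_j|^2\bigr|\le\constH\bigl(|\tilde{X}_U||\tilde{X}_V|+|\tilde{X}_U|\|\rvVec{Z}\|+\|\rvVec{Z}\|^2\bigr)$, after also using $|\tilde{X}_V|^2\le|\tilde{X}_U||\tilde{X}_V|$ to absorb the purely-$V$ contribution.

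To obtain the matching coefficients, I would take $\log^+$ of this bound, split products via $\log^+(ab)\le\log^+\!a+\log^+\!b$, and apply Lemma~\ref{lemma:log+} conditionally on $\tilde{\rvVec{X}}$ to average the Gaussian cross-term: $\E_{\rvVec{Z}}[\log^+\!(|\tilde{X}_U|\|\rvVec{Z}\|)\cond\tilde{\rvVec{X}}]\le\log^+\!|\tilde{X}_U|+\const$. This should give $\E[\log^+\!(|\hat{W}_i|^2-|\hat{W}_{i-1}|^2)]\le\E[\log^+\!|\tilde{X}_U|]+\E[\log^+\!|\tilde{X}_V|]+\constH$ for each $i\ge 2$; multiplying by $(1-\alpha_i)$, summing, and combining with the previous contributions delivers \eqref{eq:Elogq}. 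The delicate point, which I expect to be the main obstacle, is keeping the coefficient of $\E[\log^+\!|\tilde{X}_U|]$ per increment down to $1$: this relies on both the cancellation of the $|\tilde{X}_U|^2$ contributions afforded by the canonical form \emph{and} the conditional Jensen step of Lemma~\ref{lemma:log+}, since a naive bounding of the noise cross-term would inflate this coefficient and spoil the match with Lemma~\ref{lemma:LB-A} that is needed to close the multiplexing-gain argument.
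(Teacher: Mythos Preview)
Your proposal is correct and follows essentially the same line as the paper's proof. The only organizational difference is in how the pairwise differences are bounded: the paper writes
\[
\bigl||W_i|^2-|W_k|^2\bigr| \le (|W_i|+|W_k|)\,\bigl|e^{-j\Theta_{i,U}}W_i - e^{-j\Theta_{k,U}}W_k\bigr|,
\]
bounds the first factor by $\constH\sqrt{\|\tilde{\rvVec{X}}\|^2+\|\rvVec{Z}\|^2}$ and the second (after the same canonical-form cancellation you identify) by $\constH\sqrt{|\tilde{X}_V|^2+\|\rvVec{Z}\|^2}$, and then applies Jensen to each $\tfrac{1}{2}\log(\cdot)$ separately, which yields the $\log^+\!|\tilde{X}_U|+\log^+\!|\tilde{X}_V|$ split directly without your product-splitting and conditional-Jensen step. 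The two computations are algebraically equivalent; the paper's factorization is just a slightly cleaner way to keep the coefficient of $\E[\log^+\!|\tilde{X}_U|]$ per increment at $1$, but your route reaches the same bound.
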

 \begin{proof}
   The calculation is straightforward from the pdf
   \eqref{eq:bi-gamma}, details are provided in Appendix~\ref{app:logq}.
 \end{proof}

\subsection{Step~5: Upper bound for model A}
Combining \eqref{eq:I-XW}, \eqref{eq:LB-A}, \eqref{eq:tmp7299}, and \eqref{eq:Elogq}
from the previous steps, we have 
\begin{align}
  \MoveEqLeft[0]{I( \tilde{\rvVec{X}}; {\rvVec{W}} \cond U)} \nonumber \\
  &\le \sum_{i=1}^{\nr} \alpha_i \log^+\! P + \left(1-2\alpha_1 - \sum_{i=2}^{\nr}
  \alpha_i\right) \E \bigl[ \log^+\! |\tilde{X}_U| \bigr] \nonumber \\
  &\qquad + 
  \left(\sum_{i=2}^{\nr} (1-\alpha_i) - {\nr}\right) \E \bigl[ \log^+\!
  |\tilde{X}_V| \bigr]  + \constH \label{eq:tmp911} \\
  &\le \sum_{i=1}^{\nr} \alpha_i \log^+\! P +  \left(1-2\alpha_1 - \sum_{i=2}^{\nr}
      \alpha_i\right)  \E \bigl[ \log^+\! |\tilde{X}_U| \bigr] + \constH \label{eq:tmp912}
 \\
      &\le \left( \sum_{i=1}^{\nr} \alpha_i  +  \frac{1}{2} \right) \log^+\!
      P + \constH',
    \end{align}
   where the inequality \eqref{eq:tmp912} comes from
    removing the negative term in \eqref{eq:tmp911}; to obtain the last inequality, 
    we apply \eqref{eq:Jensen} in Lemma~\ref{lemma:log+} with $p=2$ and the power constraint $\E \bigl[ |\tilde{X}_U|^2 \bigr] \le a^2 \E \bigl[
    \|\rvVec{X}\|^2 \bigr] \le a^2 P$.  

Finally, we conclude from \eqref{eq:tmp8920} and \eqref{eq:tmp820} that, for model~A,  
\begin{align}
  I(\rvVec{X}; \rvVec{Y}) &\le I(\rvVec{X}; \rvVec{Y}\cond U) +
  \const \\
  &\le I( \tilde{\rvVec{X}}; {\rvVec{W}} \cond U) + \const \\
  &\le \left( \sum_{i=1}^{\nr} \alpha_i  +  \frac{1}{2} \right) \log^+\! P + \constH 
\end{align}%
which implies that the multiplexing gain is upper-bounded by 
\begin{align}
  r_\text{A} &\le \sum_{i=1}^{\nr} \alpha_i  +  \frac{1}{2}, \quad
  \forall\, \pmb{\alpha}\in(0,1)^{\nr}.
\end{align}%
By taking the infimum over $\pmb{\alpha}$, we have $r_{\text{A}} \le \frac{1}{2}$.

\section{Capacity Upper Bound for Model~B}
\label{sec:upperbound-B}

In this section, we derive upper bounds for the three cases of
model~B, where the phase noises are on the transmitter and receiver
sides of the channel. As in the previous section, it is enough to
consider the memoryless case for our purpose.

\subsection{Case~B1: Transmit and receive phase noises}

Note that the multiplexing gain of this case is upper-bounded by
that of case~B2 and case~B3, since we can enhance the channel
by providing the information on the transmit or receive phase noises
to both the transmit and receiver. In other words, the upper bound
$\min\{\frac{\nr}{2}, \frac{\nt}{2}, \nt-\frac{1}{2}\} = \min\{\frac{\nr}{2},
\frac{\nt}{2}\}$ is still valid for this case. In the following, we
show that we can tighten the upper bound $\frac{\nr}{2}$ to
$\frac{(\nr-2)^+ + 1}{2}$ with the duality upper bound using the
multi-variate Gamma distribution. The proof is in
the same vein as the proof for model~A. Specifically, the first four
steps are exactly the same as for model~A, except for Step~3~in which
the conditional entropy has a different lower bound, as shown below. 
\begin{lemma}\label{lemma:LB-B}
  For model~B1, we have
  \begin{align}
   h({\rvVec{W}} \cond \tilde{\rvVec{X}})
    &\ge \nr \,\E \bigl[ \log^+\! |\tilde{X}_U| \bigr] + \E \bigl[ \log^+ \!
    |\tilde{X}_V| \bigr] + \constH, \label{eq:LB-B}
  \end{align}%
  where $\tilde{X}_U$ and $\tilde{X}_V$ have the largest and second largest magnitudes in $\tilde{\rvVec{X}}$, respectively.
\end{lemma}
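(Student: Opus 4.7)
The plan is to mirror the strategy of Lemma~\ref{lemma:LB-A}, exploiting the conditional independence of the $\nr$ outputs given both $\tilde{\rvVec{X}}$ and the transmit-side phase vector $\rvVec{\Theta}_{\text{T}}$. The clean decomposition I would use is
\[
  h(\rvVec{W}\cond\tilde{\rvVec{X}})
  = h(\rvVec{W}\cond\tilde{\rvVec{X}}, \rvVec{\Theta}_{\text{T}})
  + I(\rvVec{\Theta}_{\text{T}}; \rvVec{W}\cond\tilde{\rvVec{X}}),
\]
aiming for $\nr\,\E[\log^+\!|\tilde{X}_U|]$ from the first summand (one $\log^+$ per receive antenna, carried by the $\nr$ independent receive-side phases $\Theta_{\text{R},i}$) and $\E[\log^+\!|\tilde{X}_V|]$ from the second (extracted from a single output). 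The $\nr$-versus-$1$ asymmetry between the two contributions is exactly what distinguishes model~B1 from model~A: because $\rvVec{\Theta}_{\text{T}}$ is shared across rows, the second-strongest-input contribution cannot pick up an $\nr$ multiplier.

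For the first summand, observe that given $(\tilde{\rvVec{X}},\rvVec{\Theta}_{\text{T}})$ each output reduces to $W_i = e^{j\Theta_{\text{R},i}} B_i + Z_i$ with $B_i\defeq\sum_k g_{ik}^{(U)} e^{j\Theta_{\text{T},k}}\tilde{X}_k$ deterministic, and the residual noise pairs $(\Theta_{\text{R},i},Z_i)$ are independent across $i$. Hence the conditional entropy factors, and Lemma~\ref{lemma:ejt}~\eqref{eq:Theta} yields $h(\rvVec{W}\cond\tilde{\rvVec{X}},\rvVec{\Theta}_{\text{T}})\ge \sum_i \E[\log^+\!|B_i|]-\const$. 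To replace $|B_i|$ by $|\tilde{X}_U|$ uniformly in $\tilde{\rvVec{X}}$ and $\Theta_{\text{T},k\ne U}$, I would split $B_i = e^{j\Theta_{\text{T},U}}\tilde{X}_U + c$ with $c=\sum_{k\ne U} g_{ik}^{(U)} e^{j\Theta_{\text{T},k}}\tilde{X}_k$, pick $\phi$ such that $e^{-j\phi}c\in\mathbb{R}$, and apply the elementary bound $|B_i|\ge|\Im(e^{-j\phi} B_i)| = |\tilde{X}_U|\,|\sin(\Theta_{\text{T},U}+\phi')|$ for a suitable $\phi'$. Lemma~\ref{lemma:Elogsin} then guarantees $\E[\log|\sin(\Theta_{\text{T},U}+\phi')|]>-\const$, so that $\E[\log^+\!|B_i|]\ge \E[\log^+\!|\tilde{X}_U|]-\constH$, and summing over $i$ gives the first-summand bound $\nr\,\E[\log^+\!|\tilde{X}_U|]-\constH$.

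For the mutual-information summand I would keep only a single output,
\[
  I(\rvVec{\Theta}_{\text{T}};\rvVec{W}\cond\tilde{\rvVec{X}}) \ge I(\rvVec{\Theta}_{\text{T}}; W_1\cond\tilde{\rvVec{X}}) = h(W_1\cond\tilde{\rvVec{X}}) - h(W_1\cond\tilde{\rvVec{X}},\rvVec{\Theta}_{\text{T}}).
\]
The subtracted term is controlled by \eqref{eq:Theta} and the deterministic bound $|B_1|\le\constH\,|\tilde{X}_U|$, giving $h(W_1\cond\tilde{\rvVec{X}},\rvVec{\Theta}_{\text{T}})\le\E[\log^+\!|\tilde{X}_U|]+\constH$. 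For the leading term, a single output of model~B1 reads $W_1 = e^{j\Psi_U}\tilde{X}_U + g_{1V}^{(U)} e^{j\Psi_V}\tilde{X}_V + (\text{smaller terms}) + Z_1$ with $\Psi_k\defeq\Theta_{\text{R},1}+\Theta_{\text{T},k}$; since $(\Psi_U,\Psi_V)$ inherits finite joint differential entropy from the assumption $h(\{\rvVec{\Theta}_{\text{T},t},\rvVec{\Theta}_{\text{R},t}\})>-\infty$, this single-output channel is structurally identical to a single row of model~A. The $\nr=1$ instance of Lemma~\ref{lemma:LB-A} therefore applies and delivers $h(W_1\cond\tilde{\rvVec{X}})\ge\E[\log^+\!|\tilde{X}_U|]+\E[\log^+\!|\tilde{X}_V|]-\constH$; subtracting produces $I(\rvVec{\Theta}_{\text{T}};\rvVec{W}\cond\tilde{\rvVec{X}})\ge\E[\log^+\!|\tilde{X}_V|]-\constH$, and combining with the first-summand bound closes the argument.

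The hard part is the single-output inequality on $h(W_1\cond\tilde{\rvVec{X}})$ invoked above. Its proof is an estimation-theoretic argument: via Lemma~\ref{lemma:h-prod} one would rewrite $h(W_1\cond\tilde{\rvVec{X}}) = h(\Psi_U,\Psi_V) + h(Z_1) - h(\Psi_U,\Psi_V\cond W_1,\tilde{\rvVec{X}})$ (up to the easily-controlled contribution of the smaller-order terms), and then bound $h(\Psi_U,\Psi_V\cond W_1,\tilde{\rvVec{X}})\le -\log|\tilde{X}_U|-\log|\tilde{X}_V|+\constH$ through an MMSE-type argument exploiting that at high SNR $\Psi_U$ and $\Psi_V$ can be jointly recovered from the scalar observation $W_1$ with respective errors of order $1/|\tilde{X}_U|$ and $1/|\tilde{X}_V|$. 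Because this step is the $\nr=1$ case of Lemma~\ref{lemma:LB-A}, it would be handled by direct reference to the corresponding step in Appendix~\ref{app:LB}.
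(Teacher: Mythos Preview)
Your proposal is correct and, once unwound, essentially coincides with the paper's proof. The paper does not use the entropy/mutual-information split; it applies the chain rule directly, $h(\rvVec{W}\cond\tilde{\rvVec{X}}) = h(W_1\cond\tilde{\rvVec{X}}) + \sum_{i=2}^{\nr} h(W_i\cond\tilde{\rvVec{X}}, W^{i-1})$, bounds the first term by the single-output inequality~\eqref{eq:tmp544} (which it proves for both models simultaneously by a case analysis on $|\tilde{X}_U|,|\tilde{X}_V|\gtrless 1$, the interesting case handled via the Jacobian of $(\Theta_{1,U},\Theta_{1,V})\mapsto e^{j\Theta_{1,U}}x_U + e^{j\Theta_{1,V}}g_{1V}x_V$ together with Lemma~\ref{lemma:Elogsin}), and bounds each remaining term by conditioning further on $(\rvVec{\Theta}_{\text{T}},\Theta_{\text{R}}^{i-1})$, invoking Lemma~\ref{lemma:ejt}, and then showing $\E[\log^+\!|B_i|]\ge\E[\log^+\!|\tilde{X}_U|]+\constH$ via Lemma~\ref{lemma:Elognorm}. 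If you keep only $W_1$ in your mutual-information term, your bound collapses algebraically to $h(W_1\cond\tilde{\rvVec{X}}) + h(W_2^{\nr}\cond\tilde{\rvVec{X}},\rvVec{\Theta}_{\text{T}},W_1)$, i.e., exactly the paper's bound after its conditioning step; your imaginary-part/Lemma~\ref{lemma:Elogsin} route for $\E[\log^+\!|B_i|]$ is a legitimate alternative to the paper's Lemma~\ref{lemma:Elognorm}. One caveat: your claim that the conditional entropy ``factors'' relies on the $\Theta_{\text{R},i}$ being mutually independent, which model~B1 does not assume---you need chain rule plus conditioning on $\Theta_{\text{R}}^{i-1}$, as the paper does. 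Finally, the MMSE sketch in your last paragraph is not how~\eqref{eq:tmp544} is actually established; your deferral to Appendix~\ref{app:LB} is correct, but the argument there is the Jacobian computation described above, not an estimation bound.
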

\begin{proof}
  See Appendix~\ref{app:LB}. 
 \end{proof}

Applying \eqref{eq:I-XW},
 \eqref{eq:tmp7299}, \eqref{eq:Elogq}, and \eqref{eq:LB-B}, we have
\begin{align}
  \MoveEqLeft[2]{I( \tilde{\rvVec{X}}; {\rvVec{W}} \cond U)} \nonumber \\
  &\le \sum_{i=1}^{\nr} \alpha_i \log^+\! P + \left(1-2\alpha_1 - \sum_{i=2}^{\nr}
  \alpha_i\right) \E \bigl[ \log^+\! |\tilde{X}_U| \bigr] \nonumber \\
  &\qquad + 
  \left(\sum_{i=2}^{\nr} (1-\alpha_i) - 1\right) \E \bigl[ \log^+\!
  |\tilde{X}_V| \bigr] + \constH \\
  &\le \sum_{i=1}^{\nr} \alpha_i \log^+\! P + \E \bigl[ \log^+\! |\tilde{X}_U|\bigr]
  \nonumber \\ &\qquad + \left({\nr}-2\right)^+ \E \bigl[ \log^+\! |\tilde{X}_V| \bigr] + \constH \\
  &\le \sum_{i=1}^{\nr} \alpha_i \log^+\! P + \left( ({\nr}-2)^++1\right)
  \E \bigl[ \log^+\! |\tilde{X}_U| \bigr] + \constH  \nonumber \\
  &\le \Bigl( \frac{({\nr}-2)^++1}{2} + \sum_{i=1}^{\nr} \alpha_i \Bigr) \log^+\! P
  + \constH'. \label{eq:tmp4443}
\end{align}

Therefore, we conclude from \eqref{eq:tmp8920}, \eqref{eq:tmp820},
and \eqref{eq:tmp4443} that, for model~B1,
\begin{align}
  I(\rvVec{X}; \rvVec{Y}) &\le \left( \sum_{i=1}^{\nr} \alpha_i  +
  \frac{({\nr}-2)^++1}{2} \right) \log^+\! P + \constH' 
\end{align}%
which implies that the multiplexing gain is upper-bounded by 
\begin{align}
  r_\text{B} &\le \sum_{i=1}^{\nr} \alpha_i  +  \frac{ ({\nr}-2)^++1 }{2}, \quad \forall \,
  \pmb{\alpha}\in(0,1)^{\nr}.
\end{align}%
Taking the infimum over $\pmb{\alpha}$, we have 
$r_\text{B} \le  \frac{({\nr}-2)^++1}{2}.$

\subsection{Case~B2: Transmit phase noise}

In this case, the received signal is $\rvVec{Y} =
\pmb{H} (e^{j\rvVec{\Theta}_{\text{T}}} \circ
\rvVec{X}) + \rvVec{Z}$. The channel is characterized by the random matrix
$\rvMat{H} = \pmb{H} \diag\{e^{j\rvVec{\Theta}_\text{T}}\}$. We shall show that the upper
bound is $\min\bigl\{\frac{\nt}{2}, \frac{\nr}{2}\bigr\}$.
First, with more receive antennas than transmit antennas, i.e., when
$\nr\ge \nt$, we can inverse the channel without losing information,
\begin{align}
  I(\rvVec{X}; \rvVec{Y})
  &= I(\rvVec{X}; e^{j\rvVec{\Theta}_{\text{T}}}\circ \rvVec{X} +
  \pmb{H}^\dag \rvVec{Z} ) \\
  &\le I(\rvVec{X}; e^{j\rvVec{\Theta}_{\text{T}}}\circ \rvVec{X} +
  \tilde{\rvVec{Z}}),
  \label{eq:tmp4478} 
\end{align}%
where $\tilde{\rvVec{Z}} \sim \mathcal{CN}(0, \sigma^2_{\min}(\pmb{H}^\dag)
\Id_{\nt})$, with
$\sigma_{\min}(\pmb{H}^\dag)>0$ being the minimum singular value of
$\pmb{H}^\dag$. Note that
\eqref{eq:tmp4478} is maximized when $\rvVec{X}$ is circularly
symmetric with $\nt$ independent phases. To see this, we introduce a
vector of independent and identically distributed~(i.i.d.)~phases $\rvVec{\Phi}$ uniformly distributed in
$[0,2\pi)^{\nt}$ and show that, for any $\rvVec{X}$,
\begin{align}
  \MoveEqLeft[0.5]{I(e^{j\rvVec{\Phi}}\circ\rvVec{X};
  e^{j(\rvVec{\Theta}_{\text{T}}+\rvVec{\Phi})} \circ \rvVec{X} +
  \tilde{\rvVec{Z}}  ) } \nonumber \\
  &= h(e^{j(\rvVec{\Theta}_{\text{T}}+\rvVec{\Phi})} \circ
  \rvVec{X} + \tilde{\rvVec{Z}}) - 
  h(e^{j(\rvVec{\Theta}_{\text{T}}+\rvVec{\Phi})}\circ \rvVec{X}
  + \tilde{\rvVec{Z}}  \cond e^{j\rvVec{\Phi}}\circ\rvVec{X}) \\
   &\ge h(e^{j(\rvVec{\Theta}_{\text{T}}+\rvVec{\Phi})}\circ
  \rvVec{X} + \tilde{\rvVec{Z}}  \cond \rvVec{\Phi}) -
h(e^{j(\rvVec{\Theta}_{\text{T}}+\rvVec{\Phi})}\circ \rvVec{X}
  + \tilde{\rvVec{Z}}  \cond \rvVec{X}, \rvVec{\Phi}) \\
   &= h(e^{j\rvVec{\Theta}_{\text{T}}}\circ
  \rvVec{X} + \tilde{\rvVec{Z}}' ) -
h(e^{j\rvVec{\Theta}_{\text{T}}}\circ \rvVec{X}
  + \tilde{\rvVec{Z}}'  \cond \rvVec{X}) \\
  &= I(\rvVec{X}; e^{j\rvVec{\Theta}_{\text{T}}}\circ \rvVec{X} +
  \tilde{\rvVec{Z}}),
\end{align}%
where we use the fact that $\tilde{\rvVec{Z}}$ is circularly
symmetric, and has the same distribution as $\tilde{\rvVec{Z}}' \defeq
e^{-j\rvVec{\Phi}} \tilde{\rvVec{Z}}$. 
Therefore, to derive an upper bound, it is without loss of optimality to assuming that
$\rvVec{X}$ is circularly symmetric with $m$ independent phases. With
this assumption, we have 
\begin{align}
  \MoveEqLeft[0]{I(\rvVec{X}; e^{j\rvVec{\Theta}_{\text{T}}}\circ \rvVec{X} +
  \tilde{\rvVec{Z}})} \nonumber\\
  &= 
  I(|\rvVec{X}|; e^{j\rvVec{\Theta}_{\text{T}}}\circ \rvVec{X} +
  \tilde{\rvVec{Z}} ) + I(\angle_{\rvVec{X}};
  e^{j\rvVec{\Theta}_{\text{T}}}\circ \rvVec{X} + \tilde{\rvVec{Z}}  \cond |\rvVec{X}|) \\
  &\le I(|\rvVec{X}|;
  e^{j(\rvVec{\Theta}_{\text{T}}+\angle_{\rvVec{X}})}\circ
  |\rvVec{X}| + \tilde{\rvVec{Z}} ) +
  I(\angle_{\rvVec{X}}; e^{j\rvVec{\Theta}_{\text{T}}}\circ \rvVec{X}  \cond |\rvVec{X}|) \\
  &\le I(|\rvVec{X}|;
  e^{j(\rvVec{\Theta}_{\text{T}}+\angle_{\rvVec{X}})}\circ |\rvVec{X}|
  + \tilde{\rvVec{Z}}  \cond
  \rvVec{\Theta}_{\text{T}}+\angle_{\rvVec{X}}) \nonumber \\
  &\qquad + 
  I(\angle_{\rvVec{X}};
  (\rvVec{\Theta}_{\text{T}}+\angle_{\rvVec{X}})_{2\pi}) \\
  &= I(|\rvVec{X}|;  |\rvVec{X}| + \Re\bigl\{\tilde{\rvVec{Z}}''\bigr\}) +
  h( (\rvVec{\Theta}_{\text{T}}+\angle_{\rvVec{X}})_{2\pi})
  - h(\rvVec{\Theta}_{\text{T}}) \\
  &\le \frac{\nt}{2} \log^+\! P + \constH + \log2\pi -
  h(\rvVec{\Theta}_{\text{T}} ), \label{eq:tmp0930} 
\end{align}%
where the second inequality is obtain by providing
$\rvVec{\Theta}_{\text{T}}+\angle_{\rvVec{X}}$ to the output and the
independence between $\rvVec{\Theta}_{\text{T}}+\angle_{\rvVec{X}}$
and $|\rvVec{X}|$; the last inequality is from the capacity upper
bound for a real-value Gaussian channel, and the fact that
$(\rvVec{\Theta}_{\text{T}}+\angle_{\rvVec{X}})_{2\pi}$ is uniformly
distributed in $[0,2\pi)$; we define $\tilde{\rvVec{Z}}''\defeq
e^{-j(\rvVec{\Theta}_{\text{T}}+\angle_{\rvVec{X}})} \circ
\rvVec{Z} $. From \eqref{eq:tmp0930}, we get the
upper bound $\frac{\nt}{2}$ of the pre-log. 
 
In the following, we assume $\nr\le \nt$, and follow closely to the proof for model~A in
Section~\ref{sec:GAB}. We first apply a genie-aided bound, by providing the set of indices of the
$\nr$ strongest inputs to the receiver. This information, also denoted by
$U$, does not take more than $\log {\nt\choose{\nr}}$ bits. Then 
we also associate with each $U$ a canonical form $\rvMat{G}_{(U)} =
\pmb{H}_{{U}}^{-1}  \rvMat{H}$ 
where $\pmb{H}_{{U}}$ is the submatrix of $\Hm$ with the columns
corresponding to the $\nr$ strongest entries, while $\Hm_{\bar{U}}$
corresponds to the rest of the columns. It follows that
$\pmb{H}_{{U}}^{-1} \rvVec{Y} = \rvMat{G}_{(U)} \rvVec{X} + \pmb{H}_{{U}}^{-1}
\rvVec{Z}$, 
and
\begin{align}
  {I(\rvVec{X}; \rvMat{H} \rvVec{X} + \rvVec{Z} \cond U)} 
  &= I(\rvVec{X};\rvMat{G}_{(U)} \rvVec{X} + \pmb{H}_{{U}}^{-1} \rvVec{Z}  \cond U) \\
  &\le I(\rvVec{X};\rvMat{G}_{(U)} \rvVec{X} + a \rvVec{Z}
  \cond U) 
  \\
  &= I(a^{-1} \rvVec{X}; a^{-1} \rvMat{G}_{(U)} \rvVec{X} + \rvVec{Z} \cond U) \\
  &= I( \tilde{\rvVec{X}}; \rvMat{G}_{(U)}
  \tilde{\rvVec{X}} + \rvVec{Z} \cond U), 
\end{align}
where $a \defeq (\sigma_{\max}(\pmb{H}))^{-1}$; we define 
  $\tilde{\rvVec{X}} \defeq a^{-1} \rvVec{X}$ and  
accordingly,
\begin{align}
\rvVec{W} \defeq \rvMat{G}_{(U)}
\tilde{\rvVec{X}} + \rvVec{Z} =
e^{j\rvVec{\Theta}_{\text{T},{U}}}
\!\! \circ \tilde{\rvVec{X}}_{{U}} +
\pmb{H}_{{U}}^{-1} \pmb{H}_{\bar{U}}
(e^{j\rvVec{\Theta}_{\text{T},\bar{U}}} \circ
\tilde{\rvVec{X}}_{\bar{U}}) + \rvVec{Z}.
\end{align}%
The next step is to derive a lower bound on $h(\rvVec{W}\cond
\tilde{\rvVec{X}})$,
\begin{align}
  \MoveEqLeft{
  h(\rvVec{W}\cond \tilde{\rvVec{X}})} \nonumber \\
  &\ge  
   h(\rvVec{W}\cond \tilde{\rvVec{X}}, \rvVec{\Theta}_{\text{T},\bar{U}}) \\
   &= h(e^{j\rvVec{\Theta}_{\text{T},{U}}}
\circ \tilde{\rvVec{X}}_{{U}} + \rvVec{Z}\cond \tilde{\rvVec{X}},
\rvVec{\Theta}_{\text{T},\bar{U}}) \\
&\ge \sum_{k=1}^{\nr}  h(e^{j{\Theta}_{\text{T},k}}
 \tilde{{X}}_{k} + {Z}_k\cond \tilde{\rvVec{X}},
 \rvVec{\Theta}_{\text{T},\bar{U}},\Theta_{\text{T}}^{k-1},Z^{k-1})
 \\
 &\ge \sum_{k=1}^{\nr} \E \bigl[ \log^+\! |\tilde{X}_k| \bigr] + \const, \label{eq:tmp8222}
\end{align}%
where we assume that $U=\{1,\ldots,\nr\}$
for notational convenience, and the last inequality is from
Lemma~\ref{lemma:ejt}. 

Finally, we derive an upper bound on
$h(\rvVec{W}\cond U)$ via duality using the 
following auxiliary distribution on the output~$\rvVec{W}$, 
  \begin{align} 
    q(\wv) = {g_{\pmb{\alpha}}}\,\mu^{\alpha_1+\cdots+\alpha_{\nr}}
    e^{-\mu \|\wv\|^2} \prod_{i=1}^{\nr} |w_i|^{2(\alpha_i - 1 )}, 
    \label{eq:gamma}
  \end{align}%
  where $g_{\pmb{\alpha}}$ is the normalization factor which only
  depends on $\pmb{\alpha}$ and ${\nr}$. Essentially, we let each $W_i$
  be independent and circularly symmetric with the squared magnitude
  following a single-variate Gamma distribution with parameter~$(\mu,\alpha_i)$, as 
  defined in~\eqref{eq:sv-gamma} from Definition~\ref{def:mv-Gamma}.

  \begin{lemma}
    \label{lemma:logq2}
    By choosing $0<\alpha_i<1$ and $\mu = \min\{P^{-1},1\}$, we have
    \begin{align}
      \MoveEqLeft[0]{\E\bigl[ - \log q({\rvVec{W}})
      \bigr] } \nonumber \\
      &\le \sum_{i=1}^{\nr} \alpha_i \log^+\! P + \sum_{i=1}^{\nr}
  2(1-\alpha_i) \E \bigl[ \log^+\! |\tilde{X}_i| \bigr]  + \constH,
  \label{eq:Elogq2}
    \end{align}%
    where we assume that $|\tilde{X}_1|\ge\ldots\ge|\tilde{X}_{\nt}|$
    for notational convenience.
 \end{lemma}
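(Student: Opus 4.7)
The plan is to expand $-\log q(\wv)$ into its constituent terms using the explicit form in \eqref{eq:gamma}, and then bound each term in expectation. Writing
\begin{align}
-\log q(\rvVec{W}) &= -\log g_{\pmb{\alpha}} + \Bigl(\sum_{i=1}^{\nr}\alpha_i\Bigr)\log(1/\mu) \nonumber\\
&\quad + (\log e)\,\mu\|\rvVec{W}\|^2 + \sum_{i=1}^{\nr} 2(1-\alpha_i)\log|W_i|,
\end{align}
I have three substantive quantities to control: the $\log(1/\mu)$ coefficient, $\mu\E[\|\rvVec{W}\|^2]$, and $\E[\log|W_i|]$. The constant $-\log g_{\pmb\alpha}$ is absorbed into $\constH$. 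With the choice $\mu=\min\{P^{-1},1\}$, the term $(\sum\alpha_i)\log(1/\mu)$ is exactly $(\sum\alpha_i)\log^+\! P$, matching the leading term in \eqref{eq:Elogq2}.

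Second, I would bound $\E[\|\rvVec{W}\|^2]$ using the definition $\rvVec{W}=\rvMat{G}_{(U)}\tilde{\rvVec{X}}+\rvVec{Z}$. Since $\tilde{\rvVec{X}}=a^{-1}\rvVec{X}$ with $a^{-1}=\sigma_{\max}(\pmb{H})$, the operator norm of $\rvMat{G}_{(U)}=\pmb{H}_U^{-1}\rvMat{H}$ is bounded by a constant depending only on $\pmb{H}$, so $\E[\|\rvVec{W}\|^2]\le \constH\, P + \nr$. Multiplying by $\mu\le P^{-1}$ (for $P\ge 1$; the case $P<1$ is trivial), this whole term is bounded by a constant $\constH$.

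Third, and this is the key step, I must show that for each $i\in\{1,\dots,\nr\}$,
\begin{align}
\E[\log|W_i|] \le \E[\log^+\!|\tilde X_i|] + \constH.
\end{align}
After relabelling indices so that $U=\{1,\dots,\nr\}$ with $|\tilde X_1|\ge\dots\ge|\tilde X_{\nt}|$, the canonical form yields
\begin{align}
W_i = e^{j\Theta_{\text{T},i}}\tilde X_i + \sum_{k=\nr+1}^{\nt}\beta_{ik}\,e^{j\Theta_{\text{T},k}}\tilde X_k + Z_i,
\end{align}
where the $\beta_{ik}$ are entries of $\pmb{H}_U^{-1}\pmb{H}_{\bar U}$ and are $\constH$-bounded. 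Since every $k\ge \nr+1$ satisfies $|\tilde X_k|\le|\tilde X_i|$ by the genie's selection of the $\nr$ strongest components, the triangle inequality gives $|W_i|\le \constH\,|\tilde X_i|+|Z_i|$. Then, using $\log|W_i|\le\log^+\!|W_i|$ together with the standard inequality $\log^+\!(a+b)\le\log^+\! a+\log^+\! b+1$, I obtain
\begin{align}
\E[\log|W_i|] \le \E[\log^+\!|\tilde X_i|]+\E[\log^+\!|Z_i|]+\constH,
\end{align}
and $\E[\log^+\!|Z_i|]$ is a finite constant since $Z_i\sim\CN(0,1)$. Multiplying by $2(1-\alpha_i)>0$ and summing over $i$ gives the stated bound.

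The potential obstacle is ensuring that the dominating input on the right-hand side is $\tilde X_i$ rather than the globally strongest $\tilde X_1$; this is precisely what the canonical form $\rvMat{G}_{(U)}$ buys, because the $i$-th output picks up $\tilde X_i$ with unit coefficient while all other contributing inputs lie in $\bar U$ and are therefore no larger than $|\tilde X_i|$. Once this observation is in hand, the remainder is routine application of triangle-inequality and $\log^+$ manipulations, together with the $\mu$-choice already used in Lemma~\ref{lemma:logq}.
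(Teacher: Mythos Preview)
Your proposal is correct and follows essentially the same route as the paper: expand $-\log q(\rvVec{W})$ term by term, use $\mu=\min\{P^{-1},1\}$ to turn the $\log(1/\mu)$ and $\mu\|\rvVec{W}\|^2$ terms into $\sum_i\alpha_i\log^+\!P+\constH$, and then bound $\E[\log|W_i|]$ by exploiting the canonical form so that $W_i$ is dominated by $\tilde X_i$ plus a residual from $\bar U$ whose entries are all no larger than $|\tilde X_i|$. The only cosmetic difference is that the paper works with $|W_i|^2$ and applies Jensen's inequality over $Z_i$, whereas you work with $|W_i|$ and use $\log^+$ subadditivity; both arrive at the same estimate.
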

 \begin{proof}
   The following is straightforward from~\eqref{eq:gamma},
    \begin{align}
      \MoveEqLeft[0.5]{\E\bigl[ - \log q({\rvVec{W}})
      \bigr] } \nonumber \\
      &\le \sum_{i=1}^{\nr} \alpha_i \log \mu^{-1} + \mu \E\bigl[ \|\rvVec{W} \|^2 \bigr]
      \log e \nonumber \\
      &\qquad + \sum_{i=1}^{\nr} 2(1-\alpha_i) \E \bigl[ \log |{W}_i| \bigr]  + \const \\
      &\le \sum_{i=1}^{\nr} \alpha_i \log^+\! P  
      + \sum_{i=1}^{\nr} (1-\alpha_i) \E \bigl[ \log |{W}_i|^2 \bigr]  + \constH'.
      \label{eq:tmp7543}
    \end{align}%
    Note that $|W_i|^2 \le 2 |\tilde{X}_i|^2 +
    2\sigma_{\max}^2(\pmb{H}_U^{-1}\pmb{H}_{\bar{U}}) \| \tilde{\rvVec{X}}_{\bar{U}}\|^2
    + 2 |Z_i|^2$. From the definition of $U$, we have $\|
    \tilde{\rvVec{X}}_{\bar{U}}\|^2 \le (\nt-{\nr}) |\tilde{X}_i|^2$,
    $\forall\, i\in U$. Thus,
    $|W_i|^2  \le (2+|\constH''|) |\tilde{X}_i|^2  + 2 |Z_i|^2$. Applying Jensen's
    inequality on the expectation over $Z_i$, we get $\E \bigl[ \log |{W}_i|^2 \bigr]
    \le \E \bigl[ \log ( (2+|\constH''|) |\tilde{X}_i|^2  + 2 ) \bigr] 
    \le 2 \E \bigl[ \log^+\! |\tilde{X}_i| \bigr] + \constH'''$. Plugging it back to
    \eqref{eq:tmp7543}, we readily have \eqref{eq:Elogq2}. 
 \end{proof}

 Finally, putting together \eqref{eq:tmp8222} and \eqref{eq:Elogq2}, we obtain
\begin{align}
  \MoveEqLeft[0]{I( \tilde{\rvVec{X}}; {\rvVec{W}} \cond U)} \nonumber \\
  &\le \sum_{i=1}^{\nr} \alpha_i \log^+\! P +  \sum_{i=i}^{\nr} (1-2\alpha_i)  \E \bigl[ \log^+\!
|\tilde{X}_i| \bigr]  + \constH \label{eq:tmp9119} \\
&\le \sum_{i=1}^{\nr} \alpha_i \log^+\! P +  \sum_{i=i}^{\nr} (1-2\alpha_i)^+ \, \E \bigl[
\log^+\! |\tilde{X}_i| \bigr]  + \constH \label{eq:tmp9118} \\
&\le  \frac{1}{2} \sum_{i=1}^{\nr} \left(2\alpha_i+ (1-2\alpha_i)^+\right)
\log^+\! P  + \constH',
    \end{align}
    where, to obtain the last inequality, we apply \eqref{eq:Jensen} in Lemma~\ref{lemma:log+} with
    $p=2$, and the power constraint $\E \bigl[ |\tilde{X}_i|^2 \bigr] \le a^2 \E \bigl[
    \|\rvVec{X}\|^2 \bigr] \le a^2 P$. 
Therefore, the multiplexing gain is upper-bounded by 
\begin{align}
  \frac{1}{2} \sum_{i=1}^{\nr} \left(2\alpha_i+ (1-2\alpha_i)^+\right), \quad \forall\,
  \pmb{\alpha}\in(0,1)^{\nr}.
\end{align}%
Taking the infimum over $\pmb{\alpha}$, we get
$\frac{\nr}{2}$.

\subsection{Case~B3: Receive phase noise}

First  it is not hard to show the upper bound $\nt-\frac{1}{2}$. It is enough to provide the
$\nr-1$ relative angles, $\{\Theta_{\text{R},k} - \Theta_{\text{R},1}\}_{k=2\ldots \nr}$, to the
receiver. The channel is now equivalent to the case with common phase noise
$\Theta_{\text{R},1}$. Then  we can apply Proposition~\ref{prop:1}, since
\begin{align}
  \MoveEqLeft{h(\Theta_{\text{R},1} \cond \{\Theta_{\text{R},k} -
  \Theta_{\text{R},1}\}_{k=2\ldots \nr})}\nonumber \\
  &= 
  h(\rvVec{\Theta}_{\text{R}}) - h(\{\Theta_{\text{R},k} - \Theta_{\text{R},1}\}_{k=2\ldots
  \nr})
  \\
  &\ge h(\rvVec{\Theta}_{\text{R}}) - (\nr-1)\log(2\pi) \\
  &\ge -\infty.
\end{align}%
Next, we show the upper bound $\frac{\nr}{2}$. To that end, we write 
\begin{align}
  \MoveEqLeft[1]{I(\rvVec{X}; \rvVec{Y})} \nonumber \\
  &= I(\pmb{H}\rvVec{X}; |\rvVec{Y}|) + I(\pmb{H}\rvVec{X}; \angle_{\rvVec{Y}} \cond |\rvVec{Y}|) \\
  &= I(\pmb{H}\rvVec{X}; |\rvVec{Y}|) + I(\pmb{H}\rvVec{X};
  (\angle_{\pmb{H}\rvVec{X}+\rvVec{Z}'} +
  \rvVec{\Theta}_{\text{R}})_{2\pi} \cond |\rvVec{Y}|) \\
  &= I(\pmb{H}\rvVec{X}; |\rvVec{Y}|) + h(
  (\angle_{\pmb{H}\rvVec{X}+\rvVec{Z}'} + \rvVec{\Theta}_{\text{R}} )_{2\pi}
  \cond |\rvVec{Y}|) \nonumber \\
  & \qquad -  h( (\angle_{\pmb{H}\rvVec{X}+\rvVec{Z}'} + \rvVec{\Theta}_{\text{R}} )_{2\pi} \cond |\rvVec{Y}|, \pmb{H}\rvVec{X}) \\
  &\le I(\pmb{H}\rvVec{X}; |\rvVec{Y}|) + \nr\log(2\pi) \nonumber \\
  &\qquad -  h( (\angle_{\pmb{H}\rvVec{X}+\rvVec{Z}'} + \rvVec{\Theta}_{\text{R}})_{2\pi} \cond |\rvVec{Y}|, \pmb{H}\rvVec{X}, \angle_{\pmb{H}\rvVec{X}+\rvVec{Z}'}) \\
  &= I(\pmb{H}\rvVec{X}; |\rvVec{Y}|) + \nr\log(2\pi) -  h(  \rvVec{\Theta}_{\text{R}}),
\end{align}%
where we define $\rvVec{Z}'\defeq
e^{-j\rvVec{\Theta}_{\text{R}}} \circ \rvVec{Z}$ which is independent
of $\rvVec{\Theta}_{\text{R}}$ since $\rvVec{Z}$ is circularly
symmetric; the last equality follows since $\rvVec{Y} = e^{j\rvVec{\Theta}_{\text{R}}} \circ
(\pmb{H}\rvVec{X} + \rvVec{Z}')$ and thus $\rvVec{\Theta}_{\text{R}}$
is independent of $(|\rvVec{Y}|, \pmb{H}\rvVec{X} +
\rvVec{Z}', \pmb{H}\rvVec{X})$. 
It remains to show that $
I(\pmb{H}\rvVec{X}; |\rvVec{Y}|) \le \frac{\nr}{2} \log^+\! P + \constH$. To prove this, it
is enough to apply 
$h(|\rvVec{Y}|) \le \frac{\nr}{2} \log^+\! P + \constH$ and to use the fact that
$h(|\rvVec{Y}| \cond \pmb{H}\rvVec{X}) = \sum_{k=1}^{\nr} h(|Y_k| \cond \pmb{H}\rvVec{X})$
is lower-bounded by some constant according to \eqref{eq:beta} in Lemma~\ref{lemma:ejt}.

\section{Capacity Lower Bound for Model~B}
\label{sec:LB}

In this section, we derive a lower bound on the capacity of model~B. For simplicity, we
consider the class of memoryless Gaussian input distributions. 
Although the optimal input
distribution has been proved to be discrete in~\cite{Katz}, the use of a simple Gaussian
input provides tight lower bounds on the pre-log, which is enough for our purpose here. 
In the following, we only consider the
memoryless phase noise channel which can be shown to have a lower capacity than the
general stationary and ergodic channel when memoryless input is used. To see this, we
write   
\begin{align}
  I(\rvVec{X}^N; \rvVec{Y}^N) &= h(\rvVec{X}^N) - h(\rvVec{X}^N \cond \rvVec{Y}^N) \\
  &= \sum_{t=1}^{N} h(\rvVec{X}_t) - \sum_{t=1}^{N} h(\rvVec{X}_t \cond \rvVec{X}^{t-1}, \rvVec{Y}^N) \\
  &\ge \sum_{t=1}^{N} h(\rvVec{X}_t) - \sum_{t=1}^{N} h(\rvVec{X}_t \cond \rvVec{Y}_t) \\
  &= \sum_{t=1}^{N} I(\rvVec{X}_t; \rvVec{Y}_t) \\
  &= {N} I(\rvVec{X}; \rvVec{Y}). 
\end{align}%
Thus, we focus on the single-letter mutual information~$I(\rvVec{X};\rvVec{Y})$ in the rest of the
section.  As in the previous section, we investigate the three cases separately. 

\subsection{Case~B1: Transmit and receive phase noises}
In this case, we use all the inputs with equal power, i.e.,  
$\rvVec{X}\sim\mathcal{CN}(0,\frac{P}{\nt}\Id_{\nt})$. For convenience, let us rewrite the received signal as
\begin{align}
  \rvVec{Y} &= e^{j{\rvVec{\Theta}}_{\text{R}}} \circ \bigl( \pmb{H} (e^{j{\rvVec{\Theta}}_{\text{T}}} \circ \rvVec{X} ) \bigr) + \rvVec{Z} \\
  &= \sqrt{\frac{P}{\nt}} e^{j\tilde{\rvVec{\Theta}}_{\text{R}}} \circ \bigl( \pmb{H} (e^{j\tilde{\rvVec{\Theta}}_{\text{T}}} \circ \rvVec{X}_0 ) \bigr) + \rvVec{Z} \\
  &= \sqrt{\frac{P}{\nt}} e^{j\tilde{\rvVec{\Theta}}_{\text{R}}} \circ \hat{\rvVec{Y}} +
  \rvVec{Z} = \sqrt{\frac{P}{\nt}} \tilde{\rvVec{Y}} + \rvVec{Z},  
\end{align}%
where $\rvVec{X}_0\sim\mathcal{CN}(0,\Id_{\nt})$ is the normalized version of $\rvVec{X}$; $\tilde{\rvVec{\Theta}}_{\text{R}}\defeq {\rvVec{\Theta}}_{\text{R}} + \Theta_{\text{T},1}$
and $\tilde{\rvVec{\Theta}}_{\text{T}}\defeq {\rvVec{\Theta}}_{\text{T}} - \Theta_{\text{T},1}$. 
Note that $\tilde{\Theta}_{\text{T},1} = 0$ by definition and
$h(\tilde{\rvVec{\Theta}}_{\text{R}})>-\infty$. 
The mutual information of interest can be written as 
\begin{align}
  {I(\rvVec{X}; \rvVec{Y})} 
  &= I(\rvVec{X}, \tilde{\rvVec{\Theta}}_{\text{T}}; \rvVec{Y}) - I(\tilde{\rvVec{\Theta}}_{\text{T}}; \rvVec{Y} \cond \rvVec{X}) \\
  &= h(\rvVec{Y}) - h(\rvVec{Y} \cond \rvVec{X}, \tilde{\rvVec{\Theta}}_{\text{T}}) - I(\tilde{\rvVec{\Theta}}_{\text{T}}; \rvVec{Y} \cond
  \rvVec{X} ). \label{eq:tobound}
\end{align}%
First the following lemma, which provides a lower bound on $h(\rvVec{Y})$ in
\eqref{eq:tobound}, is crucial for the achievability proof. 
\begin{lemma}
\label{lemma:essential}
With receive phase noise such that $h(\rvVec{\Theta}_{\text{R}}) >
-\infty$, we have 
\begin{align}
  h(\rvVec{Y}) &\ge 
  \Bigl(\frac{\nr}{2} + \frac{1}{2} \min\{\nr, 2\nt-1\}\Bigr) \log^+\! P + \constH. 
  \label{eq:hY}
\end{align}%
\end{lemma}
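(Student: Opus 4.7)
The plan combines the magnitude-phase split of Lemma~\ref{lemma:h-squared}, namely $h(\rvVec{Y})=h(|\rvVec{Y}|^2)+h(\angle_{\rvVec{Y}}\mid|\rvVec{Y}|)-\nr$, with a conditional-entropy/mutual-information decomposition of the magnitude piece. To streamline, I would introduce $\rvVec{S}\defeq \sqrt{P/\nt}\,\hat{\rvVec{Y}}+\rvVec{Z}'$ with $\rvVec{Z}'\defeq e^{-j\tilde{\rvVec{\Theta}}_{\text{R}}}\circ\rvVec{Z}\sim\mathcal{CN}(0,\Id_{\nr})$, so that $\rvVec{Y}=e^{j\tilde{\rvVec{\Theta}}_{\text{R}}}\circ\rvVec{S}$ and $|\rvVec{Y}|=|\rvVec{S}|$.

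For the phase term, $|\rvVec{Y}|=|\rvVec{S}|$ is a function of $(\hat{\rvVec{Y}},\rvVec{Z}')$ and $\angle_{\rvVec{Y}}=(\tilde{\rvVec{\Theta}}_{\text{R}}+\angle_{\rvVec{S}})_{2\pi}$ is a mod-$2\pi$ translate of $\tilde{\rvVec{\Theta}}_{\text{R}}$ by a function of $(\hat{\rvVec{Y}},\rvVec{Z}')$; since $\tilde{\rvVec{\Theta}}_{\text{R}}$ is independent of $(\hat{\rvVec{Y}},\rvVec{Z}')$, conditioning reduces entropy gives $h(\angle_{\rvVec{Y}}\mid|\rvVec{Y}|)\ge h(\angle_{\rvVec{Y}}\mid\hat{\rvVec{Y}},\rvVec{Z}')= h(\tilde{\rvVec{\Theta}}_{\text{R}})\ge h(\rvVec{\Theta}_{\text{R}})>-\infty$, a finite constant.

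For the magnitude term I split $h(|\rvVec{S}|^2)=h(|\rvVec{S}|^2\mid\hat{\rvVec{Y}})+I(|\rvVec{S}|^2;|\hat{\rvVec{Y}}|^2)$, where the mutual information involves $|\hat{\rvVec{Y}}|^2$ only because the conditional law of $|S_k|^2$ given $\hat{\rvVec{Y}}$ depends on $\hat{\rvVec{Y}}$ solely through $|\hat{Y}_k|$. The conditional entropy decouples across $k$; using the change of variables $h(|S_k|^2\mid\hat{Y}_k)=h(|S_k|\mid\hat{Y}_k)+\log 2+\E\log|S_k|$ together with Lemma~\ref{lemma:ejt} (the bound $|h(|\beta+Z|)|\le\const$ in \eqref{eq:beta} and the estimate $\E\log|\beta+Z|=\log^+\!|\beta|+\const$ established in its proof), each term is at least $\tfrac{1}{2}\log^+\!P+\constH$, yielding $h(|\rvVec{S}|^2\mid\hat{\rvVec{Y}})\ge \tfrac{\nr}{2}\log^+\!P+\constH$. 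For the mutual information I apply the monotonicity $I(X;Y)\ge I(X_{\Ic};Y_{\Ic})$ with $\Ic\subset\{1,\ldots,\nr\}$ of cardinality $d\defeq\min\{\nr,2\nt-1\}$. The overall-phase invariance $\rvVec{\zeta}\to e^{j\alpha}\rvVec{\zeta}$ of the map $\rvVec{\zeta}\mapsto|\pmb{H}\rvVec{\zeta}|^2$ caps its image dimension at $2\nt-1$, so for generic $\pmb{H}$ one can pick $\Ic$ on which the restricted squared-magnitude map has full-rank Jacobian, giving a finite $h(|\hat{\rvVec{Y}}_{\Ic}|^2)=\constH$ via Lemma~\ref{lemma:h-prod}. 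Chain rule and conditioning then give $h(|\hat{\rvVec{Y}}_{\Ic}|^2\mid|\rvVec{S}_{\Ic}|^2)\le\sum_{k\in\Ic}h(|\hat{Y}_k|^2\mid|S_k|^2)$; writing $|\hat{Y}_k|^2=(\nt/P)|S_k|^2-R_k$ with residual $R_k=(\nt/P)|Z'_k|^2+2\sqrt{\nt/P}\,\Re\{\hat{Y}_k\bar{Z}'_k\}$ of variance $O_H(P^{-1})$, the Gaussian maximum-entropy inequality gives $h(|\hat{Y}_k|^2\mid|S_k|^2)\le h(R_k)\le -\tfrac{1}{2}\log P+\constH$. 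Thus $I(|\rvVec{S}|^2;|\hat{\rvVec{Y}}|^2)\ge\tfrac{d}{2}\log^+\!P+\constH$, and combining the three contributions yields the claim.

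The hard part is the regime $\nr>2\nt-1$, in which $|\hat{\rvVec{Y}}|^2$ concentrates on a $(2\nt-1)$-dimensional submanifold of $\mathbb{R}^{\nr}$ and thus has no joint density: the subset restriction and the genericity of $\pmb{H}$ are needed precisely to make $h(|\hat{\rvVec{Y}}_{\Ic}|^2)$ finite on the chosen $d$-subset so that the mutual-information term contributes $\tfrac{d}{2}\log^+\!P$ on top of the $\tfrac{\nr}{2}\log^+\!P$ obtained from the conditional-entropy step. A secondary technical point is verifying the $O_H(P^{-1})$ variance bound on the residual $R_k$ uniformly enough for the Gaussian max-entropy inequality to close the gap without loss.
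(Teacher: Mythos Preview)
Your high-level decomposition is a clean alternative to the paper's. The paper conditions on $\rvVec{Z}$ to get $h(\rvVec{Y})\ge \nr\log P + h(\tilde{\rvVec{Y}})+\const$ and then proves $h(\tilde{\rvVec{Y}})>-\infty$ only in the critical case $\nr=2\nt-1$, reducing the other cases to it by adding or dropping receive antennas. Your split $h(|\rvVec{S}|^2)=h(|\rvVec{S}|^2\mid\hat{\rvVec{Y}})+I(|\rvVec{S}|^2;|\hat{\rvVec{Y}}|^2)$ isolates the $\tfrac{\nr}{2}$ and $\tfrac{d}{2}$ contributions directly and uniformly in $\nr$, which is arguably more transparent than the paper's case analysis. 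The phase-term bound, the conditional-entropy estimate via \eqref{eq:beta}, and the $O_H(P^{-1})$ residual-variance bound all check out.

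The gap is exactly the step you flag as hard, and your justification does not close it. Saying ``full-rank Jacobian gives finite $h(|\hat{\rvVec{Y}}_{\Ic}|^2)$ via Lemma~\ref{lemma:h-prod}'' is insufficient on two counts. First, Lemma~\ref{lemma:h-prod} requires a bijection; the map $\rvVec{\zeta}\mapsto|\pmb{H}_{\Ic}\rvVec{\zeta}|^2$ is only a local diffeomorphism (when $|\Ic|=2\nt-1$), so you must partition the domain---this costs at most $H(\Omega)<\infty$ and is not the real obstacle. Second, and more importantly, the change-of-variables formula then gives $h(|\hat{\rvVec{Y}}_{\Ic}|^2)\ge h(\hat{Y}^{\nt-1}\mid\hat{Y}_d,\Omega)+\E\bigl[\log|\det\rvMat{J}|\bigr]$, and a generically full-rank Jacobian does \emph{not} by itself guarantee $\E\bigl[\log|\det\rvMat{J}|\bigr]>-\infty$; the Jacobian degenerates on a positive-codimension set and the logarithm may fail to be integrable. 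This integrability is where almost all of the paper's work goes: it expresses $\det\rvMat{J}$ as a polynomial in one real Gaussian coordinate $\hat{Y}'_{n,\text{R}}$ independent of the rest, applies Lemma~\ref{lemma:chi2} to lower-bound $\E\bigl[\log|\hat{Y}'_{n,\text{R}}-\Lambda_t|\bigr]$ by $\E\bigl[\log|\hat{Y}'_{n,\text{R}}|\bigr]$, and then recurses on the remaining determinant factor $\det(\rvMat{N}_{\text{I}})$. Your sketch needs this argument, or an equivalent, to be complete. A minor side point: the claimed independence of $\tilde{\rvVec{\Theta}}_{\text{R}}$ from $\hat{\rvVec{Y}}$ can fail in model~B1 since both involve $\Theta_{\text{T},1}$; fix this by conditioning further on $\rvVec{\Theta}_{\text{T}}$ and using $h(\rvVec{\Theta}_{\text{R}}\mid\rvVec{\Theta}_{\text{T}})>-\infty$.
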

\begin{proof}
  See Appendix~\ref{app:essential}. 
\end{proof}
Next, we derive upper bounds on the two negative terms in \eqref{eq:tobound} as follows. The conditional differential entropy can be upper-bounded as
\begin{align}
  {h(\rvVec{Y} \cond \rvVec{X}, \tilde{\rvVec{\Theta}}_{\text{T}}) }
  &\le \sum_{k=1}^{\nr} h(Y_k \cond \rvVec{X}, \tilde{\rvVec{\Theta}}_{\text{T}}) \label{eq:tmp120}\\
  &\le \sum_{k=1}^{\nr} \mathbb{E} \left[ \log^+\! \biggl| \sqrt{\frac{P}{\nt}}\pmb{h}_{k}^\T
   (e^{j\tilde{\rvVec{\Theta}}_{\text{T}}} \circ \rvVec{X}_0 ) \biggr| \right] + \const \\ 
   &\le \frac{\nr}{2} \log^+\! P + \constH, \label{eq:tmp914}
\end{align}
where  the second inequality is due to Lemma~\ref{lemma:ejt} and the third inequality is from
 Lemma~\ref{lemma:log+} and the power constraint $\E[\bigl| \pmb{h}_{k}^\T (e^{j\tilde{\rvVec{\Theta}}_{\text{T}}} \circ \rvVec{X}_0 ) \bigr|^2] =
 \nt^{-1} \|\pmb{h}_{k}\|^2 P$.
And 
\begin{align}
  {I(\tilde{\rvVec{\Theta}}_{\text{T}}; \rvVec{Y} \cond \rvVec{X})} 
  &\le I(\tilde{\rvVec{\Theta}}_{\text{T}}; \rvVec{Y}, \tilde{\rvVec{\Theta}}_{R} \cond \rvVec{X}) \label{eq:tmp121}\\
  &= I(\tilde{\rvVec{\Theta}}_{\text{T}}; \rvVec{Y} \cond
  \rvVec{X}, \tilde{\rvVec{\Theta}}_{R}) + I(\tilde{\rvVec{\Theta}}_{\text{T}}; \tilde{\rvVec{\Theta}}_{R}) \label{eq:tmp1212}\\
  &= I(\tilde{\rvVec{\Theta}}_{\text{T}}; e^{j{\tilde{\rvVec{\Theta}}}_{\text{T}}} \circ \rvVec{X} +
  \pmb{H}^\dag \rvVec{Z} \cond \rvVec{X}, \tilde{\rvVec{\Theta}}_{R}) + \const\\
  &\le I(\tilde{\rvVec{\Theta}}_{\text{T}}; e^{j{\tilde{\rvVec{\Theta}}}_{\text{T}}} \circ \rvVec{X} + \tilde{\rvVec{Z}} \cond \rvVec{X}, \tilde{\rvVec{\Theta}}_{R}) + \const\\
  &= h(e^{j{\tilde{\rvVec{\Theta}}}_{\text{T}}} \circ \rvVec{X} + \tilde{\rvVec{Z}} \cond
  \rvVec{X}, \tilde{\rvVec{\Theta}}_{R}) \nonumber \\
  &\quad - h(e^{j{\tilde{\rvVec{\Theta}}}_{\text{T}}} \circ \rvVec{X} + \tilde{\rvVec{Z}} \cond \rvVec{X}, \tilde{\rvVec{\Theta}}_{R}, \tilde{\rvVec{\Theta}}_{\text{T}}) + \const\\
   &\le \frac{\nt-1}{2} \log^+\! P + \constH, \label{eq:tmp915}
\end{align}%
where $\tilde{\rvVec{Z}} \sim \mathcal{CN}(0, \sigma^2_{\min}(\pmb{H}^\dag)
\Id_{\nt})$, with $\sigma_{\min}(\pmb{H}^\dag)$ being the minimum singular value of $\pmb{H}^\dag$; to obtain
the last inequality, we use the fact that $\tilde{\Theta}_{\text{T},1}=0$ and apply
Lemma~\ref{lemma:ejt} then Lemma~\ref{lemma:log+} for the rest of the $\nt-1$ entries of
$\tilde{\rvVec{\Theta}}_{\text{T}}$.

Plugging \eqref{eq:tmp914}, \eqref{eq:tmp915}, and \eqref{eq:hY} into \eqref{eq:tobound}, we obtain
\begin{align}
  I(\rvVec{X}; \rvVec{Y}) &\ge \frac{1}{2} \min\{ {\nr-\nt+1}, {\nt} \} \log^+\! P + \constH. 
\end{align}%
Note that the above lower bound holds when we substitute $\nt$ by any $\nt'\le \nt$, i.e., by activating
only $\nt'$ transmit antennas. It is clear that when $\nr-\nt+1\ge \nt$, i.e., $\nr\ge2\nt-1$, we should let
$\nt'=\nt$. Otherwise, we should decrease $\nt'$ to balance between $\nr-\nt'+1$ and $\nt'$, which gives
$\nt'=\lfloor \frac{\nr+1}{2} \rfloor$. This completes the proof of the lower bound for model~B1.

\subsection{Case~B2: Transmit phase noise}

In this case, we use $\nt'\defeq \min\{\nt,\nr\}$ input antennas and deactivate the remaining
ones. The active inputs, denoted by $\rvVec{X}'$, are i.i.d.~Gaussian, i.e.,
$\rvVec{X}'\sim\mathcal{CN}(0,\frac{P}{\nt'}\Id_{\nt'})$. We rewrite the output vector as
$\rvVec{Y} = \pmb{H}' (e^{j\rvVec{\Theta}'_{\text{T}}} \circ \rvVec{X}') +
\rvVec{Z}$ where $\pmb{H}'\in\mathbb{C}^{\nr\times{\nt'}}$ is the submatrix of
$\pmb{H}$ corresponding to the active inputs, and $\rvVec{\Theta}'_{\text{T}}$ is similarly defined. 
It follows that $I(\rvVec{X}'; \rvVec{Y}) =
I(\rvVec{X}'; (\pmb{H}')^{\dag} \rvVec{Y})$. Then  we have $h( (\pmb{H}')^{\dag} \rvVec{Y}) =
h(e^{j\rvVec{\Theta}'_{\text{T}}} \circ \rvVec{X}' + (\pmb{H}')^{\dag}\rvVec{Z}) = \nt' \log^+\! P
+ \constH$ and $h( (\pmb{H}')^{\dag} \rvVec{Y} \cond \rvVec{X}') = h(e^{j\rvVec{\Theta}'_{\text{T}}}
\circ \rvVec{X}' + (\pmb{H}')^{\dag}\rvVec{Z} \cond \rvVec{X}') \le h(e^{j\rvVec{\Theta}'_{\text{T}}}
\circ \rvVec{X}' + \sigma_{\max}\bigl((\pmb{H}')^{\dag}\bigr)\rvVec{Z} \cond \rvVec{X}')$. The latter
is further upper-bounded by 
$\sum_{k=1}^{\nt'} \E \bigl[ \log^+\! |X_k| \bigr] + \constH \le \frac{\nt'}{2} \log^+\! P + \constH'$
according to \eqref{eq:Theta} in Lemma~\ref{lemma:ejt} and \eqref{eq:Jensen} in Lemma~\ref{lemma:log+}. This shows the lower
bound $\frac{1}{2} \min\{\nt,\nr\}$ on the multiplexing gain. 

\subsection{Case~B3: Receive phase noise}

As in Case~B1, we let $\rvVec{X}\sim\mathcal{CN}(0,\frac{P}{\nt}\Id_{\nt})$.
First $h(\rvVec{Y})$ is lower-bounded in Lemma~\ref{lemma:essential}.
Next, it readily follows from \eqref{eq:tmp120} to \eqref{eq:tmp914} that 
\begin{align}
  {h(\rvVec{Y} \cond \rvVec{X}) }
   &\le \frac{\nr}{2} \log^+\! P + \constH, \label{eq:tmp9914}
\end{align}%
since we are in the same situation as in Case~B1 when $\tilde{\rvVec{\Theta}}_{\text{T}}$ is known.  
Finally, combining \eqref{eq:hY} and \eqref{eq:tmp9914}, we obtain a lower bound on the mutual information 
\begin{align}
I(\rvVec{X}; \rvVec{Y}) &= h(\rvVec{Y}) - h(\rvVec{Y} \cond \rvVec{X}) \\
   &\ge \frac{1}{2} \min\{\nr, 2\nt-1\} \log^+\! P + \constH 
\end{align}%
which provides the desired multiplexing gain.

\section{Conclusions and Discussions}
\label{sec:conclusion}

In this work, we investigated the discrete-time stationary and ergodic $\nr\times\nt$ MIMO phase noise
channel. We characterized the exact multiplexing gain when phase noises are on the
individual paths and when phase noises are at either side of the channel. With both transmit and
receive phase noises, upper and lower bounds have been derived. In
particular, the upper bound results in this paper have been obtained via
the duality using a newly introduced multi-variate Gamma distribution.

For model~B1, the upper and lower bounds derived in this paper do not match for
$\nr\in[4:2\nt-2]$. 
We conjecture that the upper
bound $\frac{1}{2}\min\left\{ \nt, \nr-1 \right\}$ is indeed loose. Let us recall that the upper bound
is obtained by lower-bounding $h(\rvVec{W}\cond\tilde{\rvVec{X}})$ with \eqref{eq:LB-B}, and by
upper-bounding $\E \bigl[-\log q(\rvVec{W})\bigr]$ with $q(\wv)$
being the multi-variate Gamma distribution. We
believe that both bounds are loose for model~B1 in general. First  we can show that 
\begin{multline}
   h({\rvVec{W}} \cond \tilde{\rvVec{X}}) \le \nr \,\E \bigl[\log^+\!| \tilde{X}_U|\bigr] + \E \bigl[ \log^+ \!
   |\tilde{X}_V| \bigr] \\ 
   + \sum_{k\not\in\{ U, V\}} \E \bigl[ \log^+ \! |\tilde{X}_k| \bigr] + \constH. \label{eq:UB-B}
\end{multline}%
To see this, we can first write $h({\rvVec{W}} \cond \tilde{\rvVec{X}}) = h({\rvVec{W}} \cond
\tilde{\rvVec{X}}, \tilde{\rvVec{\Theta}}_{\text{T}}) + I(\tilde{\rvVec{\Theta}}_{\text{T}}; {\rvVec{W}} \cond \tilde{\rvVec{X}})$, 
then upper-bound the first term with $\nr \,\E \bigl[ \log^+\! |\tilde{X}_U| \bigr] + \constH'$ by following
closely the steps as in \eqref{eq:tmp120}-\eqref{eq:tmp914}, and upper-bound the second term
with $\sum_{k\ne U} \E \bigl[ \log^+ \! |\tilde{X}_k| \bigr] + \constH''$ by following closely the steps as in
\eqref{eq:tmp121}-\eqref{eq:tmp915}. As compared to the lower bound
\eqref{eq:LB-B}, the upper bound \eqref{eq:UB-B} differs only in the terms involving $\tilde{X}_k$,
$k\not\in\left\{ U,V \right\}$. In the following, we argue that even if the lower bound
$h({\rvVec{W}} \cond \tilde{\rvVec{X}})$ was the RHS of
\eqref{eq:UB-B} -- which is the largest
that one could get as lower bound since it is also an upper bound -- we still would not be able to tighten the
multiplexing gain upper bound $\frac{1}{2}\min\left\{\nr-1 \right\}$ with the same choice of
auxiliary distribution $q(\wv)$. In other words, for the given $q(\wv)$, \eqref{eq:LB-B} is tight
enough with respect to the upper bound on  $\E \bigl[ -\log q(\rvVec{W})\} \bigr]  - h(\rvVec{W}\cond\tilde{\rvVec{X}})$. 
To prove this, it is enough to observe that $\E \bigl[ -\log q(\rvVec{W}) \bigr]$ does not involve any
terms of $\tilde{\rvVec{X}}$ other than $\tilde{X}_U$ and $\tilde{X}_V$ in such a way to change
the high SNR behavior, whereas $h(\rvVec{W}\cond\tilde{\rvVec{X}})$ is increasing with the
strength of each $\tilde{X}_k$. Therefore, the maximization of $\E \bigl[ -\log q(\rvVec{W})\bigr] -
h(\rvVec{W}\cond\tilde{\rvVec{X}})$ over $\tilde{\rvVec{X}}$ will always put all $\tilde{X}_k$,
$k\not\in\left\{ U,V \right\}$, to zero, even if $h(\rvVec{W}\cond\tilde{\rvVec{X}})$ hits the
highest value \eqref{eq:UB-B}.     
To sum up, if the current upper bound $\frac{1}{2}\min\left\{ \nt, \nr-1 \right\}$ was indeed loose as
we conjecture, one would have to first find a new auxiliary
distribution $q(\wv)$ in order to get a tighter upper bound. In particular, the new auxiliary
distribution should be such that $\E \bigl[ -\log q(\rvVec{W})\bigr]$ depends on $\tilde{X}_k$,
$k\not\in\left\{ U,V \right\}$ at high SNR in a non-trivial way. With such a distribution, the
second challenge is to find a lower abound on $h({\rvVec{W}} \cond \tilde{\rvVec{X}})$ that also
depends on $\tilde{X}_k$, $k\not\in\left\{ U,V \right\}$,  in a non-trivial way.
In fact, we conjecture that  \eqref{eq:UB-B} holds with equality.

For model~B2 and B3, the results have the following alternative chain rule interpretation.
With transmit phase noise~(model~B2), the mutual information can be written as
$I(\rvVec{X}; \rvVec{Y}) =  I(\rvVec{X},\rvVec{\Theta}_{\text{T}}; \rvVec{Y} ) - I(
\rvVec{\Theta}_{\text{T}}; \rvVec{Y} \cond \rvVec{X})$, where the first term scales as
$\min\left\{ \nt, \nr \right\} \log P$ as if the phase noise were part of the transmitted signal
whereas the second part scales as $\frac{1}{2}\min\left\{ \nt, \nr \right\} \log P$ as if
$\rvVec{\Theta}$ were the input with a fixed distribution and $\rvVec{X}$ were the
``fading'' known at the receiver side. With receive phase noise~(model~B), the mutual
information can be written differently as $I(\rvVec{X}; \rvVec{Y}) =  I(\rvVec{X};
\rvVec{Y} \cond \rvVec{\Theta}_{\text{R}}) - I( \rvVec{\Theta}_{\text{R}}; \rvVec{X} \cond
\rvVec{Y})$. Here the first term corresponds to the rate when the phase noise is known,
while the second term can be considered as the rate of a ``reverse'' channel with input
$\rvVec{\Theta}_\text{R}$, output $\rvVec{X}$, and known fading $\rvVec{Y}$. In both cases,
the original problem of characterizing $I(\rvVec{X}; \rvVec{Y})$ boils down to subproblems
involving channels without phase noise~(i.e., $I(\rvVec{X},\rvVec{\Theta}_{\text{T}}$ and
$I(\rvVec{X}; \rvVec{Y} \cond \rvVec{\Theta}_{\text{R}})$) and communications with fixed
phase signaling~(i.e., $I(\rvVec{\Theta}_{\text{T}}; \rvVec{Y} \cond \rvVec{X})$ and $I(
\rvVec{\Theta}_{\text{R}}; \rvVec{X} \cond \rvVec{Y})$).

There are a few  interesting future directions. First, it is possible to
extend the results to multi-user channels and study the impact of phase
noise to such systems. Second, the lower bound on model~B1 suggests the
following dimension counting argument: one can recover $\nt$ real
information with $2\nt-1$ real observations, since the remaining $\nt-1$
dimensions are occupied by the $\nt-1$ relative phase noises. How to
design decoding algorithms that ``solve'' efficiently the $2\nt-1$
\emph{non-linear} equations is a question of both theoretical and practical importance.
Finally, a more refined analysis should lead to tighter upper and lower
bounds on the capacity, beyond the pre-log characterization.


\appendix

\subsection{Proof of Proposition~\ref{prop:1}} \label{app:prop1}
  With common phase noise, we can perform unitary precoding without losing information, and
  the channel is equivalent to a parallel channel with common phase noise $\rvVec{Y}_t =
  e^{j\Theta_t} \pmb{\Sigma}\, \pmb{x}_t + \rvVec{Z}_t = e^{j\Theta_t} \pmb{\sigma} \circ
  \pmb{x}_t + \rvVec{Z}_t$,
  where $\pmb{\Sigma}$ is a diagonal matrix with the $\min\{\nt,\nr\}$ non-zero singular values of
  the matrix $\pmb{H}$ and $\pmb{\sigma}$ is a vector of these elements. From \cite{Durisi-capa}, we know that the multiplexing gain of
  a $M\times M$ channel is upper-bounded by $M - \frac{1}{2}$. This upper
  bound applies here with $M=\min\{\nt,\nr\}$. The lower bound is achieved by using the
  Gaussian memoryless input 
  $\rvVec{X}_t\sim\mathcal{CN}(0,\frac{P}{\nt}\Id_{\nt})$, from which we have
  $I(\rvVec{X}^N;\rvVec{Y}^N) = h(\rvVec{Y}^N) - h(\rvVec{Y}^N \cond \rvVec{X}^N)$ with
  $h(\rvVec{Y}^N) = N h(\rvVec{Y}) = N \min\{\nt,\nr\} \log^+\! P + N \constH$ and $h(\rvVec{Y}^N \cond
  \rvVec{X}^N) \le N h(\rvVec{Y} \cond \rvVec{X}) = N h(e^{j\Theta} \pmb{\sigma}\circ
  \rvVec{X} + \rvVec{Z} \cond \rvVec{X})$. Applying a unitary transformation on $e^{j\Theta}
  \pmb{\sigma}\circ \rvVec{X} + \rvVec{Z} $, we obtain
  $N h(e^{j\Theta} \pmb{\sigma}\circ \rvVec{X} + \rvVec{Z} \cond \rvVec{X})= N h(e^{j\Theta}
  \|\pmb{\sigma}\circ \rvVec{X}\| + {Z}'_1 \cond \rvVec{X}) + N \sum_{k=2}^M h(Z'_k) \le N \E \bigl[
  \log^+\! \|\pmb{\sigma}\circ \rvVec{X}\| \bigr] + N \constH' \le \frac{N}{2} \log^+\! P + N \constH'$ where $\rvVec{Z}'$ is the rotated version of $\rvVec{Z}$ and remains spatially white, the first inequality is from
  Lemma~\ref{lemma:ejt} and the second one is from Lemma~\ref{lemma:log+}. Finally, we have
  $\frac{1}{N} I(\rvVec{X}^N;\rvVec{Y}^N) \ge \bigl(\min\{\nt,\nr\}-\frac{1}{2}\bigr) \log^+\! P + \constH''$, which completes the proof. 

\subsection{Proof of Lemma~\ref{lemma:LB-A} and \ref{lemma:LB-B}}
\label{app:LB}

In the following we shall derive the lower bounds~\eqref{eq:LB-A} and \eqref{eq:LB-B} on the
conditional differential entropy $h(\rvVec{W} \cond \tilde{\rvVec{X}} )$ for model~A and
model~B1, respectively. 

First we shall show that, for both models, 
\begin{align}
  h(W_i \cond \tilde{\rvVec{X}}) &\ge \E \bigl[ \log^+\!
  |\tilde{X}_U| \bigr]
  + \E \bigl[ \log^+\! |\tilde{X}_V| \bigr] + \constH. \label{eq:tmp544}
\end{align}%
To that end, we analyze
$h(W_i \cond \tilde{\rvVec{X}}=\pmb{x})$ with $|x_1|>
|x_2|> \cdots > |x_{\nt}| \ge 0$ without loss of generality,
i.e., we assume that $U=1$ and $V=2$. A lower bound of $h(W_i \cond
\tilde{\rvVec{X}}=\pmb{x})$ can be obtained by considering
the following cases separately. 
\begin{itemize}
  \item When $|x_1|\le 1$,
\begin{align}
  h(W_i \cond \tilde{\rvVec{X}} = \pmb{x}) &\ge h(W_i
  \cond \tilde{\rvVec{X}} = \pmb{x}, \rvMat{\Theta}) \\ 
  &= h(Z_i) \\
  &= \log(\pi e).
\end{align}%
  \item When $|x_1|\ge 1$ and $|x_2|\le 1$,
\begin{align}
  \MoveEqLeft
 {h(W_i \cond \tilde{\rvVec{X}} = \pmb{x})} \nonumber \\
 &\ge h(W_i \cond \tilde{\rvVec{X}} = \pmb{x}, \Theta_{i,2}, \ldots, \Theta_{i,\nt}) \\ 
  &= h(g_{i1} e^{j{\Theta_{i,1}}} x_1 + Z_i \cond \Theta_{i,2}, \ldots, \Theta_{i,\nt}) \\
  &\ge \log^+\! |g_{i1}x_1| + \const \\
  &\ge \log^+\! |x_1| + \constH,
\end{align}%
where 
$g_{i1} e^{j\Theta_{i,1}}$ is from the matrix
$\rvMat{G}_{(1)}$ defined in \eqref{eq:canonical}
since $U=1$ by assumption; the second inequality is from
Lemma~\ref{lemma:ejt} and the third inequality is from Lemma~\ref{lemma:log+}.
\item When $|x_1|\ge 1$ and $|x_2|\ge 1$,
    \begin{align}
      \MoveEqLeft[1]{h(W_i \cond \tilde{\rvVec{X}} = \pmb{x})} \nonumber \\
      &\ge {h(W_i \cond \tilde{\rvVec{X}} = \pmb{x}, \Theta_{i,3}, \ldots, \Theta_{i,\nt}, Z_i)}  \\
      &= h(e^{j\Theta_{i,1}} g_{i1} x_1 + e^{j\Theta_{i,2}} g_{i2} x_2 \cond \Theta_{i,3},
      \ldots, \Theta_{i,\nt}) \\
      &\ge h(e^{j\Theta_{i,1}} g_{i1} x_1 + e^{j\Theta_{i,2}} g_{i2} x_2 \cond \Theta_{i,3}, \ldots,
      \Theta_{i,\nt}, \Omega) \\
      &= \E \log(|g_{i1}g_{i2}x_1 x_2 \sin(\Theta_{i,1}-\Theta_{i,2} + \phi)|) \nonumber \\
      &\qquad + h(\Theta_{i,1},\Theta_{i,2} \cond \Theta_{i,3}, \ldots, \Theta_{i,\nt}, \Omega)
      \label{eq:tmp888} \\
      &\ge \log |x_1|  + \log |x_2|  \nonumber \\
      &\qquad + \E \log(|\sin(\Theta_{i,1}-\Theta_{i,2} + \phi)|) +
      \constH \\
      &\ge \log^+\! |x_1|  + \log^+\! |x_2|  + \constH',
    \end{align}%
    where the first inequality is from conditioning reduces entropy; we partition $[0,2\pi)^2$
    in such a way that $e^{j\Theta_{i,1}} g_{i1} x_1 + e^{j\Theta_{i,2}} g_{i2} x_2$ is a
    bijective function of $(\Theta_{i,1},\Theta_{i,2})$ in each partition indexed by $\Omega$
    which takes a finite number of values; then we applied the change of variables from
    Lemma~\ref{lemma:h-prod} and obtain \eqref{eq:tmp888} with $\phi \defeq \angle_{g_{i1}x_1} -
    \angle_{g_{i2}x_2}$; finally, we use the fact that $|g_{i1} g_{i2}|$ is bounded for almost every
    $\pmb{H}$ and the application of Lemma~\ref{lemma:Elogsin} to get the last inequality. Note that
    $\log|x_k| = \log^+\! |x_k|$ for $k=1,2$ by assumption. 
\end{itemize}
Combining the three cases above and taking expectation
over $\tilde{\rvVec{X}}$, we get \eqref{eq:tmp544}.

\subsubsection{Proof of the lower bound \protect\eqref{eq:LB-A} for model~A}
For model~A, we have 
\begin{align}
  {h(\rvVec{W} \cond \tilde{\rvVec{X}})} 
  &= \sum_{i=1}^{\nr} h(W_i \cond \tilde{\rvVec{X}}, W^{i-1}) \\
  &\ge \sum_{i=1}^{\nr} h(W_i \cond \tilde{\rvVec{X}}, W^{i-1},
  \{\Theta_{l,1},\ldots,\Theta_{l,\nt}\}_{l<i}) \\
  &= \sum_{i=1}^{\nr} h(W_i \cond \tilde{\rvVec{X}}, \{\Theta_{l,1},\ldots,\Theta_{l,\nt}\}_{l<i})
  \label{eq:tmp920}\\
  &\ge {\nr} \E \bigl[ \log^+\! |\tilde{X}_U| \bigr] + {\nr} \E \bigl[ \log^+\! |\tilde{X}_V| \bigr] + \constH,
\end{align}%
where \eqref{eq:tmp920} is from the fact that $W_i$ only depends on $W^{i-1}$ through the input
$\tilde{\rvVec{X}}$ and the phase noises $\{\Theta_{l,1},\ldots,\Theta_{l,\nt}\}_{l<i}$;  
where the last inequality is from a modified version of \eqref{eq:tmp544} by introducing
$\{\Theta_{l,1},\ldots,\Theta_{l,\nt}\}_{l<i}$ in the condition. 

\subsubsection{Proof of the lower bound \protect\eqref{eq:LB-B} for model~B1}
For model~B1, we write
\begin{align}
  h(\rvVec{W} \cond \tilde{\rvVec{X}}) &= 
  h(W_1 \cond \tilde{\rvVec{X}}) + \sum_{i=2}^{\nr} h(W_i
  \cond \tilde{\rvVec{X}}, W^{i-1}),
  \label{eq:tmp892}
\end{align}%
where, according to \eqref{eq:tmp544}, the first term
is lower-bounded by 
\begin{align}
  h(W_1 \cond \tilde{\rvVec{X}}) \ge \E\bigl[  \log^+\! |\tilde{X}_U| \bigr] +
  \E \bigl[ \log^+\! |\tilde{X}_V| \bigr] + \constH. \label{eq:tmp6789}
\end{align}%
In the following, we derive a lower bound on the
second term. Let $B_i \defeq \sum_{k=1}^{\nt} g_{ik} \tilde{X}_k
e^{j\Theta_{\text{T},k}}$ where $g_{ik}$ is the channel coefficient without phase noise from
the canonical form $U$ defined in \eqref{eq:canonical}. Then
\begin{align}
  \MoveEqLeft[1]{h(W_i \cond 
  \tilde{\rvVec{X}}, W^{i-1})} \nonumber \\
  &= h(e^{j\Theta_{\text{R},i}} B_i + Z_i \cond
  \tilde{\rvVec{X}}, W^{i-1}) \\
  &\ge h(e^{j\Theta_{\text{R},i}} B_i + Z_i \cond \tilde{\rvVec{X}}, W^{i-1}, B_i,
  \rvVec{\Theta}_{\text{T}}, \Theta_{\text{R}}^{i-1}) \\
  &= h(e^{j\Theta_{\text{R},i}} B_i + Z_i \cond B_i, \rvVec{\Theta}_{\text{T}}, \Theta_{\text{R}}^{i-1}) \\
  &\ge \E \bigl[ \log^+\! |B_i| \bigr] + \const \label{eq:tmp674}\\ 
  &= \E_{\tilde{\rvVec{X}}} \left[ \E\bigl[  \log^+\! |B_i| \cond \tilde{\rvVec{X}} \bigr] \right] +
  \const, \label{eq:tmp6774}
\end{align}
where the first inequality is from conditioning
reduces entropy; \eqref{eq:tmp674} is from Lemma~\ref{lemma:ejt}. The conditional expectation
can be lower-bounded as follows
\begin{align}
  \MoveEqLeft[0]{\E\bigl[ \log |B_i| \cond \tilde{\rvVec{X}} \bigr]}
  \nonumber \\
  &= \frac{1}{2} \E \left[ \log \left(\left|
  \sum_{k=1}^{\nt} |g_{ik} \tilde{X}_k|
  e^{j(\Theta_{\text{T},k}+\tilde{\Phi}_{ik})} \right|^2\right) \right] \\ 
  &\ge  \frac{1}{2}\inf_{\pmb{x}\in \mathbb{R}^{\nt}:\, \|\pmb{x}\|=1}  \E
  \left[  \log \left(\left|
  \sum_{k=1}^{\nt} |g_{ik}|  
  e^{j(\Theta_{\text{T},k}+\tilde{\Phi}_{ik})} x_k \right|^2\right) \right]
  \nonumber \\ 
  &\qquad + \log \|\tilde{\rvVec{X}}\|  \\
  &\ge  \frac{1}{2}\inf_{\pmb{x}\in \mathbb{R}^{\nt}:\, \|\pmb{x}\|=1}  \E
  \left[ \log \left(\left| \sum_{k=1}^{\nt} |g_{ik}|  
  \cos(\Theta_{\text{T},k}+\tilde{\Phi}_{ik}) x_k \right|^2\right) \right]
  \nonumber  \\ 
  &\qquad + \log \|\tilde{\rvVec{X}}\|  \\
  &\ge  \log \|\tilde{\rvVec{X}}\| + \constH \label{eq:tmp6899}  \\
  &\ge  \log |\tilde{X}_U| + \constH, \label{eq:tmp6799} 
\end{align}%
where $\tilde{\Phi}_{ik} \defeq \angle_{g_{ik}\tilde{X}_k}$; \eqref{eq:tmp6899} is
obtained by applying Lemma~\ref{lemma:Elognorm} with $\rvVec{V} \defeq \Bigl[|g_{ik}|  
\cos(\Theta_{\text{T},k}+\tilde{\Phi}_{ik})\Bigr]_k$ with $\E\bigl[ \|\rvVec{V}\|^2 \bigr] \le \sum_k |g_{ik}|^2 < \infty$ and 
  \begin{align}
    h(\rvVec{V}) &\ge h(\rvVec{V} \cond \tilde{\rvVec{\Phi}}) \\
    &= h(\cos(\rvVec{\Theta}_{\text{T}}+\tilde{\rvVec{\Phi}}) \cond \tilde{\rvVec{\Phi}}) + \sum_k \log|g_{ik}| \\
    &> -\infty,
  \end{align}%
  where the equality is the application of the change of variables from Lemma~\ref{lemma:h-prod}; the last inequality is from Lemma~\ref{lemma:Elogsin}.  
  From \eqref{eq:tmp6774} and \eqref{eq:tmp6799}, we get
  \begin{align}
    h(W_i \cond \tilde{\rvVec{X}}, W^{i-1})
    &= \E_{\tilde{\rvVec{X}}} \left[ \E\bigl[ \log^+\! |B_i| \cond \tilde{\rvVec{X}} \bigr] \right] + \const \\ 
    &\ge \E_{\tilde{\rvVec{X}}}  \left[ \left( \E\bigl[ \log |B_i| \cond \tilde{\rvVec{X}} \bigr]
    \right)^+ \right] + \const \\
    &\ge \E \left[ \bigl( \log |\tilde{X}_U|  + \constH \bigr)^+ \right] + \const \\
    &\ge \E \bigl[ \log^+\! |\tilde{X}_U| \bigr] - |\constH|  + \const, \label{eq:tmp6999}
  \end{align}%
  where the last inequality is from the application of \eqref{eq:logAX} in Lemma~\ref{lemma:log+} with $p=2^{\constH}$. 
Plugging \eqref{eq:tmp6789} and  
\eqref{eq:tmp6999} into \eqref{eq:tmp892}, the lower
bound \eqref{eq:LB-B} is obtained.  

\subsection{Proof of Lemma~\ref{lemma:logq}}
\label{app:logq}

From Definition~\ref{def:mv-Gamma}, by imposing
$1>\alpha_i>0$, $i=1,\ldots,\nr$, and $\mu= \min\{P^{-1},1\}$, we
have
\begin{align}
  - \log q(\rvVec{W}) &= -\log
  \frac{g_{\pmb{\alpha}}}{\nr!} - \sum_{i=1}^{\nr} \alpha_i \log \mu +
  \mu |\hat{W}_{\nr}|^2 \nonumber \\
  &\qquad + \sum_{i=2}^{\nr} (1-\alpha_i) \log\bigl( |\hat{W}_i|^2 - |\hat{W}_{i-1}|^2 \bigr)\nonumber \\
  &\qquad + (1-\alpha_1) \log|\hat{W}_1|^2.\label{eq:log-q} 
\end{align}%

We bound each term as follows.
\begin{itemize}
  \item The squared magnitude of each output 
\begin{align}
  |{W}_i|^2 
  &\le 2|\rvVec{G}_i^\T  \tilde{\rvVec{X}}|^2 + 2|Z_i|^2 \\
  &\le 2 \|\rvVec{G}_i\|^2 \|\tilde{\rvVec{X}}\|^2  + 2 |Z_i|^2 \label{eq:CS}
  \\
  &\le 2 \Hmax \|\tilde{\rvVec{X}}\|^2  + 2 \|\rvVec{Z}\|^2,
 \end{align}
 where $\rvVec{G}_i^\T$ is the $i$\,th row of the
 canonical matrix $\rvMat{G}_{(U)}$ defined in
 \eqref{eq:canonical}; \eqref{eq:CS}
 is due to Cauchy-Schwarz; and $\Hmax$ is defined as
 \begin{align}
   \Hmax &\defeq \max_{u=1,\ldots,\nt} \| \rvMat{G}_{(u)} \|^2. 
 \end{align}%
 
\item The difference of the squared magnitudes
\begin{align}
  \MoveEqLeft[1]{\left| |{W}_i|^2 - |{W}_k|^2 \right|} \nonumber\\
  &\le  (| {W}_i| + |{W}_k |) \,     |
  e^{-j\Theta_{i,U}} {W}_i - e^{-j\Theta_{k,U}}
  {W}_k  | \nonumber\\
  &\le 2^{\frac{3}{2}} \sqrt{ \Hmax
  \|\tilde{\rvVec{X}}\|^2  + \|\rvVec{Z}\|^2 } \,  |
 e^{-j\Theta_{i,U}} {W}_i - e^{-j\Theta_{k,U}}
 {W}_k | \nonumber  
\end{align}
with
\begin{align}
  \MoveEqLeft[0]{ 
 | e^{-j\Theta_{i,U}} {W}_i - e^{-j\Theta_{k,U}}
 {W}_k |^2 } \nonumber\\
  &\le {\left| e^{-j\Theta_{i,U}} ( \rvVec{G}_i^\T
  \tilde{\rvVec{X}} + Z_i) -  e^{-j\Theta_{k,U}} (
  \rvVec{G}_k^\T \tilde{\rvVec{X}} + Z_k) \right|^2} \nonumber \\
  &\le {2 \left| e^{-j\Theta_{i,U}} \rvVec{G}_i^\T
  \tilde{\rvVec{X}} -
  e^{-j\Theta_{k,U}} \rvVec{G}_k^\T \tilde{\rvVec{X}} \right|^2 + 2|
  Z_i|^2 + 2| Z_k |^2}\nonumber\\
  &\le 2 \biggl(\sum_{l\ne U} |e^{-j\Theta_{i,U}}
  G_{il} - e^{-j\Theta_{k,U}} G_{kl}|^2 \biggr)
  \biggl(\sum_{l\ne U} |\tilde{X}_l|^2 \biggr) \nonumber
  \\ &\qquad + 2\|\rvVec{Z}\|^2 \\
  &\le 4 (\nt-1) \Hmax \tilde{X}_V^2 + 2\|\rvVec{Z}\|^2.
\end{align}%
\end{itemize}
Note that the above upper bounds does not depend on
$i$ and $k$. Then, with the above bounds, we take
expectation of the terms in \eqref{eq:log-q}, and
obtain   
\begin{align}
  { \E \bigl[  |\hat{W}_{\nr}|^2
  \bigr] } 
  &\le 2  \bigl(  \Hmax \E \bigl[ \|\tilde{\rvVec{X}}\|^2 \bigr] + \E
  \bigl[ \|\rvVec{Z}\|^2 \bigr] \bigr) \\
  &\le 2 \Hmax a^2 P + \const, \label{eq:tmp821}\\
  { \E \bigl[ \log |\hat{W}_1|^2
  \bigr] } 
  &\le \E \bigl[ \log ( \Hmax \|\tilde{\rvVec{X}}\|^2 + \E \bigl[
  \|\rvVec{Z}\|^2 \bigr] ) \bigr] + 1\\
  &\le 2\, \E \bigl[ \log^+\! |\tilde{X}_U| \bigr] + \constH, \label{eq:tmp822} 
\end{align}
where the last inequality is from Lemma~\ref{lemma:log+}. Similarly, basic calculations lead to  
\begin{align}
  \MoveEqLeft[1.5]{\E \left[ \log\left| |{\hat{W}}_i|^2
  - |{\hat{W}}_{i-1}|^2 \right|\right] } \nonumber \\
  &\le \frac{1}{2} \E\left[ \log\left( 
  \Hmax \|\tilde{\rvVec{X}}\|^2   + \E \bigl[\|\rvVec{Z}\|^2 \bigr]
 \right) \right] \nonumber \\
 &\quad + \frac{1}{2} \E \left[ \log \left( 4 (\nt-1)
 \Hmax |\tilde{X}_V|^2 + 2 \E \bigl[ \|\rvVec{Z}\|^2
 \bigr] \right) \right]  + \const
\\
  &\le \log^+\! |\tilde{X}_U| + \log^+\! |\tilde{X}_V| + \constH.
   \label{eq:tmp823}
\end{align}%
Taking expectation over $\rvVec{X}$ in \eqref{eq:log-q}, and plugging
\eqref{eq:tmp821}, \eqref{eq:tmp822}, and
\eqref{eq:tmp823} into it, we readily obtain
\eqref{eq:Elogq}.

\subsection{Proof of Lemma~\ref{lemma:essential}}
\label{app:essential}

To prove Lemma~\ref{lemma:essential}, we deal with the cases $\nr=2\nt-1$ and $\nr\ne2\nt-1$ separately.  
Let us define $\hat{\rvVec{Y}}$ and $\tilde{\rvVec{Y}}$ such that 
\begin{align}
  \rvVec{Y}  
  &= \sqrt{\frac{P}{\nt}} e^{j\tilde{\rvVec{\Theta}}_{\text{R}}} \circ \hat{\rvVec{Y}} + \rvVec{Z} =
  \sqrt{\frac{P}{\nt}} \tilde{\rvVec{Y}} + \rvVec{Z}. 
\end{align}%

For notational convenience, we define $n\defeq \nr$ and $m\defeq \nt$ in the following
proof. 

\subsubsection{Case $n=2m-1$}

First  we show that \eqref{eq:hY} holds for $n=2m-1$.  We write $h(\rvVec{Y}) \ge h(\rvVec{Y} \cond \rvVec{Z}) = h(\rvVec{Y} - \rvVec{Z} \cond \rvVec{Z}) = h\Bigl(\sqrt{\frac{P}{m}}
\tilde{\rvVec{Y}}\Bigr) = n \log P + h(\tilde{\rvVec{Y}}) +  \const$. Now, it is enough to show
that $h(\tilde{\rvVec{Y}})>-\infty$. 
From Lemma~\ref{lemma:h-squared},
\begin{align}
  h(\tilde{\rvVec{Y}}) &\ge h(|\tilde{\rvVec{Y}}|^2) + \const \\
  &= h(|\hat{\rvVec{Y}}|^2) + \const \\
  &\ge h( \rvVec{S} \cond \hat{Y}_n) + h(|\hat{Y}_n|^2) + \const \\
  &= h( \rvVec{S} \cond \hat{Y}_n) + \constH, 
\end{align}%
where $\rvVec{S} \in \mathbb{R}^{n-1}$ with $S_i\defeq |\hat{Y}_i|^2$ for $i=1,\ldots,n-1$; 
the second inequality is from the
chain rule and that adding the condition on the phase of
$\hat{Y}_n$ reduces entropy; the last equality is due to $\hat{Y}_n \sim
\mathcal{CN}(0,m^{-1}\|\pmb{h}_n\|^2)$. Next we need to show that $h(\rvVec{S} \cond \hat{Y}_n)>-\infty$. 
Intuitively, given $\hat{Y}_n$,
$\rvVec{S}$ can be expressed as $n-1 = 2(m-1)$ real
functions of the $2(m-1)$ real random variables $\bigl(\Re\{\hat{Y}^{m-1}\}$,
$\Im\{\hat{Y}^{m-1}\}\bigr)$. Since
$h\bigl(\Re\{\hat{Y}^{m-1}\}, \Im\{\hat{Y}^{m-1}\}\bigr) = h(\hat{Y}^{m-1})$ is finite for almost
every $\pmb{H}$, as long as the mapping is not \emph{degenerated}, $h(\rvVec{S} \cond \hat{Y}_n)$
should be finite too. This argument is proved formally in the following. 

Since for any generic $\pmb{H}\in\mathbb{C}^{n\times m}$, any $m$ rows of the matrix
are linear independent, the remaining $n-m$ rows can be written as linear combinations of these
rows. Let us take the rows $\{1,2,\ldots,m-1,n\}$ of $\pmb{H}$. It readily follows that 
one can write 
\begin{align}
\hat{Y}_{m}^{n-1} = \pmb{B} \hat{Y}^{m-1} + \pmb{b} \hat{Y}_n
\end{align}%
with $\pmb{B}\in\mathbb{C}^{(m-1)\times(m-1)}$ and $\pmb{b}\in\mathbb{C}^{(m-1)\times1}$
depending only on $\pmb{H}$. Next let us partition the space $\mathbb{R}^{n-1}$
into a finite number of sets
in each one of which the mapping $(\Re\{\hat{{y}}^{m-1}\}, \Im\{{\hat{y}}^{m-1}\}) \mapsto
\pmb{s}$ is bijective. Note that this is possible since the mapping is quadratic in a
finite-dimensional space. Let $\Omega$ be the index of the partitions which only depends on
$\hat{Y}^{m-1}$. Then  
\begin{align}
  h(\rvVec{S} \cond \hat{Y}_n) &\ge h(\rvVec{S} \cond
  \hat{Y}_n, \Omega) \\
  &= \E\left[\log |\det(\rvMat{J})| \right] + h(\hat{Y}^{m-1} \cond \hat{Y}_n, \Omega) \\
  &= \E\left[\log |\det(\rvMat{J})| \right] + \constH,
\end{align}%
where the first equality is from Lemma~\ref{lemma:h-prod}; the second equality is due to 
$ h(\hat{Y}^{m-1} \cond \Omega, \hat{Y}_n) = h(\hat{Y}^{m-1} \cond \hat{Y}_n) - I(\Omega;
\hat{Y}^{m-1} \cond \hat{Y}_n)$ with $h(\hat{Y}^{m-1} \cond \hat{Y}_n) > -\infty$ for any generic
$\pmb{H}$ and  $I(\Omega; \hat{Y}^{m-1} \cond \hat{Y}_n) \le H(\Omega) < \infty$;
$\rvMat{J}$ is the Jacobian matrix with 
\begin{align}
  \det(\rvMat{J}) &= \det \begin{bmatrix}
    \frac{\partial\,{\rvVec{S}}}{\partial{\Re\{\hat{Y}^{m-1}\}}} & \frac{\partial\,{\rvVec{S}}}{\partial{\Im\{\hat{Y}^{m-1}\}}}\end{bmatrix}  \\
    &= 2^{m-1} \det \begin{bmatrix} \frac{\partial\,{\rvVec{S}}}{\partial{\hat{Y}^{m-1}}} &
      \frac{\partial\,{\rvVec{S}}}{\partial{(\hat{Y}^{m-1})^*}} \end{bmatrix} \label{eq:tmp4300}\\
    &= 2^{m-1} \det \begin{bmatrix} \diag\{(\hat{Y}^{m-1})^*\} & \diag\{\hat{Y}^{m-1}\} \\
      \diag\{(\hat{Y}_{m}^{n-1})^*\} \pmb{B}  &
      \diag\{\hat{Y}_{m}^{n-1}\} \pmb{B}^*  \end{bmatrix}  \label{eq:tmp4301}\\
      &= 4^{m-1} j^{m-1} \Im\left\{
    \diag\{\hat{Y}_{m}^{n-1}\}
    \pmb{B}^*\diag\{(\hat{Y}^{m-1})^*\} \right\},
    \label{eq:tmp211}
\end{align}%
where \eqref{eq:tmp4300} is due to the fact that the complex gradient of a real-valued function is
a unitary transformation of the real gradient~(see, e.g., \cite[App.A6]{Kailath}); to obtain the
last equality, we apply the identity $\det\left[\begin{smallmatrix} \pmb{C} & \pmb{D} \\ \pmb{E} & \pmb{F} \end{smallmatrix} \right] = \det(\pmb{C}) \det(\pmb{F} -
  \pmb{E}\pmb{C}^{-1}\pmb{D})$. 
  Since $\hat{Y}_1,\ldots,\hat{Y}_{m-1},\hat{Y}_n$ are jointly circularly symmetric Gaussian with
  finite and non-degenerate covariance for any generic $\pmb{H}$, there exists a $\hat{Y}'_n$
  circularly symmetric with non-zero bounded variance and independent of $\hat{Y}^{m-1}$, such that
  \begin{align}
    \hat{Y}_n &= \hat{Y}_n' + \pmb{f}^\T \hat{Y}^{m-1}, \quad\text{and thus}\\
    \hat{Y}_{m}^{n-1} &= (\pmb{B} + \pmb{b} \pmb{f}^\T) \hat{Y}^{m-1} + \pmb{b} \hat{Y}'_n,
  \end{align}%
  where $\pmb{f}\in\mathbb{C}^{(m-1)\times1}$ depends only on $\pmb{H}$. Then we can continue
  from \eqref{eq:tmp211} and write $|\det(\rvMat{J})|$ as
\begin{align}
  |\det(\rvMat{J})| &= 4^{m-1} \left|\det\Bigl(\Im\bigl\{ \hat{Y}_n' \diag\{\pmb{b}\} \rvMat{A}
  \bigr\} + \rvMat{M}  \Bigr)\right| \\
  &= 4^{m-1} \left| \det\bigl( \hat{Y}'_{n,\text{R}} \rvMat{N}_{\text{I}} +
  \hat{Y}'_{n,\text{I}} \rvMat{N}_{\text{R}} + \rvMat{M}  \bigr) \right|,
\end{align}%
where $\rvMat{A} \defeq \pmb{B}^*\diag\{(\hat{Y}^{m-1})^*\}$ and $\rvMat{M} \defeq
\Im \Bigl\{ \diag\{(\pmb{B}+ \pmb{b} \pmb{f}^\T)\hat{Y}^{m-1}\}
\pmb{B}^*\diag\{(\hat{Y}^{m-1})^*\} \Bigr\}$;
$\hat{Y}'_{n,\text{R}}$ and $\hat{Y}'_{n,\text{I}}$ are the real and imaginary parts of
$\hat{Y}_n'$, respectively; $\rvMat{N}_{\text{R}}$ and
$\rvMat{N}_{\text{I}}$ are the real and imaginary parts of
$\diag\{\pmb{b}\} \rvMat{A}$, respectively. Conditioned on
$\hat{Y}^{m-1}$ and $\hat{Y}'_{n,\text{I}}$, the determinant
$|\det(\rvMat{J})|$ can be written as the absolute value of
the characteristic polynomial of 
$(\hat{Y}'_{m,\text{I}} \rvMat{N}_{\text{R}} + \rvMat{M})
\rvMat{N}_{\text{I}}^{-1}$, namely,
\begin{align}
  |\det(\rvMat{J})| &= 4^{m-1}  |\det(\rvMat{N}_{\text{I}})| \prod_{t=1}^{m-1}
  |\hat{Y}'_{n,\text{R}} - \Lambda_t|,
\end{align}%
where $\Lambda_1,\ldots,\Lambda_{m-1}$ are the eigenvalues of $(\hat{Y}'_{n,\text{I}}
\rvMat{N}_{\text{R}} + \rvMat{M}) \rvMat{N}_{\text{I}}^{-1}$ in $\mathbb{C}$ and are functions
of $\pmb{H}$, $\hat{Y}^{m-1}$, and $\hat{Y}'_{n,\text{I}}$. Then
\begin{align}
  \MoveEqLeft[0]{\mathbb{E}\left[ \log |\det(\rvMat{J})| \cond
  \hat{Y}^{m-1}, \hat{Y}'_{n,\text{I}}\right]} \nonumber \\
  &= \mathbb{E} \bigl[ \log |\det(\rvMat{N}_{\text{I}})| \cond
  \hat{Y}^{m-1}, \hat{Y}'_{n,\text{I}}\bigr] \nonumber \\ &\quad +
  \sum_{t=1}^{m-1}  \mathbb{E} \bigl[ \log |\hat{Y}'_{n,\text{R}} -
  \Lambda_t| \cond \hat{Y}^{m-1}, \hat{Y}'_{n,\text{I}}\bigr] + \const \\
  &\ge \mathbb{E} \bigl[ \log |\det(\rvMat{N}_{\text{I}})| \cond
  \hat{Y}^{m-1}, \hat{Y}'_{n,\text{I}}\bigr] +
  \sum_{t=1}^{m-1}  \mathbb{E} \bigl[
  \log |\hat{Y}'_{n,\text{R}}| \bigr] + \const \\
  &= \mathbb{E} \bigl[ \log |\det(\rvMat{N}_{\text{I}})| \cond
  \hat{Y}^{m-1}, \hat{Y}'_{n,\text{I}}\bigr] + \constH,
\end{align}%
where the inequality is from $\log
|\hat{Y}'_{n,\text{R}} - \Lambda_t| \ge  \log
|\hat{Y}'_{m,\text{R}} - \Re\{ \Lambda_t \}|$,  the
application of Lemma~\ref{lemma:chi2} with $k=1$, and the independence between
$\hat{Y}'_{n,\text{R}}$ and $(\hat{Y}^{m-1}, \hat{Y}'_{n,\text{I}})$. 
Thus, taking expectation over $(\hat{Y}^{m-1}, \hat{Y}'_{n,\text{I}})$, we have
\begin{align}
  {\mathbb{E}\left[ \log |\det(\rvMat{J})| \right]} 
  &\ge \mathbb{E} \bigl[ \log |\det(\rvMat{N}_{\text{I}})| \bigr] + \constH.
\end{align}
It remains to show that $\mathbb{E} \bigl[ \log
|\det(\rvMat{N}_{\text{I}})| \bigr] > -\infty$. Let us recall that
$\rvMat{N}_{\text{I}} \defeq \Im\{ \diag\{\pmb{b}\}
\pmb{B}^*  \diag\{(\hat{Y}^{m-1})^*\} \} =
\pmb{T}_{\text{I}} \diag\{ \hat{Y}_{\text{R}}^{m-1}\} -  \pmb{T}_{\text{R}}  \diag\{ 
\hat{Y}_{\text{I}}^{m-1} \}$ where $\pmb{T}_{\text{R}}$ and $ \pmb{T}_{\text{I}}$ are the real and imaginary
parts of $\diag\{\pmb{b}\} \pmb{B}^*$, respectively, and thus depends only on $\pmb{H}$. Then 
\begin{align}
  \MoveEqLeft[0]{ \mathbb{E} \bigl[ \log
  |\det(\rvMat{N}_{\text{I}})| \bigr] } - \log |\det(\pmb{T}_{\text{I}})| \nonumber \\
  &= \mathbb{E} \left[ \log \Bigl|\det\bigl( 
  \diag\{  \hat{Y}_{\text{R}}^{m-1}\} -  \pmb{T}_{\text{I}}^{-1} \pmb{T}_{\text{R}} \diag\{
  \hat{Y}_{\text{I}}^{m-1} \} \bigr) \Bigr| \right]. \label{eq:tmp9000}
\end{align}%
We shall show that \eqref{eq:tmp9000} is lower-bounded as follows.
Note that $\hat{Y}_{m-1, \text{R}}$
can be written as $\hat{Y}_{m-1, \text{R}} = \hat{Y}'_{m-1, \text{R}} +
L_{m-1}(\hat{Y}_{\text{R}}^{m-2},\hat{Y}_\text{I}^{m-1})$ with $\hat{Y}'_{m-1, \text{R}}$ centered normal with non-zero bounded variance
and being independent of
$(\hat{Y}_{\text{R}}^{m-2},\hat{Y}_\text{I}^{m-1})$, and $L_{m-1}$ some linear operator that depends
only on $\pmb{H}$. Let
$\Pm_{m-1} \defeq \pmb{T}_{\text{I}}^{-1}
\pmb{T}_{\text{R}}$. It is easy to verify
that $\det \bigl( \diag\{ \hat{Y}_{\text{R}}^{m-1}\} - \pmb{P}_{m-1} \diag\{ \hat{Y}_{\text{I}}^{m-1} \}
\bigr)$ can be written as $ \det\bigl(\diag\{ \hat{Y}_{\text{R}}^{m-2}\} - \pmb{P}_{m-2}
\diag\{ \hat{Y}_{\text{I}}^{m-2} \}\bigr)
\hat{Y}'_{m-1, \text{R}} +
L_{m-1}'(\hat{Y}_{\text{R}}^{m-2},\hat{Y}_\text{I}^{m-1})$
where $\Pm_{m-2}$ is the upper-left $(m-2)\times
(m-2)$ sub-matrix of $\Pm_{m-1}$ and
$L_{m-1}'(\hat{Y}_{\text{R}}^{m-2},\hat{Y}_\text{I}^{m-1})$
is some value that is independent of $\hat{Y}'_{m-1,
\text{R}}$. Thus, we can again apply
Lemma~\ref{lemma:chi2} and obtain recursively
\begin{align}
  \MoveEqLeft[0]{\mathbb{E} \left[ \log \Bigl| \det\bigl( \diag\{ \hat{Y}_{\text{R}}^{m-1}\} - \pmb{P}_{m-1} \diag\{
  \hat{Y}_{\text{I}}^{m-1} \} \bigr) \Bigr| \right] } \nonumber \\
   &\ge \mathbb{E} \left[ \log \left| \det\Bigl(\diag\{ \hat{Y}_{\text{R}}^{m-2}\} - \pmb{P}_{m-2}
   \diag\{ \hat{Y}_{\text{I}}^{m-2} \}\Bigr) \hat{Y}'_{m-1, \text{R}}\right| \right] \\
   &\ \, \vdots \nonumber \\
   &\ge \mathbb{E} \left[ \log \left|  \hat{Y}_{1,\text{R}} - {P}_{1,1} 
   \hat{Y}_{1,\text{I}}  \right| \right] + \sum_{k=1}^{m-1} \E\bigl[ \log | \hat{Y}'_{k, \text{R}} | \bigr]\\
   &\ge \constH, 
\end{align}%
where the last inequality is from $\mathbb{E} \bigl[ \log |  \hat{Y}_{1,\text{R}} - {P}_{1,1}
\hat{Y}_{1,\text{I}}  | \bigr]  \ge \mathbb{E} \bigl[ \log |  \hat{Y}_{1,\text{R}} | \bigr] \ge
\constH$ due to the independence between $\hat{Y}_{1,\text{R}}$ and $\hat{Y}_{1,\text{I}}$ and the
application of Lemma~\ref{lemma:chi2}. 

Finally, recalling that $\pmb{T}_{\text{I}} \defeq \Im\{\diag\{\pmb{b}\} \pmb{B}^*\}$, we have $\log |\det(\pmb{T}_{\text{I}})|>-\infty$ for
any generic $\Hm$, it follows from \eqref{eq:tmp9000}
that $\E \bigl[ \log |\det(\rvMat{N}_{\text{I}})| \bigr]$ is lower-bounded. By now, we have shown
that $h(\rvVec{Y}) \ge n \log P + \constH$. Since $h(\rvVec{Y}) \ge h(\rvVec{Y}\cond
\tilde{\rvVec{Y}}) = h(\rvVec{Z}) \ge n\log(\pi e) \ge  0$, we have $h(\rvVec{Y}) \ge (n \log P +
\constH)^+ \ge n \log^+\! P - |\constH|$ from \eqref{eq:logAX} in Lemma~\ref{lemma:log+} with
$p=\frac{\constH}{n}$. This completes the proof for the case $n=2m-1$.

\subsubsection{Case $n\ne2m-1$}

Note that if \eqref{eq:hY} holds for $n=2m-1$, then it also holds for $n<2m-1$ and $n>2m-1$. To see
this, in the case with $n<2m-1$, we can add $2m-1-n$ receive antennas to have $(\rvVec{Y},
\rvVec{Y}')$ with $\rvVec{Y}'$ being the extra outputs. Since \eqref{eq:hY} holds for
$h(\rvVec{Y}, \rvVec{Y}')$ by assumption, then we have
\begin{align}
  h(\rvVec{Y}) &\ge  h(\rvVec{Y},\rvVec{Y}') - h(\rvVec{Y}') \\
  &\ge (2m-1) \log^+\! P - (2m-1-n) \log^+\! P + \constH \\
  &= n \log^+\! P + \constH,
\end{align}%
where the second inequality is from \eqref{eq:hY} and the fact that $h(\rvVec{Y}') \le (2m-1-n)
\log^+\!
P + \constH$. When $n>2m-1$, we partition $\rvVec{Y} = (\rvVec{Y}', \rvVec{Y}'')$ with
$\rvVec{Y}'\in\mathbb{C}^{(2m-1) \times 1}$ and obtain
\begin{align}
  h(\rvVec{Y}) &= h(\rvVec{Y}') + \sum_{k=2m}^n h(Y_k \cond Y^{k-1}) \\
  &\ge (2m-1) \log^+\! P \nonumber \\
  &\quad + \sum_{k=2m}^n h(Y_k \cond Y^{k-1}, \tilde{\Theta}_{\text{R}}^{k-1},
  \tilde{\rvVec{\Theta}}_{\text{T}}, \rvVec{X}) + \constH\\ 
   &\ge (2m-1) \log^+\!  P \nonumber \\
   &\quad + \sum_{k=2m}^n \E \left[ \log^+\! \biggl|
   \sqrt{\frac{P}{m}}\pmb{h}_{k}^\T
   (e^{j\tilde{\rvVec{\Theta}}_{\text{T}}} \circ \rvVec{X}_0 ) \biggr| \right] + \constH' \label{eq:tmp9011}\\ 
  &= \Bigl( 2m-1 + \frac{n-2m+1}{2} \Bigr) \log^+\! P + \constH'' \\
  &= \Bigl( m + \frac{n-1}{2} \Bigr) \log^+\! P + \constH'', 
\end{align}%
where the second inequality is from Lemma~\ref{lemma:ejt}; the equality \eqref{eq:tmp9011} is from
the fact that $\pmb{h}_{k}^\T (e^{j\tilde{\rvVec{\Theta}}_{\text{T}}} \circ \rvVec{X}_0)\sim
\mathcal{CN}(0,\|\pmb{h}_{k}\|^2)$.

\section*{Acknowledgement}
S. Yang would like to thank G. Durisi for helpful discussions and comments during the early
stage of this work.


\begin{IEEEbiographynophoto}{Sheng Yang} (M'07) 
 received the B.E.~degree in electrical engineering from Jiaotong University, Shanghai, China, in
 2001, and both the engineer degree and the M.Sc.~degree in electrical engineering from Telecom
 ParisTech, Paris, France, in 2004, respectively. In 2007, he obtained his Ph.D.~from Universit\'e de
 Pierre et Marie Curie (Paris VI). From October 2007 to November 2008, he was with Motorola Research
 Center in Gif-sur-Yvette, France, as a senior staff research engineer. Since December 2008, he has
 joined CentraleSup\'elec where he is currently an associate professor. From April 2015, he also holds an honorary faculty position in the department of electrical and electronic engineering of the University of Hong Kong~(HKU). He received the 2015 IEEE ComSoc Young Researcher Award for the Europe, Middle East, and Africa Region~(EMEA). He is an editor of the IEEE TRANSACTIONS ON WIRELESS COMMUNICATIONS. 
 \end{IEEEbiographynophoto}

\begin{IEEEbiographynophoto}{Shlomo Shamai~(Shitz)}~(F'94) received the B.Sc., M.Sc., and Ph.D.~degrees in electrical engineering from the Technion-Israel Institute of Technology, in 1975, 1981 and 1986 respectively.

During 1975-1985 he was with the Communications Research Labs, in the capacity of a Senior Research
Engineer. Since 1986 he is with the Department of Electrical Engineering, Technion-Israel Institute of Technology, where he is now a Technion Distinguished Professor, and holds the William Fondiller Chair of Telecommunications. His research interests encompasses a wide spectrum of topics in information theory and statistical communications.

Dr. Shamai (Shitz) is an IEEE Fellow, a member of the Israeli Academy of Sciences and Humanities and a foreign member of the US National Academy of Engineering. He is the recipient of the 2011 Claude E.~Shannon Award and the 2014 Rothschild Prize in Mathematics/Computer Sciences and Engineering.

He has been awarded the 1999 van der Pol Gold Medal of the Union Radio Scientifique Internationale
(URSI), and is a co-recipient of the 2000 IEEE Donald G.~Fink Prize Paper Award, the 2003, and the
2004 joint IT/COM societies paper award, the 2007 IEEE Information Theory Society Paper Award, the
2009 and 2015 European Commission FP7, Network of Excellence in Wireless COMmunications~(NEWCOM++,
NEWCOM\#) Best Paper Awards, the 2010 Thomson Reuters Award for International Excellence in Scientific Research, the 2014 EURASIP Best Paper Award~(for the EURASIP Journal on Wireless Communications and Networking), and the 2015 IEEE Communications Society Best Tutorial Paper Award. He is also the recipient of 1985 Alon Grant for distinguished young scientists and the 2000 Technion Henry Taub Prize for Excellence in Research. He has served as Associate Editor for the Shannon Theory of the IEEE TRANSACTIONS ON INFORMATION THEORY, and has also served twice on the Board of Governors of the Information Theory Society. He has served on the Executive Editorial Board of the IEEE TRANSACTIONS ON INFORMATION THEORY.
\end{IEEEbiographynophoto}


\begin{thebibliography}{10}
\providecommand{\url}[1]{#1}
\csname url@samestyle\endcsname
\providecommand{\newblock}{\relax}
\providecommand{\bibinfo}[2]{#2}
\providecommand{\BIBentrySTDinterwordspacing}{\spaceskip=0pt\relax}
\providecommand{\BIBentryALTinterwordstretchfactor}{4}
\providecommand{\BIBentryALTinterwordspacing}{\spaceskip=\fontdimen2\font plus
\BIBentryALTinterwordstretchfactor\fontdimen3\font minus
  \fontdimen4\font\relax}
\providecommand{\BIBforeignlanguage}[2]{{%
\expandafter\ifx\csname l@#1\endcsname\relax
\typeout{** WARNING: IEEEtran.bst: No hyphenation pattern has been}%
\typeout{** loaded for the language `#1'. Using the pattern for}%
\typeout{** the default language instead.}%
\else
\language=\csname l@#1\endcsname
\fi
#2}}
\providecommand{\BIBdecl}{\relax}
\BIBdecl

\bibitem{Foschini}
G.~J. Foschini and M.~J. Gans, ``On limits of wireless communications in a
  fading environment when using multiple antennas,'' \emph{Wireless personal
  communications}, vol.~6, no.~3, pp. 311--335, 1998.

\bibitem{Telatar}
E.~Telatar, ``Capacity of multi-antenna {Gaussian} channels,'' \emph{European
  transactions on telecommunications}, vol.~10, no.~6, pp. 585--595, 1999.

\bibitem{Moser}
A.~Lapidoth and S.~Moser, ``Capacity bounds via duality with applications to
  multiple-antenna systems on flat-fading channels,'' \emph{{IEEE} Trans. Inf.
  Theory}, vol.~49, no.~10, pp. 2426--2467, Oct 2003.

\bibitem{Zheng_Tse}
L.~Zheng and D.~Tse, ``Communication on the {Grassmann} manifold: {A} geometric
  approach to the noncoherent multiple-antenna channel,'' \emph{{IEEE} Trans.
  Inf. Theory}, vol.~48, no.~2, pp. 359--383, Feb 2002.

\bibitem{Lapidoth-ITW02}
A.~Lapidoth, ``On phase noise channels at high {SNR},'' in \emph{Proc. IEEE
  Information theory workshop}, Oct 2002, pp. 1--4.

\bibitem{Katz}
M.~Katz and S.~Shamai, ``On the capacity-achieving distribution of the
  discrete-time noncoherent and partially coherent {AWGN} channels,''
  \emph{{IEEE} Trans. Inf. Theory}, vol.~50, no.~10, pp. 2257--2270, Oct 2004.

\bibitem{Barletta2012}
L.~Barletta, M.~Magarini, and A.~Spalvieri, ``The information rate transferred
  through the discrete-time {Wiener's} phase noise channel,'' \emph{Journal of
  Lightwave Technology}, vol.~30, no.~10, pp. 1480--1486, 2012.

\bibitem{Ghozlan2015}
H.~Ghozlan and G.~Kramer, ``Models and information rates for {Wiener} phase
  noise channels,'' \emph{arXiv preprint arXiv:1503.03130}, 2015.

\bibitem{Durisi-capa}
G.~Durisi, A.~Tarable, C.~Camarda, R.~Devassy, and G.~Montorsi, ``Capacity
  bounds for {MIMO} microwave backhaul links affected by phase noise,''
  \emph{{IEEE} Trans. Commun.}, vol.~62, no.~3, pp. 920--929, March 2014.

\bibitem{Arnold2006}
D.~M. Arnold, H.-A. Loeliger, P.~O. Vontobel, A.~Kav{\v{c}}i{\'c}, and W.~Zeng,
  ``Simulation-based computation of information rates for channels with
  memory,'' \emph{{IEEE} Trans. Inf. Theory}, vol.~52, no.~8, pp. 3498--3508,
  2006.

\bibitem{Khanzadi2015}
M.~R. Khanzadi, G.~Durisi, and T.~Eriksson, ``Capacity of {SIMO} and {MISO}
  phase-noise channels with common/separate oscillators,'' \emph{{IEEE} Trans.
  Commun.}, vol.~63, no.~9, pp. 3218--3231, 2015.

\bibitem{Durisi2013}
G.~Durisi, A.~Tarable, and T.~Koch, ``On the multiplexing gain of {MIMO}
  microwave backhaul links affected by phase noise,'' in \emph{Proc. IEEE
  International Conference on Commun.~(ICC)}, 2013, pp. 3209--3214.

\bibitem{Nasir}
A.~A. Nasir, H.~Mehrpouyan, and R.~Schober, ``Phase noise in {MIMO} systems:
  Bayesian {Cram\'er-Rao} bounds and soft-input estimation,'' \emph{{IEEE}
  Trans. Signal Process.}, vol.~61, no.~10, pp. 2675--2692, 2013.

\bibitem{Tanumay}
T.~Datta and S.~Yang, ``Improving {MIMO} detection performance in presence of
  phase noise using norm difference criterion,'' in \emph{53rd Annual Allerton
  Conference on Communication, Control, and Computing (Allerton)}, Sept 2015,
  pp. 286--292.

\bibitem{Topsoe}
F.~Tops\oe, ``An information theoretical identity and a problem involving
  capacity,'' \emph{Studia Scientiarum Math. Hung.}, vol.~2, pp. 291--292,
  1967.

\bibitem{Kemperman}
J.~Kemperman, ``On the {Shannon} capacity of an arbitrary channel,''
  \emph{Indagationes Mathematicae}, vol.~77, no.~2, pp. 101--115, 1974.

\bibitem{Ghozlan2013b}
H.~Ghozlan and G.~Kramer, ``On {Wiener} phase noise channels at high
  signal-to-noise ratio,'' in \emph{Proc. IEEE Int. Symp. Inf. Theory},
  Istanbul, Turkey, 2013, pp. 2279--2283.

\bibitem{Cover}
T.~M. Cover and J.~A. Thomas, \emph{Elements of information theory}.\hskip 1em
  plus 0.5em minus 0.4em\relax John Wiley \& Sons, 2012.

\bibitem{Mathal}
A.~M. Mathal and P.~G. Moschopoulos, ``A form of multivariate {Gamma}
  distribution,'' \emph{Annals of the Institute of Statistical Mathematics},
  vol.~44, no.~1, pp. 97--106, March 1992.

\bibitem{Muirhead}
R.~J. Muirhead, \emph{Aspects of Multivariate Statistical Theory}.\hskip 1em
  plus 0.5em minus 0.4em\relax New York: Wiley, 1982.

\bibitem{Kailath}
T.~Kailath, A.~H. Sayed, and B.~Hassibi, \emph{Linear estimation}.\hskip 1em
  plus 0.5em minus 0.4em\relax Prentice Hall Upper Saddle River, NJ, 2000.

\end{thebibliography}
\end{document}